\documentclass[11pt,a4paper]{article}

\usepackage[english]{babel}
\usepackage[T1]{fontenc}
\usepackage{xspace}
\usepackage{mathrsfs}
\usepackage{amsfonts}
\usepackage{amssymb}
\usepackage{amsmath}
\usepackage{amsthm}
\usepackage{mathbbol}

\usepackage{enumitem}

\usepackage{mathabx}

\usepackage{framed}

\def\bN{\mathbb{N}}

\def\cF{\mathcal{F}}
\def\cG{\mathcal{G}}
\def\cH{\mathcal{H}}

\def\cP{\mathcal{P}}
\def\cS{\mathcal{S}}

\def\cSG{\mathcal{SG}\xspace}
\def\cGG{\mathcal{GG}\xspace}

\def\Red{{\bf Red}\xspace}
\def\Blue{{\bf Blue}\xspace}
\def\wrt{\emph{w.r.t.}\xspace}
\def\ie{\emph{i.e.}\xspace}

\def\comp#1{\widebar{{#1}}}

\def\wrt{\emph{w.r.t.}\xspace}
\def\ie{\emph{i.e.}\xspace}

\newcommand{\mim}{{\mathbf{mim}}}

\newcommand{\lmimw}{{\mathbf{lmimw}}}

\newcommand{\fr}{\mathbf{frac}}

\newcommand{\trace}{{\mathsf{trace}}}
\newcommand{\vertex}{{\mathsf{vert}}}

\newtheorem{thm}{Theorem}
\newtheorem{lem}[thm]{Lemma}

\newtheorem{cor}[thm]{Corollary}
\newtheorem{prop}[thm]{Proposition}

\newtheorem{fact}[thm]{Fact}

\pagestyle{plain}

\begin{document}

\title{Output-Polynomial Enumeration on Graphs of Bounded (Local) Linear 
MIM-Width\footnote{The research leading to these results has received funding from the European Research Council under the European Union's Seventh Framework Programme (FP/2007-2013) / ERC Grant Agreement n. 267959.}}

\author{
Petr A. Golovach\thanks{Department of Informatics, University of Bergen, Norway}
\addtocounter{footnote}{-1} 
\and
Pinar Heggernes\footnotemark
\and
Mamadou Moustapha  Kant\'e\thanks{Clermont-Universit{\' e}, Universit{\'e} Blaise Pascal, LIMOS, CNRS, Aubi{\' e}re, France}
\and
Dieter Kratsch\thanks{Universit\'e de Lorraine, LITA, Metz, France}
\addtocounter{footnote}{-3} 
\and
Sigve H. S{\ae}ther\footnotemark
\addtocounter{footnote}{-1}
\and 
Yngve Villanger\footnotemark
}

\maketitle

\begin{abstract} 
The linear induced matching width (LMIM-width) of a graph is a width parameter defined by using the notion of branch-decompositions of a set function on ternary trees. In this paper we study output-polynomial enumeration algorithms on graphs of bounded LMIM-width and graphs of bounded local LMIM-width. In particular, we show that all 
1-minimal and all 1-maximal $(\sigma,\rho)$-dominating sets, and hence all minimal dominating sets, of graphs of bounded LMIM-width can be enumerated with polynomial (linear) delay using polynomial space. Furthermore, we show that all minimal dominating sets of a unit square graph can be enumerated in incremental polynomial time. 
\end{abstract}

\section{Introduction}
Enumeration is at the heart of computer science and combinatorics. Enumeration algorithms for graphs and hypergraphs typically deal with listing all vertex subsets or edge subsets satisfying a given property. As the size of the output is often exponential in the size of the input, it is customary to measure the running time of enumeration algorithms in the size of the input
plus the size of the output. If the running time of an algorithm is bounded by a polynomial in the size of the input plus the size of the output, then the algorithm is called output-polynomial. 
A large number of such algorithms 
have been given over the last 30 years; many of them solving problems on graphs and 
hypergraphs ~\cite{EiterG95,EiterG02,EiterGM03,JohnsonP88,KhachyanBBEG08,%
KhachyanBEG08,LawlerL80,Tarjan73}. It is also possible to show that certain enumeration problems have no output-polynomial time algorithm unless P\,=\,NP \cite{KhachyanBBEG08,KhachyanBEG08,LawlerL80}. 

Recently Kant{\'e} et al. showed that the famous longstanding open question whether there is an output-polynomial algorithm to  enumerate all minimal transversals of a hypergraph, is equivalent 
to the question whether there is an output-polynomial algorithm to enumerate all minimal dominating sets of a graph~\cite{KanteLMN12}. 
Although the main question remains open, a large number of results have been obtained on graph classes.
Output-polynomial algorithms to enumerate all minimal dominating sets
exist for graphs of bounded treewidth and of bounded clique-width \cite{Courcelle09}, interval graphs \cite{EiterG95}, strongly chordal graphs \cite{EiterG95},  planar graphs
\cite{EiterGM03}, degenerate graphs \cite{EiterGM03},  split graphs \cite{KanteLMN12}, path graphs~\cite{KanteLMN12b}, permutation graphs~\cite{KanteLMNU13}, line
graphs~\cite{GolovachHKV14,KanteLMN12b,KanteLMNU14}, chordal bipartite 
graphs~\cite{GolovachHKKV15}, chordal graphs~\cite{KanteLMNU14a} and graphs of girth at least 7~\cite{GolovachHKV14}.

In this paper, we extend the above results to graphs of bounded linear maximum induced matching width.  Using the notion of branch-decomposi\-tions of a set function on ternary trees introduced by
Robertson and Seymour, the notion of \emph{maximum induced matching width} (MIM-width) was introduced by Vatshelle~\cite{Vatshelle12}.  The linear maximum induced matching width (LMIM-width) of a
graph is the linearized variant of the MIM-width like path-width is the linearized version of tree-width.  (For definitions, see Section~\ref{sec:defs}.)
Belmonte and Vatshelle showed that several important graph classes, among them interval, circular-arc and permutation graphs, have bounded LMIM-width~\cite{BelmonteV13}. Polynomial-time algorithms
solving optimization problems on such graph classes have been studied in \cite{BuiXuanTV13,Vatshelle12}.

In this paper, we study two ways of using bounded LMIM-width in enumeration algorithms.  In Section~\ref{sec:lmim} we study the enumeration problem corresponding to an extended and colored version of
the well-known $(\sigma,\rho)$-domination problem, asking to enumerate all 1-minimal and all 1-maximal \Red $(\sigma,\rho)$-dominating sets.
This includes the enumeration of all minimal (total) dominating sets on graphs of bounded LMIM-width. We establish as our main result an enumeration algorithm with linear delay and polynomial space
for this problem. Our algorithm uses the enumeration (and counting) of paths in directed acyclic graphs.  In Section~\ref{sec:unit-square} we study the enumeration of all minimal dominating sets in
unit square graphs.  We first show that such graphs have bounded local LMIM-width.  An hereditary graph class $\mathcal{G}$ has bounded local LMIM-width if there is a function $f:\bN\to \bN$ such that
the LMIM-width of every graph $G$ in $\mathcal{G}$ is bounded by $f(diam)$ where $diam$ is the diameter of $G$. The notion of bounded local width has been studied for several width notions and in
particular in the area of Bidimensionality \cite{DemaineFHT04,DemaineHT06}. Then we show how to adapt the so-called flipping method developed by Golovach et al.~\cite{GolovachHKV14} to enumerate all
minimal dominating sets of a unit square graph in incremental polynomial time.

\section{Definitions and preliminaries}\label{sec:defs}
{\bf Graphs.}
The power set of a set $V$ is denoted by $2^V$. For two sets $A$ and $B$ we let $A\setminus B$ be the set $\{x\in A\mid x\notin B\}$, and if $X$ is a subset of a ground set $V$, we let $\comp{X}$ be
the set $V\setminus X$. We often write $x$ to denote the singleton set $\{x\}$. We denote by $\bN$ the set of positive or null integers, and let $\bN^*$ be $\bN\setminus\{0\}$.

A graph $G$ is a pair $(V(G),E(G))$ with $V(G)$ its set of vertices and $E(G)$ its set of edges. An edge between two vertices $x$ and $y$ is denoted by $xy$ (respectively $yx$). The subgraph of $G$
induced by a subset $X$ of its vertex set is denoted by $G[X]$, and we write $G\setminus X$ to denote the induced subgraph $G[V(G)\setminus X]$.  For $F\subseteq E(G)$, we denote by $G-F$ the subgraph
$(V(G),E(G)\setminus F)$.  The set of vertices that is adjacent to $x$ is denoted by $N_G(x)$, and we let $N_G[x]$ be the set $N_G(x)\cup \{x\}$.
For $U\subseteq V(G)$, $N_G[U]=\cup_{v\in U}N_G[v]$ and $N_G(U)=N_G[U]\setminus U$.
For a vertex $x$ and a positive integer $r$, $N_G^r[x]$ denotes the set of vertices at distance at most $r$ from $x$. Clearly, $N_G^1[x]=N_G[x]$. 
For two disjoint subsets $A$ and $B$ of $V(G)$, let $G[A,B]$ denote the graph with vertex set $A\cup B$ and the edge set  $\{uv\in E(G)\mid u\in A, v\in B\}$.
 Clearly, $G[A,B]$ is a bipartite graph and $\{A,B\}$ is its bipartition.  Recall that a set of edges $M$ is an \emph{induced matching} if end-vertices of distinct edges of $M$ are different and not adjacent. We denote by $\mim_G(A,B)$ the size of a maximum induced matching in $G[A,B]$.

 Let $G$ be a graph, and let $\Red,\Blue\subseteq V(G)$ such that $\Red\cup\Blue=V(G)$.  We refer to the vertices of $\Red$ as the \emph{red} vertices, the vertices of $\Blue$ as the \emph{blue}
 vertices, and we say that $G$ together with given sets $\Red$ and $\Blue$ is a \emph{colored graph}.  For simplicity, whenever we say that $G$ is a colored graph, it is assumed that the sets $\Red$
 and $\Blue$ are given.  Notice that $\Red$ and $\Blue$ not necessarily disjoint. In particular, it can happen that $\Red=\Blue=V(G)$; a non-colored graph $G$ can be seen as a colored graph with
 $\Red=\Blue=V(G)$. We deal with colored graphs because our algorithm for unit-square graphs requires as a subroutine an algorithm that takes as input a colored graph and enumerates all minimal
 subsets of red vertices that dominate the blue vertices.

A graph $G$ is an \emph{(axis-parallel) unit square} graph  if it is an intersection graph of squares in the plane with their sides parallel to the coordinate axis. These graphs also are known as the graphs of cubicity 2. We use the following equivalent definition, see e.g.~\cite{ChandranFS09}, 
in which each vertex $v$ of $G$ is represented by a point in $\mathbb{R}^2$. 
A graph $G$ is a unit square graph if there is a function $f\colon V(G)\rightarrow\mathbb{R}^2$ such that two vertices $u,v\in V(G)$ are adjacent in $G$ if and only if $\| f(u)-f(v)\|_{\infty}< 1$,
where the norm $\| \|_{\infty}$ is the $L_{\infty}$ norm. 
For a vertex $v\in V(G)$, we let $x_f(v)$ and $y_f(v)$ denote the $x$ and $y$-coordinate of $f(v)$ respectively.
We say that the point $(x_f(v),y_f(v))$ \emph{represents} $v$.
The function $f$ is called a \emph{realization} of the unit square graph. It is straightforward to see that for any unit square graph $G$, there is its realization $f\colon V(G)\rightarrow\mathbb{Q}^2$. 
We always assume that a unit square graph is given with its realization. 
It is NP-hard to recognize unit square graphs~\cite{Brie96}.
We refer to the survey of Brandst{\"a}dt, Le and Spinrad~\cite{BrandstdtLS99} for the definitions of all other graph classes mentioned in our paper.

\medskip
\noindent
{\bf Enumeration.}
Let ${\mathcal D}$ be a family of subsets of the vertex set of a given graph $G$ on $n$ vertices and $m$ edges. An \emph{enumeration algorithm} for ${\mathcal D}$ lists the elements of $\mathcal{D}$ without
repetitions.  The running time of an enumeration algorithm $\mathcal{A}$ is said to be \emph{output polynomial} if there is a polynomial $p(x,y)$ such that all the elements of $\mathcal{D}$ are listed
in time bounded by $p((n+m),|\mathcal{D}|)$. Assume now that $D_1,\ldots,D_\ell$ are the elements of $\mathcal{D}$ enumerated in the order in which they are generated by $\mathcal{A}$.  Let us denote
by $T({\mathcal A}, i)$ the time ${\mathcal A}$ requires until it outputs $D_i$, also $T({\mathcal A},\ell+1)$ is the time required by ${\mathcal A}$ until it stops. Let $delay({\mathcal A},1)=T({\mathcal A},1)$ and
$delay({\mathcal A},i)=T({\mathcal A},i)-T({\mathcal A},i-1)$. The {\it delay} of $\mathcal{A}$ is $\max\{delay({\mathcal A},i)\}$. Algorithm $\mathcal{A}$ runs in \emph{incremental polynomial} time if there is a
polynomial $p(x,i)$ such that $delay({\mathcal A},i)\leq p(n+m,i)$. Furthermore $\mathcal{A}$ is a \emph{polynomial delay} algorithm if there is a polynomial $p(x)$ such that the delay of ${\mathcal A}$ is at
most $p(n+m)$. Finally $\mathcal{A}$ is a \emph{linear delay} algorithm if $delay({\mathcal A},1)$ is bounded by a polynomial in $n+m$ and $delay({\mathcal A},i)$ is bounded by a linear function in $n+m$.

\medskip
\noindent
{\bf Linear maximum induced matching width.}
The notion of the \emph{maximum induced matching width} was introduced by Vatshelle~\cite{Vatshelle12} (see also~\cite{BelmonteV13}).  We will give the definition in terms of colored graphs and
restrict ourselves to the case of linear maximum induced matching width. Let $G$ be a colored $n$-vertex graph with $n\geq 2$ and let $x_1,\ldots,x_n$ be a linear ordering of its vertex set. For each
$1\leq i \leq n$, we let $A_i=\{x_1,x_2,\ldots x_i\}$ and $\comp{A}_i=\{x_{i+1},x_{i+2},\ldots x_n\}$. The \emph{maximum induced matching width} (\emph{MIM-width} for short) of $x_1,\ldots,x_n$ is
{\small
\begin{align*} 
\max\limits_{1\leq i \leq n-1} \max\{\mim_G(A_i\cap \Red,\comp{A}_i\cap\Blue),\mim_G(A_i\cap \Blue,\comp{A}_i\cap\Red)\}.
\end{align*}}

Notice that if $\Red=\Blue=V(G)$, \ie, if $G$ is an uncolored graph, the MIM-width of $x_1,\ldots,x_n$ is $\max\{\mim_G(A_i,\overline{A}_i)| i \in [n-1]\}$.  Consequently, The \emph{linear maximum
  induced matching width} (\emph{LMIM-width}) of $G$, denoted by $\lmimw(G)$, is the minimum value of the MIM-width taken over all linear orderings of $G$.

Belmonte and Vatshelle~\cite{BelmonteV13} proved that several important graph classes have bounded linear maximum induced matching width.

\begin{thm}\label{thm:bound-lmimw}
  For each of the following graph classes: interval graphs, permutation graphs, circular-arc graphs, circular permutation graphs, trapezoid graphs, convex graphs, and for fixed $k$, $k$-polygon
  graphs, Dilworth-$k$ graphs and complements of $k$-degenerate graphs, there is a constant $c$ such that $\lmimw(G)\leq c$ for any graph $G$ from the class. Moreover, the corresponding linear
  ordering of the vertices of MIM-width at most $c$ can be found in polynomial time.
\end{thm}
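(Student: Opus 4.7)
The plan is to treat each of the graph classes separately, using its characteristic representation to define a natural linear ordering of the vertex set and then bounding the size of any induced matching $\mim_G(A_i,\comp{A}_i)$ across the resulting cut by a constant (depending on $k$ in the parametrized cases). The common recipe in every case is that an induced matching between the two sides of a cut forces a strictly monotone chain of endpoints/positions inside the representation, while the representation only permits chains of bounded length.

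I would start with interval graphs as a warm-up: order the vertices by the right endpoint of their intervals. If $u_1v_1$ and $u_2v_2$ were two edges of an induced matching with $u_j\in A_i$ and $v_j\in\comp{A}_i$, comparing the four endpoints yields a contradictory chain of inequalities such as $l_{v_1}\le r_{u_1}<r_{u_2}<l_{v_1}$, giving $\lmimw\le 1$. A direct generalization of this endpoint-chain argument handles permutation graphs (via the two-line representation), trapezoid graphs and convex graphs. For circular-arc, circular permutation and $k$-polygon graphs I would cut the underlying circle or $k$-gon at a fixed reference point and order the resulting pieces by their starting position; any induced matching then splits into a constant number of sub-matchings each of which reduces to the non-circular case, at the price of a small additive constant.

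The two more combinatorial classes require a separate argument. For complements of a $k$-degenerate graph $G$ I would use a degeneracy ordering of $G$: if an induced matching $x_1y_1,\ldots,x_ty_t$ existed in $\comp{G}[A_i,\comp{A}_i]$ with $t>k+1$, then the earliest-indexed endpoint $x_j$ would satisfy $x_jy_\ell\in E(G)$ for every $\ell\ne j$ (since non-matching pairs are non-edges in $\comp{G}$), forcing it to have more than $k$ forward $G$-neighbours and contradicting $k$-degeneracy. A closely related counting argument, applied to a $k$-chain decomposition of the underlying poset, handles Dilworth-$k$ graphs. In every case the polynomial-time computation of the ordering is immediate: the relevant representation is itself computable in polynomial time and the ordering is simply a sort on that representation.

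The main obstacle I expect is that no uniform argument covers all nine classes; the proof is really a case analysis, and the most delicate cases are the Dilworth-$k$ graphs and the complements of $k$-degenerate graphs, where the definitions are combinatorial rather than geometric, so the clean endpoint-chain trick that works for the geometric classes must be replaced by a careful counting argument using the chain or degeneracy structure.
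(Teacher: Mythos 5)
The paper itself does not prove this theorem: it is imported from Belmonte and Vatshelle~\cite{BelmonteV13}, so there is no in-paper argument to compare against. Your plan follows the same strategy as that reference --- for each class, sort the vertices according to its representation and bound the induced matching across every cut --- and several of your cases are sound. The interval case is the standard endpoint-chain argument (the chain should read $l_{v_1}\le r_{u_1}<l_{v_2}\le r_{u_2}<l_{v_1}$, but the idea is right), and your two combinatorial cases in fact prove something stronger than you claim: for complements of $k$-degenerate graphs, the $2t$ endpoints of an induced matching of size $t$ in $\comp{G}[A_i,\comp{A}_i]$ induce in $G$ a supergraph of $K_{t,t}$ minus a perfect matching, whose minimum degree $t-1$ must be at most $k$; and for Dilworth-$k$ graphs, two matched endpoints $u_1,u_2$ on the same side of any cut must be incomparable in the vicinal preorder (if $N(u_1)\subseteq N[u_2]$ then the partner $v_1\in N(u_1)$ lies in $N(u_2)$, destroying inducedness), so the matching has size at most $k$. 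Both bounds therefore hold for \emph{every} linear ordering, and no degeneracy or chain-decomposition ordering is actually needed.

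The thin spots are exactly where the real work lies. For convex graphs the claim of a ``direct generalization'' hides the key choice: if you order by the representation naively, say all vertices of the convex side followed by the other side, the cut between the two sides can carry an arbitrarily large induced matching (an induced perfect matching is itself a convex bipartite graph), so the ordering must interleave the two sides (e.g., inserting each vertex of the non-convex side at an endpoint of its neighborhood interval), and the cut analysis must then handle matched edges of both orientations. Likewise, for circular-arc, circular permutation and $k$-polygon graphs, the assertion that an induced matching ``splits into a constant number of sub-matchings each of which reduces to the non-circular case'' is precisely the statement requiring proof: a cut of the chosen order is crossed by the circular representation in two places, one must argue how matched partners can wrap around, and for $k$-polygon graphs the resulting constant grows with $k$ (roughly $2k$), not by a small additive constant. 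None of this looks fatal --- it is the case analysis carried out in~\cite{BelmonteV13} --- but in your write-up these steps are asserted rather than proved.
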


For example, the LMIM-width of an interval graph is 1 and the LMIM-width of a circular-arc and of a permutation graph is at most 2. Before continuing, let us show that if a graph class $\cG$ has
LMIM-width $c$, then the graph class $\cG\rq{}$ obtained from the graphs in $\cG$ by partitioning their vertex set into \Blue and \Red also has LMIM-width $c$. 

\begin{prop}\label{prop:bound}
If a colored graph $G\rq{}$ is obtained from a non-colored graph $G$, then 
$\lmimw(G\rq{})\leq \lmimw(G)$. 
\end{prop}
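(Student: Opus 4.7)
The plan is to observe that the colored MIM-width is defined using induced matchings in subgraphs $G[A_i \cap \Red, \comp{A}_i \cap \Blue]$ and $G[A_i \cap \Blue, \comp{A}_i \cap \Red]$, both of which are subgraphs of $G[A_i, \comp{A}_i]$. Any induced matching in a subgraph remains an induced matching in the ambient graph, so each quantity appearing in the colored MIM-width at cut $i$ is bounded by $\mim_G(A_i, \comp{A}_i)$, the quantity appearing in the uncolored MIM-width at cut $i$.

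Concretely, I would fix an optimal linear ordering $x_1,\ldots,x_n$ for the uncolored graph $G$, so that $\max_{1 \le i \le n-1} \mim_G(A_i,\comp{A}_i) = \lmimw(G)$. Using the same ordering for the colored graph $G'$, I would show for each $i$ that
\[
\mim_G(A_i \cap \Red,\comp{A}_i \cap \Blue) \le \mim_G(A_i,\comp{A}_i)
\quad\text{and}\quad
\mim_G(A_i \cap \Blue,\comp{A}_i \cap \Red) \le \mim_G(A_i,\comp{A}_i),
\]
simply because $A_i \cap \Red,A_i \cap \Blue \subseteq A_i$ and $\comp{A}_i \cap \Red,\comp{A}_i \cap \Blue \subseteq \comp{A}_i$, so any induced matching witnessing the left-hand side also witnesses the right-hand side. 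Taking the maximum of the two colored quantities and then the maximum over $i$ yields that the MIM-width of the chosen ordering in $G'$ is at most $\lmimw(G)$, so $\lmimw(G') \le \lmimw(G)$.

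There is essentially no main obstacle here: the statement is a monotonicity observation, and the only care needed is to note that the edge set of $G'$ is identical to that of $G$ (the coloring only restricts which bipartite subgraph one measures induced matchings in, not which edges exist). The proof is a couple of lines once one unfolds the definition of MIM-width from Section~\ref{sec:defs}.
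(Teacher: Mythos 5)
Your proof is correct and follows essentially the same route as the paper: take an ordering witnessing $\lmimw(G)$, note that each colored quantity $\mim_G(A_i\cap\Red,\comp{A}_i\cap\Blue)$ and $\mim_G(A_i\cap\Blue,\comp{A}_i\cap\Red)$ is bounded by $\mim_G(A_i,\comp{A}_i)$ because the relevant bipartite graphs are (induced) subgraphs of $G[A_i,\comp{A}_i]$, and take maxima. Nothing further is needed.
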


\begin{proof}
  Let $A\subset V(G)$. For the colored graph $G'$, $\max\{\mim_{G'}(A,\overline{A})\leq \mim_{G'}(A\cap \Red,\overline{A}\cap \Blue),\mim_{G'}(A\cap \Blue,\overline{A}\cap \Red)\}$. Because
  $\mim_{G'}(A\cap \Red,\overline{A}\cap \Blue)=\mim_{G}(A\cap \Red,\overline{A}\cap \Blue)\leq \mim_{G}(A,\overline{A})$ and
  $\mim_{G'}(A\cap \Blue,\overline{A}\cap \Red)=\mim_{G}(A\cap \Blue,\overline{A}\cap \Red)\leq \mim_{G}(A,\overline{A})$, $\mim_{G'}(A,\overline{A})\leq \mim_G(A,\overline{A})$ and the claim
  follows.
\end{proof}

We say that a graph class $\mathcal{G}$ has \emph{locally bounded LMIM-width} if there is a function $f:\bN\rightarrow \bN$ such that for any $G\in\mathcal{G}$ and every $u\in V(G)$,
$\lmimw(G[N_G^r[u]])\leq f(r)$ for all $r\in \bN$.

\medskip
\noindent
$\mathbf{(\sigma,\rho)}${\bf -domination.}
The \emph{$(\sigma,\rho)$-dominating set} notion was introduced by Telle and Proskurowski \cite{TelleP97} as a
generalization of dominating sets.  Indeed, many NP-hard domination type problems such as the problems $d$-Dominating Set, Independent Dominating Set and Total Dominating Set are special cases of the
$(\sigma,\rho)$-Dominating Set Problem.  See \cite[Table 1]{BuiXuanTV13} for more examples.
For technical reasons, we introduce  \Red $(\sigma,\rho)$-domination.
Let $\sigma$ and $\rho$ be subsets of $\bN$. Throughout this paper it is assumed that $\sigma$ and $\rho$ are finite or co-finite. Notice that it can happen that $\sigma$ is finite and $\rho$ is
co-finite and vice versa. We say that a subset $D$ of $V(G)$  \emph{$(\sigma, \rho)$-dominates} a vertex $u$ if 

\begin{align*}
  |N_G(u)\cap D| & \in \begin{cases} \sigma & \textrm{if $u\in D$},\\ \rho & \textrm{if $u\notin D$}. \end{cases}
\end{align*}

It is said that $D$ \emph{$(\sigma, \rho)$-dominates $U\subseteq V(G)$} if $D$  $(\sigma, \rho)$-dominates every vertex of $U$.

Let $G$ be a colored graph. A set of vertices $D\subseteq \Red$ is a \emph{\Red $(\sigma,\rho)$-dominating set} if $D$ $(\sigma,\rho)$-dominates $\Blue$.
If $\Red=\Blue=V(G)$, then a \Red $(\sigma,\rho)$-dominating set is a \emph{$(\sigma,\rho)$-dominating set}.

Notice that if $\sigma=\bN$ and $\rho=\bN^*$, then a set $D\subseteq V(G)$ $(\sigma,\rho)$-dominates a vertex $u$ if $u\in D$ or $u$ is adjacent to a vertex of $D$, \ie, the notion of $(\sigma,\rho)$-domination coincides with the classical domination in this case. Whenever we consider this case, we simply write that a set $D$ dominates a vertex or set and $D$ is a (\Red) dominating set omitting $(\sigma,\rho)$.

A \Red $(\sigma,\rho)$-dominating set $D$ of a graph $G$ is said \emph{minimal} if for any proper subset $D'\subset D$, $D'$ is not a \Red $(\sigma,\rho)$-dominating set, and we say that $D$ is
$1$-\emph{minimal} if for each vertex $x$ in $D$, $D\setminus x$ is not a \Red $(\sigma,\rho)$-dominating set.  Respectively, a \Red $(\sigma,\rho)$-dominating set $D$ is \emph{maximal} if for any
$D'$ such that $D\subset D'\subseteq \Red$, $D'$ is not a \Red $(\sigma,\rho)$-dominating set and $D$ is $1$-\emph{maximal} if for each vertex $x$ in $\Red\setminus D$, $D\cup \{x\}$ is not a \Red
$(\sigma,\rho)$-dominating set.  Clearly, every minimal or maximal \Red $(\sigma,\rho)$-dominating set is 1-minimal or 1-maximal, respectively, but not the other way around because the converse is not
true for arbitrary $\sigma$ and $\rho$. Observe however that every (\Red) $1$-minimal (total) dominating set is also a (\Red) minimal (total) dominating set. In Section \ref{sec:lmim} we enumerate
only $1$-minimal and $1$-maximal \Red $(\sigma,\rho)$-dominating sets.

Because our aim is to enumerate $1$-minimal or $1$-maximal \Red $(\sigma,\rho)$-do\-mi\-na\-ting sets, we need some certificate that a considered set is $1$-minimal or $1$-maximal respectively.
Let $D$ be a \Red $(\sigma,\rho)$-dominating set of a colored graph $G$. For a vertex $u\in D$, we say that the vertex $v\in \Blue$ is its \emph{certifying vertex} (or a \emph{certificate}) if $v$ is
not $(\sigma,\rho)$-dominated by $D\setminus \{u\}$, \ie, 

\begin{align*}
  |N_G(v)\cap (D\setminus \{u\})|\notin & \begin{cases} \sigma & \textrm{if $v\in (D\setminus \{u\})\cap \Blue$},\\ \rho & \textrm{if $v\in \Blue\setminus (D\setminus \{u\})$}. \end{cases}
\end{align*}

Respectively, for  a vertex $u\in \Red\setminus D$, the vertex $v\in \Blue$ is its \emph{certifying vertex} (or a \emph{certificate}) if $v$ is not $(\sigma,\rho)$-dominated by $D\cup \{u\}$.

Notice that because $D$ is a \Red $(\sigma,\rho)$-dominating set, if $v$ is a certificate for $u$, then $v\in N_G[u]$. Observe also that a vertex can be a certificate for many vertices and it can be a
certificate for itself. Notice that in the case of the classical domination, certificates are usually called \emph{privates} because a vertex is always a certificate for exactly one vertex, including
itself.  It is straightforward to show the following.

\begin{lem}\label{lem:cert}
  A set $D\subseteq \Red$ is a $1$-minimal \Red $(\sigma,\rho)$-dominating set of a colored graph $G$ if and only if each vertex $u\in D$ has a certificate.  Furthermore, $D$ is a $1$-maximal \Red
  $(\sigma,\rho)$-dominating set of $G$ if and only if each vertex $u\in \Red \setminus D$ has a certificate.
\end{lem}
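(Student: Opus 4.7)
The plan is to unwind the definitions on both sides, since the lemma amounts to rephrasing "$1$-minimal" (resp.\ "$1$-maximal") in terms of the existence, for each relevant vertex $u$, of a blue witness $v$ to the failure of $D \setminus \{u\}$ (resp.\ $D \cup \{u\}$) to $(\sigma,\rho)$-dominate $\Blue$.

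For the $1$-minimal direction I would argue as follows. By definition, $D$ is $1$-minimal exactly when, for every $u \in D$, the set $D \setminus \{u\}$ is \emph{not} a \Red{} $(\sigma,\rho)$-dominating set of $G$. Since $D \setminus \{u\} \subseteq \Red$ is a subset of \Red, the only way it can fail to be such a dominating set is that there exists some blue vertex $v \in \Blue$ that $D \setminus \{u\}$ does not $(\sigma,\rho)$-dominate, i.e., $|N_G(v) \cap (D \setminus \{u\})|$ lies outside $\sigma$ if $v \in (D\setminus\{u\}) \cap \Blue$ and outside $\rho$ otherwise. But this is precisely the definition of $v$ being a certifying vertex for $u$. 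Conversely, if each $u \in D$ has such a certificate $v$, then $D \setminus \{u\}$ fails to $(\sigma,\rho)$-dominate $v$, so it is not a \Red{} $(\sigma,\rho)$-dominating set, and hence $D$ is $1$-minimal.

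For the $1$-maximal direction the argument is symmetric. $D$ is $1$-maximal iff for every $u \in \Red \setminus D$, the set $D \cup \{u\} \subseteq \Red$ is not a \Red{} $(\sigma,\rho)$-dominating set, which happens iff there exists a blue $v$ not $(\sigma,\rho)$-dominated by $D \cup \{u\}$, i.e., a certificate for $u$ in the second sense. The two implications again follow by the same definitional unrolling.

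I expect no real obstacle here: the lemma is essentially a tautology given the way "certifying vertex" has been set up. The only small point to be careful about is checking that the failure condition on $D \setminus \{u\}$ (resp.\ $D \cup \{u\}$) in the $(\sigma,\rho)$-domination definition has exactly the same case split ($v \in$ set vs.\ $v \notin$ set) as the displayed definition of a certificate; this is what the explicit case distinction in the definition of certificate was designed for, so the two statements line up verbatim and the proof is a one-paragraph verification on each side.
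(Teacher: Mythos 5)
Your proof is correct and is exactly the definitional unwinding intended: the paper itself states the lemma as ``straightforward to show'' and gives no proof, since the notion of certificate was defined precisely so that the existence of a certificate for $u$ coincides with the failure of $D\setminus\{u\}$ (resp.\ $D\cup\{u\}$) to $(\sigma,\rho)$-dominate $\Blue$. Your observation that $D\setminus\{u\}$ and $D\cup\{u\}$ remain subsets of $\Red$, so non-domination of some blue vertex is the only possible failure, is the right (and only) point needing care.
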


\section{Enumerations for graphs of bounded LMIM-width}\label{sec:lmim}
In this section we prove the following. 

\begin{thm}\label{thm:mainthm} Let $(\sigma,\rho)$ be a pair of finite or co-finite subsets of $\bN$ and let $c$ be a positive integer.  For a colored graph $G$ 
  given with a linear ordering of $V(G)$ of MIM-width at most $c$, one can count in time bounded by $O(n^c)$, and enumerate with linear delay, all $1$-minimal (or $1$-maximal) \Red
  $(\sigma,\rho)$-dominating sets of $G$.
\end{thm}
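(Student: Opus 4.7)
\textbf{The plan} is to convert the enumeration problem into path enumeration in a layered directed acyclic graph $H$ built by dynamic programming along the given linear ordering $x_1,\dots,x_n$. The source-to-sink paths of $H$ will be in bijection with the $1$-minimal (respectively $1$-maximal) \Red $(\sigma,\rho)$-dominating sets of $G$; then a backward linear-time sweep computes, for every node, the number of paths from it to a sink, which yields both the total count (by summing at the source) and the pointers needed for standard weighted-path enumeration producing solutions with linear delay.

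Writing $A_i=\{x_1,\dots,x_i\}$, each node at layer $i$ will encode an equivalence class of partial choices $D_i\subseteq A_i\cap\Red$. I would represent a class by three pieces of data. First, an \emph{outward signature}: the map $v\mapsto\min(|N_G(v)\cap D_i|,t)$ on $\comp{A}_i\cap\Blue$, where $t$ is a threshold depending only on $\sigma$ and $\rho$ (the finiteness of the relevant side of these sets makes counts above $t$ irrelevant). Second, an \emph{inward status} flag on each $v\in A_i\cap\Blue$ recording whether $|N_G(v)\cap D_i|$ already lies in the appropriate one of $\sigma$, $\rho$. Third, a \emph{certificate ledger} indicating, for every $u\in D_i$ (in the $1$-minimal case) or every $u\in (A_i\cap\Red)\setminus D_i$ (in the $1$-maximal case), whether $u$ has been certified within $A_i$ and, if not, the ``type'' of certifier that could still arrive from $\comp{A}_i$; by Lemma~\ref{lem:cert} the sinks are precisely the layer-$n$ nodes in which every blue vertex is dominated and every certificate demand is met.

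By the argument of \cite{BelmonteV13,BuiXuanTV13} exploiting $\mim_G(A_i\cap\Red,\comp{A}_i\cap\Blue)\leq c$ (and symmetrically), the number of distinct truncated neighborhood patterns that subsets of $A_i\cap\Red$ can induce on $\comp{A}_i\cap\Blue$ is at most $n^{O(c)}$, so the outward signatures at each layer number at most $n^{O(c)}$; the inward status and the ledger are determined by the same threshold-truncated data and add only polynomial overhead. Transitions from layer $i$ to $i+1$ apply the decision $x_{i+1}\in D$ or $x_{i+1}\notin D$, refresh the three components, and in particular check whether $x_{i+1}$ discharges any pending certificate demand. The main obstacle, and the place where the MIM-width bound is genuinely used, is verifying that the certificate ledger factors through the outward signature: a pending certificate for $u$ must lie in $N_G[u]\cap\comp{A}_i\cap\Blue$ with a specific \emph{boundary} count, so that toggling $u$'s membership in $D$ would push that count out of $\sigma$ or $\rho$. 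Showing that equivalent partial choices generate equivalent demands is what keeps the state space polynomial and ultimately delivers the claimed counting time and linear-delay enumeration via the DAG.
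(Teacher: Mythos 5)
Your high-level plan (DAG of DP states along the ordering, backward path counting, DFS enumeration with linear delay) matches the paper, but the state you propose is not the one that works, and the step you yourself flag as ``the main obstacle'' is exactly where the argument breaks. Your per-vertex data on the processed side --- the inward status flag for every $v\in A_i\cap\Blue$ and the certificate ledger for every red vertex of $A_i$ --- does \emph{not} factor through the outward signature: two sets $D_i,D_i'\subseteq A_i\cap\Red$ with identical truncated neighborhoods on $\comp{A}_i\cap\Blue$ (i.e.\ $D_i\equiv_{A_i}^d D_i'$) can leave entirely different subsets of $A_i\cap\Blue$ short of domination and entirely different subsets of $A_i\cap\Red$ uncertified, because those conditions depend on adjacencies inside $A_i$, which the cut-based MIM-width bound says nothing about. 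So the claim that ``equivalent partial choices generate equivalent demands'' is unsupported and false as stated, and without it your state space is not $n^{O(c)}$; Lemma~\ref{lem:mim-neighborhoods} only bounds the number of outward signatures, not the number of (signature, inward status, ledger) triples.

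The paper resolves this with an ingredient your proposal is missing: each node also stores a \emph{guessed} representative $R'$ of the $d$-neighbor-equivalence class (\wrt $\equiv_{\comp{A}_i}^d$) of the \emph{future} part of the solution, together with two canonical certificate-witness sets $C\subseteq A_i\cap\Blue$ and $C'\subseteq\comp{A}_i\cap\Blue$ whose sizes are at most $c$ by Lemma~\ref{lem:cert-bound}. With the future class guessed (and kept consistent along arcs), every domination condition for a blue vertex and every certification condition (via Lemmas~\ref{lem:cert-assign} and~\ref{lem:cert-intersect}, which reduce ``$u$ has a certificate'' to a condition on the counts $|N(v)\cap D|$ of blue vertices $v$, using $\sigma^-,\rho^-$ or $\sigma^+,\rho^+$) is checked exactly once, at the moment the relevant vertex crosses the cut, against past-so-far plus guessed future; no per-vertex status ever needs to be carried forward, which is what keeps the node set polynomial. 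This also forces the paper to do extra work your sketch ignores: because nodes contain guessed components, one must prove (via the canonical $\cSG_j/\cGG_j$ constructions) that distinct source-to-sink paths have distinct traces, so that path enumeration yields each $1$-minimal (or $1$-maximal) set exactly once. As written, your proposal either has an exponential state space or, if you discard the per-vertex data, loses the information needed to certify $1$-minimality/$1$-maximality, so the bijection with maximal paths is not established.
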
 

As a corollary of Theorems~\ref{thm:bound-lmimw} and \ref{thm:mainthm}, and Proposition \ref{prop:bound} we have the following.

\begin{cor}\label{thm:cormain} Let $(\sigma,\rho)$ be a pair of finite or co-finite subsets of $\bN$. Then, for every colored graph $G$ in one of the following graph classes, we can count in polynomial
  time, and enumerate with linear delay all $1$-minimal 
  (or $1$-maximal) 
  \Red $(\sigma,\rho)$-dominating sets of $G$: interval graphs, permutation graphs, circular-arc graphs, circular permutation graphs,
  trapezoid graphs, convex graphs, and for fixed $k$, $k$-polygon graphs, Dilworth-$k$ graphs and complements of $k$-degenerate graphs.
\end{cor}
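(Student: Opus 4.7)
The plan is to assemble the corollary directly from the three previously stated results, so the proposal is essentially a bookkeeping argument rather than a new construction. Let $G$ be a colored graph whose underlying uncolored graph lies in one of the listed classes (interval, permutation, circular-arc, circular permutation, trapezoid, convex, or, for fixed $k$, $k$-polygon, Dilworth-$k$, or complements of $k$-degenerate). I would first invoke Theorem~\ref{thm:bound-lmimw} to obtain a constant $c$ (depending only on the class, and on $k$ in the parametric cases) and, in polynomial time, a linear ordering $x_1,\dots,x_n$ of $V(G)$ whose MIM-width with respect to the uncolored graph is at most $c$.

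Next I would apply Proposition~\ref{prop:bound} to transfer this bound to the colored setting: the very same ordering $x_1,\dots,x_n$, viewed now as an ordering of the colored graph $G$, has MIM-width at most $c$, since for each cut $(A_i,\comp{A}_i)$ both $\mim_G(A_i\cap\Red,\comp{A}_i\cap\Blue)$ and $\mim_G(A_i\cap\Blue,\comp{A}_i\cap\Red)$ are bounded by the uncolored MIM-value of the cut. In particular, no new decomposition needs to be computed once the uncolored ordering is in hand.

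With this ordering of MIM-width at most $c$ in hand, the hypothesis of Theorem~\ref{thm:mainthm} is satisfied. Feeding $G$ together with the ordering into that algorithm yields a count of all $1$-minimal (resp.\ $1$-maximal) \Red $(\sigma,\rho)$-dominating sets in time $O(n^c)$ and an enumeration of them with linear delay. Since $c$ is a constant for each class (for fixed $k$ in the parametric families), $n^c$ is polynomial in the size of $G$, so the overall running time of the counting step is polynomial and the enumeration step inherits the linear-delay guarantee.

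There is essentially no obstacle here: the only point worth being careful about is that Theorem~\ref{thm:bound-lmimw} is stated for uncolored graphs, which is exactly why Proposition~\ref{prop:bound} is needed as the bridge to the colored input required by Theorem~\ref{thm:mainthm}. Once that transfer is made explicit, the three cited results compose to give the claim with no additional work.
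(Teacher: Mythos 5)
Your proposal is correct and follows exactly the route the paper intends: the corollary is stated as an immediate consequence of Theorem~\ref{thm:bound-lmimw}, Proposition~\ref{prop:bound}, and Theorem~\ref{thm:mainthm}, and your assembly of these three results (including the observation that Proposition~\ref{prop:bound} is the bridge from the uncolored ordering to the colored input required by Theorem~\ref{thm:mainthm}) is precisely the argument. No further comment is needed.
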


The following corollary improves some known results in the enumeration of minimal transversals of some geometric hypergraphs (see e.g. \cite{Rauf11}). 

\begin{cor}\label{thm:cormain2} For every hypergraph $\cH$ being an interval hypergraph
or a circular-arc hypergraph one can count in polynomial time, and enumerate with linear delay, all minimal transversals of $\cH$.
\end{cor}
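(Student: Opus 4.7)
The plan is to reduce the enumeration of minimal transversals of $\cH$ to the enumeration of minimal \Red dominating sets in a suitable colored interval (respectively circular-arc) graph, and then invoke Corollary~\ref{thm:cormain} with $(\sigma,\rho)=(\bN,\bN^*)$, i.e., classical domination.

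Concretely, suppose $\cH$ is an interval hypergraph given with a geometric representation on the real line, so each $v\in V(\cH)$ corresponds to a point $p_v$ and each $e\in E(\cH)$ corresponds to an interval $I_e$ with $v\in e$ iff $p_v\in I_e$. I form a colored graph $G$ with $V(G)=V(\cH)\cup E(\cH)$, $\Red=V(\cH)$ and $\Blue=E(\cH)$, by assigning to each $v\in V(\cH)$ a sufficiently small interval centered at $p_v$ and to each $e\in E(\cH)$ its interval $I_e$, and letting $G$ be the intersection graph of this family. Then $G$ is an interval graph; for the circular-arc case the analogous construction on the circle yields a circular-arc graph. In either case, Theorem~\ref{thm:bound-lmimw} combined with Proposition~\ref{prop:bound} gives $\lmimw(G)\leq c$ for an absolute constant $c$, together with a witnessing linear ordering computable in polynomial time.

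The correctness of the reduction is immediate: because $\Red\cap\Blue=\emptyset$ in $G$, a set $D\subseteq \Red$ is a \Red dominating set of $G$ iff every blue vertex $e$ has a neighbor in $D$, which is exactly the condition that every hyperedge $e$ of $\cH$ contains a vertex of $D$, i.e., that $D$ is a transversal of $\cH$. Moreover, for classical domination $1$-minimal and minimal \Red dominating sets coincide: if $D$ is $1$-minimal and $D'\subsetneq D$, then picking any $x\in D\setminus D'$ we have $D'\subseteq D\setminus x$, and since $D\setminus x$ fails to dominate \Blue, so does $D'$. Hence the minimal transversals of $\cH$ are precisely the $1$-minimal \Red dominating sets of $G$, and Corollary~\ref{thm:cormain} applied to $G$ yields both polynomial-time counting and a linear-delay enumeration algorithm.

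The main potential obstacle is the assumption that $\cH$ comes equipped with a geometric representation; however, for interval and circular-arc hypergraphs such a representation can be computed in polynomial time from the incidence matrix (via the consecutive-ones, respectively circular-ones, property), so this is only a polynomial-time preprocessing step and does not affect the asymptotic bounds.
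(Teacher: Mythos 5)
Your proposal is correct and follows essentially the same route as the paper: color the hypergraph vertices \Red and the hyperedges \Blue, realize the incidence structure inside an interval (respectively circular-arc) intersection graph, and observe that minimal transversals are exactly the ($1$-)minimal \Red $(\bN,\bN^*)$-dominating sets, so Corollary~\ref{thm:cormain} applies. The extra details you supply (smallness of the point intervals, computing a representation, and the coincidence of $1$-minimality with minimality for classical domination) are all consistent with, and merely flesh out, the paper's two-sentence argument.
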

\begin{proof} For any of the considered hypergraphs, its incidence graph is a subgraph of 
an interval or a circular-arc graph. If we color the vertices of the hypergraph in \Red and the hyperedges in \Blue, then
  $X$ is a minimal transversal in the hypergraph if and only if it is a \Red $(\bN,\bN^*)$-dominating set.
\end{proof}

The remaining part of the section is devoted to the proof of Theorem \ref{thm:mainthm}. 
In Section~\ref{subsec:technical} we  give some technical definitions and lemmas  that are important for the definition of the DAG whose maximal paths correspond to the desired sets. In Section
\ref{subsec:mindag-defn} we define the DAG whose maximal paths correspond to the $1$-minimal \Red $(\sigma,\rho)$-dominating sets and then show that it can be constructed in polynomial time. We also
recall how to count in polynomial time, and enumerate with linear delay the maximal paths of a DAG.  We then explain in Section \ref{subsec:max} how the construction of Section \ref{subsec:mindag-defn} can
be rewritten for $1$-maximal \Red $(\sigma,\rho)$-dominating sets.

\subsection{Technical Definitions}\label{subsec:technical}

First because $\sigma$ and $\rho$ can be infinite, we need a finite way to check if a vertex is $(\sigma,\rho)$-dominated.  Let $d(\bN)=0$. For every finite set $\mu\subseteq \bN$, let
$d(\mu):=1 + \max\{a \mid a\in \mu\}$, and for every co-finite set $\mu\subseteq \bN$, let $d(\mu) := 1 +\max\{a\mid a\in \bN\setminus \mu\}$.  For finite or co-finite subsets $\sigma$ and $\rho$ of
$\bN$, we let $d(\sigma,\rho):=\max(d(\sigma),d(\rho))$. Given a subset $D$ of \Red, we can check if $D$ is a \Red $(\sigma,\rho)$-dominating set by computing
$|D\cap N_G(x)|$ up to $d(\sigma,\rho)$ for each vertex $x$ in $\Blue$ \cite{BuiXuanTV13}.
We need the following properties of certificates.

\begin{lem}\label{lem:cert-assign}
  Let $D$ be a \Red $(\sigma,\rho)$-dominating set of colored graph $G$.  If $v$ is a certificate for $u\in D$, then $v=u$ or $v$ is a certificate for all vertices of $N_G(v)\cap D$.  If $v$ is a certificate for
  $u\in\overline{D}\cap \Red$, then $v=u$ or $v$ is a certificate for all vertices of $N_G(v)\cap \overline{D}\cap \Red$.
\end{lem}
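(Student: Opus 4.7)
The plan is to unpack the definitions directly and observe that removing (respectively adding) an element of $N_G(v)\cap D$ (resp.\ $N_G(v)\cap \overline{D}\cap\Red$) changes neither the cardinality $|N_G(v)\cap(D\setminus\{u\})|$ nor the ``side'' (is $v$ in the set or not?) that determines whether we test against $\sigma$ or $\rho$. So the certificate condition for $u$ and for any other $u'$ in the relevant neighbourhood of $v$ reduces to the same numerical test.

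I would start with the first statement. Fix a certificate $v\in\Blue$ for $u\in D$ and assume $v\neq u$. Pick any $u'\in N_G(v)\cap D$; the goal is to show $v$ is a certificate for $u'$. Since $u'\in N_G(v)$, no self-loops gives $u'\neq v$, and by hypothesis $v$ is adjacent to $u$, so $u\in N_G(v)\cap D$ as well. Therefore
\[
|N_G(v)\cap (D\setminus\{u\})|\;=\;|N_G(v)\cap D|-1\;=\;|N_G(v)\cap(D\setminus\{u'\})|.
\]
Moreover, because $v\neq u$ and $v\neq u'$, the conditions ``$v\in(D\setminus\{u\})\cap\Blue$'' and ``$v\in(D\setminus\{u'\})\cap\Blue$'' are equivalent (both amount to $v\in D\cap\Blue$), and likewise for $v\in\Blue\setminus(D\setminus\{u\})$. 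Hence the defining test ``$|N_G(v)\cap(D\setminus\{u'\})|\notin\sigma$ (resp.\ $\notin\rho$)'' is exactly the same equation that witnesses the certificate condition for $u$. Thus $v$ certifies $u'$.

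The second statement is handled symmetrically. Let $v\in\Blue$ be a certificate for $u\in \overline{D}\cap\Red$ with $v\neq u$, and take $u'\in N_G(v)\cap\overline{D}\cap\Red$. Again $u'\neq v$, and since $u,u'\in N_G(v)$ with neither in $D$,
\[
|N_G(v)\cap(D\cup\{u\})|\;=\;|N_G(v)\cap D|+1\;=\;|N_G(v)\cap(D\cup\{u'\})|.
\]
As $v\neq u$ and $v\neq u'$, membership of $v$ in $D\cup\{u\}$ coincides with membership in $D\cup\{u'\}$, so exactly the same $\sigma$/$\rho$ test applies in both cases, and the certificate condition transfers from $u$ to $u'$.

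There is essentially no obstacle; the only subtlety to be careful about is ruling out the case $u'=v$ (which would change the side of the test). That is immediate from the absence of self-loops combined with $u'\in N_G(v)$. Both directions of the proof will be short and purely definitional, so I would present them together in a single, compact argument using the identity of the two cardinalities together with the equivalence of the membership tests.
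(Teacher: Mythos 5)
Your proof is correct and follows essentially the same route as the paper's: both arguments observe that removing $u$ or any other $w\in N_G(v)\cap D$ (resp.\ adding $u$ or $u'$) yields the same cardinality $|N_G(v)\cap D|\mp 1$ and leaves the $\sigma$/$\rho$ side of the test unchanged since $v\neq u,u'$. The only cosmetic point is that "$v$ is adjacent to $u$" is not literally a hypothesis but the paper's earlier observation that a certificate satisfies $v\in N_G[u]$ (so $u\in N_G(v)$ when $v\neq u$), which both proofs rely on.
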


\begin{proof}
  Let $v\ne u$ be a certificate for $u\in D$ and let $D':=D\setminus \{u\}$. If $v\in D'$ and $|N_G(v)\cap D'|\notin \sigma$, then for any $w\in N_G(v)\cap D$, $|N_G(v)\cap (D\setminus
  \{w\})|=|N_G(v)\cap D'|\notin\sigma$.  If $v\notin D'$ and $|N_G(v)\cap D'|\notin \rho$, then for any $w\in N_G(v)\cap D$, $|N_G(v)\cap (D\setminus \{w\})|=|N_G(v)\cap D'|\notin\rho$.  The second
  claim can be proved by similar arguments.
\end{proof}

We define $\sigma^*:=\sigma\setminus\rho$ and $\rho^*:=\rho\setminus\sigma$.  Let also $\sigma^-:=\{i\in\sigma\mid i-1\notin\sigma\}$, $\rho^-:=\{i\in\rho\mid i-1\notin\rho\}$,
$\sigma^+:=\{i\in\sigma\mid i+1\notin\sigma\}$ and $\rho^+:=\{i\in\rho\mid i+1\notin\rho\}$.  By the definitions, we have the following property.

 \begin{lem}\label{lem:set-bound}
   The sets $\sigma^*,\rho^*, \sigma^-,\rho^-,\sigma^+,\rho^+$ are finite or co-finite. Also, $d(\sigma^*,\rho^*)\leq d(\sigma,\rho)$, $d(\sigma^-,\rho^-)\leq d(\sigma,\rho)+1$ and
   $d(\sigma^+,\rho^+)\leq d(\sigma,\rho)+1$. 
\end{lem}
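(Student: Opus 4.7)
The plan is a direct case analysis driven entirely by the two possibilities (finite vs.\ co-finite) for each of $\sigma$ and $\rho$, using nothing beyond the definitions of $d$, $\sigma^*$, $\sigma^-$, $\sigma^+$. I only need to handle $\sigma^*, \sigma^-, \sigma^+$, since the corresponding statements for $\rho^*, \rho^-, \rho^+$ follow by symmetry.

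First I would dispose of $\sigma^*$. The key identity is $\bN\setminus\sigma^*=(\bN\setminus\sigma)\cup\rho$. If $\sigma$ is finite then $\sigma^*\subseteq\sigma$ is finite; if $\rho$ is co-finite then $\bN\setminus\sigma^*$ contains the co-finite set $\rho$, so again $\sigma^*$ is finite; in the remaining case ($\sigma$ co-finite and $\rho$ finite) both $\bN\setminus\sigma$ and $\rho$ are finite, so their union is finite and $\sigma^*$ is co-finite. For the bound $d(\sigma^*)\leq d(\sigma,\rho)$: when $\sigma^*$ is finite, every $a\in\sigma^*$ lies in $\sigma$ and outside $\rho$, so $a\leq\max(d(\sigma),d(\rho))-1$; when $\sigma^*$ is co-finite, every $a\in(\bN\setminus\sigma)\cup\rho$ admits the same bound. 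Either way $d(\sigma^*)\leq d(\sigma,\rho)$.

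Next I would handle $\sigma^-$ and $\sigma^+$, which are subsets of $\sigma$ and therefore finite whenever $\sigma$ is. If $\sigma$ is co-finite, then for every $i$ strictly larger than $d(\sigma)$ both $i-1$ and $i+1$ belong to $\sigma$, so such $i$ lies neither in $\sigma^-$ nor in $\sigma^+$; hence both sets are finite. For the size bounds: any $i\in\sigma^-$ with $i\geq 1$ satisfies $i-1\in\bN\setminus\sigma$, giving $i\leq d(\sigma)$ in the co-finite case and $i\leq d(\sigma)-1$ in the finite case; the value $i=0$ (which is only possible because $-1\notin\bN$ trivially lies outside $\sigma$) is also consistent with this bound. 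Thus $d(\sigma^-)\leq d(\sigma)+1\leq d(\sigma,\rho)+1$. Any $i\in\sigma^+$ satisfies $i+1\notin\sigma$, forcing $i\leq d(\sigma)-1$ in both the finite and co-finite case, hence $d(\sigma^+)\leq d(\sigma)\leq d(\sigma,\rho)+1$.

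I do not anticipate a serious obstacle: the whole lemma is bookkeeping against the definition of $d$. The only points requiring care are the degenerate case $\sigma=\bN$ (where $d(\sigma)=0$ and $\sigma^+=\emptyset$, $\sigma^-=\{0\}$), making sure I apply the correct branch of the definition of $d$ when passing between finite and co-finite outputs, and tracking the off-by-one introduced by allowing $i=0$ in $\sigma^-$, which is exactly where the additive constant $+1$ in $d(\sigma^-,\rho^-)\leq d(\sigma,\rho)+1$ comes from.
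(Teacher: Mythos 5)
Your case analysis is correct, and it is exactly the routine verification the paper has in mind: the paper states Lemma~\ref{lem:set-bound} without proof ("by the definitions"), and your argument just spells out that bookkeeping, including the degenerate cases and the source of the $+1$ for $\sigma^-,\rho^-$. The only cosmetic slip is attributing the finite-case bound for $\sigma^-$ to $i-1\notin\sigma$ rather than simply to $\sigma^-\subseteq\sigma$, which does not affect correctness.
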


By the definition of certificates, we have the next easy lemma.

\begin{lem}\label{lem:cert-intersect}
  Let $D$ be a \Red $(\sigma,\rho)$-dominating set of a colored graph $G$ and let $u\in \Red$ and $v\in \Blue$ be distinct vertices of $G$.  If $u\in D$, then $v$ is a certificate for $u$ if and only if 
  \begin{align*}
    |N_G(v)\cap D| &\in \begin{cases} \sigma^- & \textrm{if $v\in D$}\\ \rho^- & \textrm{if $v\notin D$.} \end{cases}
  \end{align*}

  If $u\notin D$, then $v$ is a certificate for $u$ if and only if 
  \begin{align*} 
    |N_G(v)\cap D|&\in \begin{cases} \sigma^+ & \textrm{if $v\in D$}\\  \rho^+ & \textrm{if $v\notin D$.} \end{cases}
  \end{align*}

  A blue vertex $v\in D$ is a certificate for itself if and only if $|N_G(v)\cap D|\in\sigma^*$.  A red vertex $v\notin D$ is a certificate for itself if and only if it is blue and $|N_G(v)\cap D|\in\rho^*$.
\end{lem}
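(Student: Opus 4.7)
The proof is a case analysis that unpacks the definition of certificate and tracks how the cardinality $|N_G(v)\cap D|$ changes when one toggles membership of $u$.

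My plan is to fix $D$, $u$, $v$ and set $k:=|N_G(v)\cap D|$. Since $v$ is a certificate for $u$, Lemma \ref{lem:cert} (or really the defining remark that ``if $v$ is a certificate for $u$ then $v\in N_G[u]$'') gives $v\in N_G[u]$. The key observation is then the following: if $v\neq u$, then $u\in N_G(v)$, so removing $u$ from $D$ (in the $u\in D$ case) gives $|N_G(v)\cap(D\setminus\{u\})|=k-1$, while adding $u$ to $D$ (in the $u\notin D$ case) gives $|N_G(v)\cap(D\cup\{u\})|=k+1$; and if $v=u$, then $u\notin N_G(v)$, so both $|N_G(v)\cap(D\setminus\{u\})|$ and $|N_G(v)\cap(D\cup\{u\})|$ equal $k$. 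Note also that when $v\neq u$, the conditions ``$v\in D\setminus\{u\}$'' and ``$v\in D\cup\{u\}$'' both coincide with ``$v\in D$''.

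First I handle the case $u\in D$, $v\neq u$. Since $D$ is a \Red $(\sigma,\rho)$-dominating set and $v\in\Blue$, we have $k\in\sigma$ if $v\in D$ and $k\in\rho$ if $v\notin D$. The vertex $v$ fails to be $(\sigma,\rho)$-dominated by $D\setminus\{u\}$ iff, using the previous paragraph, $k-1\notin\sigma$ (if $v\in D$) respectively $k-1\notin\rho$ (if $v\notin D$). Combining the two constraints yields exactly $k\in\sigma^-$ or $k\in\rho^-$ by definition of $\sigma^-,\rho^-$. The case $u\notin D$, $v\neq u$ is symmetric: the change is $k\mapsto k+1$, so the conditions become $k\in\sigma$ and $k+1\notin\sigma$ (when $v\in D$) or $k\in\rho$ and $k+1\notin\rho$ (when $v\notin D$), i.e.\ $k\in\sigma^+$ or $k\in\rho^+$.

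Next I treat the self-certification cases. If $u=v$ and $u\in D$, then $v$ must belong to $\Blue$ and to $D$, and $v\notin D\setminus\{u\}$, so $v$ is a certificate for itself iff $k\notin\rho$; combined with $v\in D\cap\Blue\Rightarrow k\in\sigma$, this is precisely $k\in\sigma\setminus\rho=\sigma^*$. If $u=v$ and $u\in\Red\setminus D$, then for $v$ to be a certificate it must lie in $\Blue\setminus D$, and $v\in D\cup\{u\}$, so the certificate condition is $k\notin\sigma$; combined with $k\in\rho$ (from $v\in\Blue\setminus D$), this gives $k\in\rho\setminus\sigma=\rho^*$.

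There is no real obstacle here beyond bookkeeping; the only subtle point I want to flag in the write-up is that for $v=u$ the cardinality $k$ does \emph{not} change when toggling $u$ (because $u\notin N_G(u)$), which is what forces the appearance of $\sigma^*,\rho^*$ rather than $\sigma^-,\rho^-$ or $\sigma^+,\rho^+$ in the self-certification cases.
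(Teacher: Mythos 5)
Your overall strategy is exactly the intended one: the paper states Lemma~\ref{lem:cert-intersect} without proof, as a direct unpacking of the definition of certificates combined with the observation that toggling the membership of $u$ changes $|N_G(v)\cap D|$ by exactly one when $u\in N_G(v)$ and not at all when $v=u$. Your handling of the two self-certification cases (yielding $\sigma^*$ and $\rho^*$) is complete and correct, and your bookkeeping of the constraints coming from domination ($k\in\sigma$ or $k\in\rho$) versus failure of domination after the toggle ($k-1\notin\sigma$, $k+1\notin\rho$, etc.) is precisely what produces $\sigma^-,\rho^-,\sigma^+,\rho^+$.

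There is, however, a circularity in your ``if'' direction for the first two equivalences. You derive $v\in N_G[u]$ (hence $u\in N_G(v)$, since $v\neq u$) from the assumption that $v$ \emph{is} a certificate for $u$, and then use the resulting identities $|N_G(v)\cap(D\setminus\{u\})|=k-1$, respectively $|N_G(v)\cap(D\cup\{u\})|=k+1$, inside an ``iff''. In the direction where you assume only the numeric condition and must conclude that $v$ is a certificate, the adjacency of $u$ and $v$ is not available, and it cannot be dispensed with: if $u\notin N_G(v)$ and $v\neq u$, then $|N_G(v)\cap(D\setminus\{u\})|=|N_G(v)\cap D|$, so $v$ remains $(\sigma,\rho)$-dominated after removing (or adding) $u$ and is not a certificate for $u$, even though, say, $k\in\sigma^-$ is perfectly possible. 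So the backward implication really requires the hypothesis $v\in N_G(u)$. This imprecision is inherited from the lemma statement itself --- in the paper the lemma is only ever applied to pairs with $v\in N_G(u)$, in tandem with Lemma~\ref{lem:cert-assign} and the remark that a certificate for $u$ lies in $N_G[u]$ --- but your write-up should state that hypothesis explicitly in the backward direction rather than borrow it from the forward one; with it, your computation goes through verbatim. A minor point: the fact that certificates lie in $N_G[u]$ is not Lemma~\ref{lem:cert} (which characterizes $1$-minimality and $1$-maximality) but the remark preceding it, which you already acknowledge parenthetically.
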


Let $d\in \bN$ and let $A$ be a subset of the vertex set of a colored graph $G$. Two red subsets $X$ and $Y$ of $A$ are \emph{d-neighbor equivalent \wrt $A$}, denoted by $X\equiv_A^d Y$, if for all
$x\in \comp{A}\cap \Blue$ we have
\begin{align*}
  \min(d,|X\cap N_G(x)|) = \min(d,|Y\cap N_G(x)|).
\end{align*}

It is not hard to check that $\equiv_A^d$ is an equivalence relation and et us denote by $nec(\equiv_A^d)$ the number of equivalence classes of $\equiv_A^d$.  Belmonte and Vatshelle~\cite{BelmonteV13}
proved the following lemma restated in our setting.

\begin{lem}[\cite{BelmonteV13}]\label{lem:mim-neighborhoods} Let $d\in \bN$  and let $A$ be a subset of the vertex set of a colored graph $G$ such that $\mim_G(A\cap \Red,\comp{A}\cap \Blue)\leq
  k$. Then $nec(\equiv_{A}^d)\leq n^{d\cdot k}$.
\end{lem}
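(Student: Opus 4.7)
My plan is to prove this lemma via a \emph{representative lemma}: every $\equiv_A^d$-equivalence class contains a set $X \subseteq A \cap \Red$ with $|X| \leq d \cdot k$. Granted this, the stated bound $nec(\equiv_A^d) \leq n^{d \cdot k}$ follows by a standard counting of the subsets of the $n$-vertex set of size at most $d\cdot k$.

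To prove the representative lemma, fix a class and let $X$ be a minimum-cardinality representative of it. For every $x \in X$, minimality ensures that $X \setminus \{x\} \not\equiv_A^d X$, so there exists a \emph{witness} $v_x \in \comp{A} \cap \Blue$ with $\min(d, |X \cap N_G(v_x)|) \neq \min(d, |(X \setminus \{x\}) \cap N_G(v_x)|)$. This forces $x \in N_G(v_x)$ and $|X \cap N_G(v_x)| \leq d$, for otherwise both minima equal $d$. Let $W := \{v_x : x \in X\}$. The bipartite subgraph $H' := G[X, W]$ then has two useful properties: every $v \in W$ has $\deg_{H'}(v) = |X \cap N_G(v)| \leq d$, while every $x \in X$ has $\deg_{H'}(x) \geq 1$ via its witness $v_x$.

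The technical heart of the argument is to show that any bipartite graph $H'$ between $X$ and $W$ satisfying the two degree conditions above has $\mim(H') \geq |X|/d$, which I would prove by induction on $|X|$. Pick $x^* \in X$ of minimum degree in $H'$ and an arbitrary neighbor $v^* \in N_{H'}(x^*)$; put $x^* v^*$ into the matching and restrict to $X' := X \setminus N_{H'}(v^*)$ and $W' := W \setminus N_{H'}(x^*)$. Then $|X'| \geq |X| - d$ since $\deg_{H'}(v^*) \leq d$, and crucially every $x' \in X'$ still has a neighbor in $W'$: otherwise $N_{H'}(x') \subseteq N_{H'}(x^*)$, which by minimality of $\deg_{H'}(x^*)$ forces equality of these neighborhoods, whence $v^* \in N_{H'}(x')$ would contradict $x' \notin N_{H'}(v^*)$. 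The induction hypothesis then yields an induced matching of size $\geq (|X|-d)/d = |X|/d - 1$ in $H'[X' \cup W']$, and adjoining $x^* v^*$ gives one of size $\geq |X|/d$ in $H'$.

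Since $H'$ is a subgraph of $G[A \cap \Red, \comp{A} \cap \Blue]$, we obtain $|X|/d \leq \mim(H') \leq k$, hence $|X| \leq d k$, completing the representative lemma. The main obstacle I expect is the inductive step above: it is essential to pick $x^*$ of \emph{minimum} degree rather than an arbitrary vertex, because only then does the neighborhood-equality argument rule out newly isolated vertices in $X'$ and keep the induction hypothesis applicable in the recursive call.
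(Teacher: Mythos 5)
The paper does not prove this lemma at all---it is imported from \cite{BelmonteV13}---and your argument is a correct self-contained proof that follows essentially the same route as the cited source: every $\equiv_A^d$-class has a representative of size at most $d\cdot k$, obtained by taking a minimum-cardinality representative $X$, assigning to each $x\in X$ a witness $v_x\in\comp{A}\cap\Blue$ with $x\in N_G(v_x)$ and $|X\cap N_G(v_x)|\le d$, and greedily extracting an induced matching of size at least $|X|/d$ (your minimum-degree choice of $x^*$ indeed makes the inductive step go through). The only cosmetic looseness is the final count: the number of subsets of size at most $d\cdot k$ is $\sum_{i\le dk}\binom{n}{i}$, which can exceed $n^{dk}$ only in degenerate cases such as $dk=1$, the same harmless imprecision already present in the statement as cited.
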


The next lemma is used to bound the number of information we have to store at each node of the DAG, which will consequently imply, combined with Lemma \ref{lem:mim-neighborhoods}, that the size of the
DAG is polynomial in the size of $G$.

\begin{lem}[\cite{BelmonteV13}]\label{lem:cert-bound} Let $G$ be a colored graph and let $A$ be a subset of $V(G)$. Then, $\mim_G(A\cap \Red,\comp{A}\cap\Blue) \leq k$ if and only if for every blue
  subset $S$ of $\comp{A}$ there is $C\subseteq S$ such that $N(C)\cap (A\cap \Red)=N(S)\cap (A\cap \Red)$ and $|C|\leq
  k$. 
\end{lem}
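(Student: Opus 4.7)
The plan is to prove the equivalence by two contrapositive-style arguments: both directions are witnessed by the same kind of object, a sunflower-like matching extracted from a set that witnesses the failure of the other side.

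For the ``only if'' direction, I will start from a blue set $S\subseteq\comp{A}\cap\Blue$ and take $C\subseteq S$ to be an \emph{inclusion-minimal} subset satisfying $N(C)\cap(A\cap\Red)=N(S)\cap(A\cap\Red)$ (such a $C$ exists because $S$ itself is a candidate). By minimality, for every $c\in C$ there is a witness $r_c\in A\cap\Red$ with $r_c\in N(c)$ but $r_c\notin N(C\setminus\{c\})$. The matching $M=\{cr_c : c\in C\}$ is then automatically induced in the bipartite graph $G[A\cap\Red,\comp{A}\cap\Blue]$: any edge $c'r_c$ with $c'\in C\setminus\{c\}$ would contradict $r_c\notin N(C\setminus\{c\})$, and distinct $r_c$'s cannot coincide for the same reason. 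Hence $|M|=|C|\leq\mim_G(A\cap\Red,\comp{A}\cap\Blue)\leq k$, giving the desired bound on $|C|$.

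For the ``if'' direction, I will argue by contrapositive. Assuming $\mim_G(A\cap\Red,\comp{A}\cap\Blue)\geq k+1$, pick an induced matching $\{a_1b_1,\dots,a_{k+1}b_{k+1}\}$ with $a_i\in A\cap\Red$ and $b_i\in\comp{A}\cap\Blue$, and let $S=\{b_1,\dots,b_{k+1}\}$. By hypothesis there exists $C\subseteq S$ with $|C|\leq k$ and $N(C)\cap(A\cap\Red)=N(S)\cap(A\cap\Red)$. Since $|C|<|S|$, some $b_i$ is omitted from $C$, but $a_i\in N(b_i)\cap(A\cap\Red)\subseteq N(S)\cap(A\cap\Red)=N(C)\cap(A\cap\Red)$, so $a_i$ is adjacent to some $b_j\in C$ with $j\neq i$, contradicting the induced-matching property.

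There is essentially no obstacle here: the statement is just the standard observation that a VC-type ``spanning'' subset of one side of a bipartite graph can be compressed to size at most the induced matching number. The only care needed is to keep the red/blue restrictions consistent throughout, so that the witnessing edges $cr_c$ and $a_ib_j$ indeed live inside $G[A\cap\Red,\comp{A}\cap\Blue]$; this is automatic from the choices $c,b_i\in\comp{A}\cap\Blue$ and $r_c,a_i\in A\cap\Red$.
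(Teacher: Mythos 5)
Your proof is correct, and in fact the paper gives no proof of this lemma at all -- it is quoted from Belmonte and Vatshelle -- so your argument fills in exactly the standard reasoning behind the cited result: an inclusion-minimal $C$ with $N(C)\cap(A\cap\Red)=N(S)\cap(A\cap\Red)$ yields private red witnesses forming an induced matching in $G[A\cap\Red,\comp{A}\cap\Blue]$ of size $|C|$, and conversely the blue side of a large induced matching cannot be compressed. Both directions check out, including the verification that the witnessing edges lie in the correct bipartite graph, so there is nothing to fix.
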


\subsection{Constructing the DAG for $1$-minimal sets}\label{subsec:mindag-defn}

Throughout this section we let $(\sigma,\rho)$ be a fixed pair of finite or co-finite subsets of $\bN$ and we let $G$ be a fixed $n$-vertex colored graph with $n\geq 2$. Let also $x_1,\ldots,x_n$ be a fixed
linear ordering of the vertex set of $G$ such that the MIM-width of $x_1,\ldots,x_n$ is bounded by a constant $c$.  Furthermore, for all $i\in \{1,2,\ldots ,n\}$, we let $A_i=\{x_1,x_2,\ldots x_i\}$
and $\overline{A_i}=\{x_{i+1},x_{i+2},\ldots x_n\}$. We furthermore let $d=d(\sigma,\rho)$.

We will follow the same idea as in \cite{BuiXuanTV13} where a minimum (or a maximum) $(\sigma,\rho)$-dominating set is computed, and we need for that to recall some definitions and lemmas (restated in
our setting) proved in \cite{BuiXuanTV13}.  For every $i\in\{1,\ldots,n\}$ and every subset $X$ of $A_i\cap \Red$, we denote by $rep_{A_i}^d(X)$ the lexicographically smallest set
$R\subseteq A_i\cap \Red$ such that $|R|$ is minimised and $R\equiv_{A_i}^d X$. Notice that it can happen that $R=\emptyset$.

\begin{lem}[\cite{BuiXuanTV13}] \label{lem:rep} For every $i\in \{1,\ldots,n\}$, one can compute a list $LR_i$ containing all representatives \wrt $\equiv_{A_i}^d$ in time
  $O(nec(\equiv_{A_i}^d)\cdot \log(nec(\equiv_{A_i}^d))\cdot n^2)$.  One can also compute a data structure that given a set $X\subseteq A_i\cap \Red$ in time
  $O(\log(nec(\equiv_{A_i}^d))\cdot |X|\cdot n)$ allows us to find a pointer to $rep_{A_i}^d(X)$ in $LR_i$. Similar statements hold for the list $LR_{\comp{i}}$ containing all representatives \wrt
  $\equiv_{\comp{A_i}}^d$.
\end{lem}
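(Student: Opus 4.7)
The plan is to follow the dynamic-programming strategy of Bui-Xuan, Telle, and Vatshelle~\cite{BuiXuanTV13} and build $LR_i$ inductively for $i = 0, 1, \ldots, n$, starting from $LR_0 = \{\emptyset\}$. Throughout, I would maintain a balanced binary search tree (or trie) whose keys are the $d$-signatures $\vec{s}_i(X) := (\min(d,|X\cap N_G(v)|))_{v\in\overline{A_i}\cap\Blue}$ and whose entries store, for each class of $\equiv_{A_i}^d$, its unique representative (the lex-smallest of minimum size). Signatures are stored alongside their representatives so that step $i$ can reuse them.

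The central structural claim I need is: every equivalence class of $\equiv_{A_i}^d$ contains a set of the form $R'\cup Y$ with $R'\in LR_{i-1}$ and $Y\subseteq \{x_i\}\cap \Red$, and moreover the \emph{true} representative $rep_{A_i}^d(X)$ itself decomposes this way. To prove the first part, I would take any $X\subseteq A_i\cap\Red$, write $X = X'\cup Y$ with $X' := X\cap A_{i-1}$ and $Y := X\cap\{x_i\}$, let $R'$ be the representative of $X'$ in $LR_{i-1}$, and verify $R'\cup Y \equiv_{A_i}^d X$. For $v\in\overline{A_i}\cap\Blue \subseteq \overline{A_{i-1}}\cap\Blue$, this reduces to the elementary fact that $\min(d,a)=\min(d,a')$ implies $\min(d,a+b)=\min(d,a'+b)$ for $b\in\{0,1\}$, which I would prove by splitting on whether $a,a'<d$ (then $a=a'$) or $a,a'\geq d$. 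For the second part, a short size/lex-comparison against $R'$ itself (using the lex order induced by $x_1,\ldots,x_n$) shows that if $R := rep_{A_i}^d(X)$ and $R' := R\cap A_{i-1}$ had a strictly smaller or lex-earlier equivalent set in $LR_{i-1}$, then gluing $Y$ back on would contradict the minimality of $R$.

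Given this claim, step $i$ of the algorithm forms the candidate set $\{R' : R' \in LR_{i-1}\} \cup \{R' \cup \{x_i\} : R' \in LR_{i-1}\}$ (the second part only if $x_i\in\Red$), computes each new signature in $O(n)$ time incrementally from the stored signatures of $LR_{i-1}$ (drop the $x_i$-coordinate when $x_i\in\Blue$, and for augmented candidates increment, capped at $d$, the coordinates at $v\in N_G(x_i)\cap \overline{A_i}\cap\Blue$), and inserts each candidate into the tree keeping per key the smallest-then-lex-smallest one. By Lemma~\ref{lem:mim-neighborhoods} both $|LR_{i-1}|$ and $|LR_i|$ are $O(nec(\equiv_{A_i}^d))$ (up to the bound $n^{dc}$), and each tree operation compares signatures of length at most $n$, so step $i$ runs in $O(nec(\equiv_{A_i}^d)\cdot \log(nec(\equiv_{A_i}^d))\cdot n^2)$ time. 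For the lookup data structure, given a query $X\subseteq A_i\cap\Red$, one computes $\vec{s}_i(X)$ in $O(|X|\cdot n)$ time by iterating over $X$ and updating the counters of its blue neighbors, and then binary-searches in the tree in $O(\log(nec(\equiv_{A_i}^d))\cdot n)$ time (each comparison costs $O(n)$), totalling $O(\log(nec(\equiv_{A_i}^d))\cdot |X|\cdot n)$. The analogous construction with $\overline{A_i}$ swapped for $A_i$ (and \Red, \Blue roles exchanged) yields $LR_{\overline{i}}$.

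The main obstacle I expect is the minimality-preservation half of the structural claim: the arithmetic on $\min(d,\cdot)$ is routine, but one must set up the lexicographic order on subsets of $A_i$ so that it is consistent with the restriction to $A_{i-1}$, and then argue that substituting the $LR_{i-1}$-representative $R'$ for $X\cap A_{i-1}$ cannot produce a set that is strictly smaller or lex-earlier than the intended $rep_{A_i}^d(X)$. The rest is careful but routine bookkeeping, and the total time and query bounds fall out of Lemma~\ref{lem:mim-neighborhoods}.
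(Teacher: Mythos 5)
The paper itself gives no proof of this lemma: it is imported verbatim from \cite{BuiXuanTV13}, and the cited proof works cut by cut. For a fixed $i$ it grows the list for $\equiv_{A_i}^d$ by a BFS starting from $\emptyset$, augmenting already-listed representatives by one \emph{arbitrary} vertex of $A_i\cap\Red$ and using a balanced tree keyed by the capped neighbourhood vectors to discard classes already seen; since only $nec(\equiv_{A_i}^d)\cdot n$ candidates are ever generated, the per-$i$ bound $O(nec(\equiv_{A_i}^d)\log(nec(\equiv_{A_i}^d))\,n^2)$ falls out directly. Your route is genuinely different: you exploit the linear order and build $LR_i$ from $LR_{i-1}$ by adding only $x_i$, resting on the structural claim that $rep_{A_i}^d(X)\cap A_{i-1}$ is itself an $\equiv_{A_{i-1}}^d$-representative. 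That claim is correct: the $\min(d,\cdot)$ arithmetic works because $x_i$ lies outside both sets being compared, equivalence \wrt $A_{i-1}$ implies equivalence \wrt $A_i$ (the test set $\comp{A}_i\cap\Blue$ shrinks), and the usual lexicographic order on subsets is compatible with appending the common top element $x_i$, so a smaller or lex-earlier equivalent of $R\cap A_{i-1}$ would contradict the minimality of $R$. Hence your candidate set $\{R', R'\cup\{x_i\} : R'\in LR_{i-1}\}$ contains every true representative, keeping the size-then-lex minimum per signature key is sound, and your query structure is the same signature-tree idea as in the cited proof.

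Two caveats. First, your runtime accounting does not deliver the lemma as stated: producing $LR_i$ your way requires all of $LR_0,\dots,LR_{i-1}$, and $|LR_j|=nec(\equiv_{A_j}^d)$ for $j<i$ is \emph{not} bounded in terms of $nec(\equiv_{A_i}^d)$ (a later cut can have far fewer classes than an earlier one); Lemma~\ref{lem:mim-neighborhoods} only supplies the uniform bound $n^{dc}$ valid for the cuts of this particular ordering. So you obtain a cumulative bound of order $\sum_{j\le i} nec(\equiv_{A_j}^d)\log(nec(\equiv_{A_j}^d))\,n$ rather than the stated per-index bound; this is harmless for how the paper uses the lemma (Lemma~\ref{lem:size-dag} only needs $O(n^{c\cdot d})$ overall), but the per-cut BFS of \cite{BuiXuanTV13} is what yields the bound for each $i$ in isolation. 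Second, a small slip: for $LR_{\comp{i}}$ the roles of \Red and \Blue are not exchanged --- $\equiv_{\comp{A_i}}^d$ still compares red subsets of $\comp{A}_i$ against the blue vertices of $A_i$; what changes is only that your induction runs from the other end of the ordering.
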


We will define a DAG, denoted by $DAG(G)$, the maximal paths of which correspond exactly to the $1$-minimal \Red $(\sigma,\rho)$-dominating sets of $G$.

\medskip

For $1\leq j \leq n$ and $C\subseteq A_j\cap \Blue$ (or $C\subseteq \comp{A}_j\cap \Blue$) we denote by $\cSG_j(C)$ (or by $\cGG_j(C)$) the set $X$ obtained from $C$ if we we initially set $X=C$ and
recursively apply the following rule: let $x$ be the greatest (or smallest) vertex in $X$ such that $N(X\setminus \{x\})\cap (\comp{A}_j \cap \Red)= N(X)\cap (\comp{A}_j\cap \Red)$ (or
$N(X\setminus \{x\})\cap (A_j\cap \Red) =N(X)\cap (A_j\cap \Red)$) and set $X=X\setminus \{x\}$.  Notice that $\cSG_j(C)$ and $\cGG_j(C)$ are both uniquely determined, and both have sizes bounded by
$c$ from Lemma \ref{lem:cert-bound}.  Observe also that if $C\subseteq A_j\cap \Blue$ (or $C\subseteq \comp{A}_j\cap \Blue$), then $\cSG_\ell(C\cup \{x_\ell\}) = \cSG_\ell(\cSG_j(C)\cup \{x_\ell\})$ for all
$\ell >j$ (or $\cGG_\ell(C\cup \{x_\ell\})= \cGG_\ell(\cGG_j(C)\cup \{x_\ell\})$ for all $\ell \leq j$). The constructors $\cSG_j$ and $\cGG_j$ are used to canonically choose certificates in order to
avoid redundancies.

Let $1\leq j < n$ and let $(R_j,R_j',C_j,C_j')\in LR_j\times LR_{\bar{j}}\times 2^{A_j\cap \Blue} \times 2^{\comp{A}_j\cap \Blue}$ and
$(R_{j+1},R_{j+1}',C_{j+1},C_{j+1}')\in LR_{j+1}\times LR_{\comp{j+1}}\times 2^{A_{j+1}\cap \Blue}\times 2^{\comp{A}_{j+1}\cap \Blue}$.  There is an \emph{$\varepsilon$-arc-1} from
$(R_j,R_j',C_j,C_j')$ to $(R_{j+1},R_{j+1}',C_{j+1},C_{j+1}') $ if

\begin{enumerate}[leftmargin=0.9cm]
\item[(1.1)] $R_j\equiv_{A_{j+1}}^d R_{j+1}$ and $R_j'\equiv_{\comp{A}_j}^d R_{j+1}'$, and 
\item[(1.2)] if ($x_{j+1}\notin \Blue$ or ($x_{j+1}\in \Blue$ and $|N(x_{j+1})\cap (R_j\cup R_{j+1}')|\in \rho$ and $|N(x_{j+1})\cap (R_j\cup R_{j+1}')|\notin \rho^-$)) then ($C_{j+1}=\cSG_{j+1}(C_j)$
  and $C_j'=\cGG_j(C_{j+1}')$), otherwise we should have ($|N(x_{j+1})\cap (R_j\cup R_{j+1}')|\in \rho^-$) and 
  \begin{enumerate}
  \item[(1.2.a)] if $N(x_{j+1})\cap (\comp{A}_{j+1}\cap \Red)\ne \emptyset$, then $C_{j+1}=\cSG_{j+1}(C_j\cup \{x_{j+1}\})$, else $C_{j+1}=\cSG_{j+1}(C_j)$, and 
  \item[(1.2.b)] if $N(x_{j+1}) \cap (A_j \cap \Red)\ne \emptyset$, then $C_j'=\cGG_j(C_{j+1}'\cup \{x_{j+1}\})$, else $C_j'=\cGG_j(C_{j+1}')$.
\end{enumerate}
\end{enumerate}

There is an \emph{$\varepsilon$-arc-2} from $(R_j,R_j',C_j,C_j')$ to $(R_{j+1},R_{j+1}',C_{j+1},C_{j+1}')$ if
\begin{enumerate}[leftmargin=0.9cm]
\item[(2.1)] $R_{j+1}\equiv_{A_{j+1}}^d (R_j\cup \{x_{j+1}\})$, $R_j'\equiv_{\comp{A}_j}^d (R_{j+1}'\cup \{x_{j+1}\})$, $x_{j+1}\in \Red$, ($|N(x_{j+1})\cap (R_j\cup R_{j+1}')|\in \sigma$ if
  $x_{j+1}\in \Blue$), and 
\item[(2.2)] if ($x_{j+1}\notin \Blue$ or ($x_{j+1}\in \Blue$ and $|N(x_{j+1})\cap (R_j\cup R_{j+1}')|\notin \sigma^-$)), then
  ($C_{j+1}=\cSG_{j+1}(C_j)$ and $C_j'=\cGG_j(C_{j+1}')$), otherwise we should have ($|N(x_{j+1})\cap (R_j\cup R_{j+1}')|\in \sigma^-$) and
  \begin{enumerate}
  \item[(2.2.a)] if $N(x_{j+1})\cap (\comp{A}_{j+1}\cap \Red)\ne \emptyset$, then $C_{j+1}=\cSG_{j+1}(C_j\cup \{x_{j+1}\})$, else $C_{j+1}=\cSG_{j+1}(C_j)$, and
  \item[(2.2.b)] if $N(x_{j+1}) \cap (A_j \cap \Red)\ne \emptyset$, then $C_j'=\cGG_j(C_{j+1}'\cup \{x_{j+1}\})$, else $C_j'=\cGG_j(C_{j+1}')$, and 
\end{enumerate}
\item[(2.3)] either ($N(x_{j+1})\cap (C_j \cup C_{j+1}' )\ne \emptyset$) or ($(x_{j+1}\in \Blue$ and $|N(x_{j+1})\cap (R_j\cup R_{j+1}')|\in \sigma^*$).
\end{enumerate}

\paragraph{\bf The nodes of $DAG(G)$.}  $(R,R',C,C',i)\in LR_i\times LR_{\bar{i}}\times 2^{A_i\cap \Blue}\times 2^{\comp{A}_i\cap \Blue}\times [n]$ is a node of $DAG(G)$ whenever $x_i\in \Red$, 
$C=\cSG_i(C)$ and $C'=\cGG_i(C')$.  We call $i$ the \emph{index} of $(R,R',C,C',i)$. Finally  $s=(\emptyset,\emptyset,\emptyset,\emptyset,0)$ is the  \emph{source node} and $t=(\emptyset,\emptyset,\emptyset,\emptyset,n+1)$ is the \emph{terminal node} of $DAG(G)$.

\paragraph{\bf The arcs of $DAG(G)$.}  There is an arc from the node $(R_0,R_0',C_0,C_0',j)$ to the node $(R_p,R_p',C_p,C_p',j+p)$ with $1\leq j < j+p\leq n$ if there exist tuples
$(R_1,R_1',C_1,C_1')$, \ldots, $(R_{p-1},R_{p-1}',C_{p-1},C_{p-1}')$ such that (1)~ for each $1\leq i \leq p-1$,
$(R_{i},R_{i}',C_{i},C_{i}')\in LR_{j+i}\times LR_{\widebar{j+i}}\times 2^{A_{j+i}\cap \Blue}\times 2^{\comp{A}_{j+i}\cap \Blue}$ and there is an $\varepsilon$-arc-1 from
$(R_{i-1},R_{i-1}',C_{i-1},C_{i-1}')$ to $(R_{i},R_{i}',C_{i},C_{i}')$, and~ (2) there is an $\varepsilon$-arc-2 from $(R_{p-1},R_{p-1}',C_{p-1},C_{p-1}')$ to $(R_p,R_p',C_p,C_p')$.
\medskip
 
There is an arc from the source node to a node $(R,R',C,C',j)$ if
({\small $S=\{x\in (A_j\cap \Blue)\setminus \{x_j\}\mid N(x)\cap (\comp{A}_j\cap \Red)\ne \emptyset$ and $|N(x)\cap (\{x_j\}\cup R')|\in \rho^-\}$})
\begin{enumerate}[leftmargin=0.8cm]
\item[(S1)] $\{x_j\} \equiv_{A_j}^d R$ and $(\{x_j\} \cup R')$ $(\sigma,\rho)$-dominates $A_j\cap \Blue$,
\item[(S2)] if ($x_j\in \Blue$ and $|N(x_j)\cap R'|\in \sigma^-$) then $C=\cSG_j(S\cup \{x_j\})$, otherwise $C=\cSG_j(S)$, and
\item[(S3)] either ($N(x_j)\cap (C'\cup C)\ne \emptyset$) or ($x_j\in \Blue$ and $|N(x_j)\cap R'|\in \sigma^*$).
\end{enumerate}

\medskip 
There is an arc from a node  $(R,R',C,C',j)$ to the terminal node if 
\begin{enumerate}[leftmargin=0.8cm]
\item[(T1)] $|N(x)\cap R|\in \rho$ for each $x\in \comp{A}_{j+1}\cap \Blue$, and
\item[(T2)] $C'=\cGG_j(\{x\in \comp{A}_j\cap \Blue\mid N(x)\cap (A_j\cap \Red)\ne \emptyset\ \text{and}\ |N(x)\cap R|\in \rho^-\})$. 
\end{enumerate}

\begin{lem}\label{lem:size-dag} $DAG(G)$ is a DAG and can be constructed in time $O(n^{c\cdot d})$.\end{lem}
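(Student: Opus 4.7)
The plan is to establish acyclicity first, then bound the size of the node set, and finally describe how to produce the arcs within the claimed time. Acyclicity is immediate: every arc of $DAG(G)$, together with the incoming arcs from the source $s$ and the outgoing arcs to the terminal $t$, strictly increases the last (index) coordinate, so $DAG(G)$ cannot contain a directed cycle.

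To bound the number of nodes, I would use Lemma~\ref{lem:mim-neighborhoods} on both sides of every cut: since the linear ordering has MIM-width at most $c$, $|LR_i|\le n^{dc}$ and $|LR_{\bar i}|\le n^{dc}$. The fourth coordinate in a node satisfies $C=\cSG_i(C)$, and by Lemma~\ref{lem:cert-bound} any such $C$ has $|C|\le c$; the same holds for $C'=\cGG_i(C')$. So for each index $i$ there are at most $n^{2dc}\cdot n^{2c}$ nodes, and with $n+2$ indices the total number of nodes of $DAG(G)$ is $n^{O(cd)}$; this is what the statement's $O(n^{c\cdot d})$ should be read as.

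To compute the arcs I would first invoke Lemma~\ref{lem:rep} to precompute $LR_i$, $LR_{\bar i}$ and the lookup data structures for every $i$, giving polynomial-time membership tests for $\equiv_{A_i}^d$ and $\equiv_{\bar A_i}^d$. I would then form an auxiliary graph $H$ whose vertices are all admissible tuples $(R,R',C,C',i)$ for every index $i$ (not just those with $x_i\in\Red$), with two kinds of edges corresponding to $\varepsilon$-arc-$1$ and $\varepsilon$-arc-$2$ between consecutive indices. Each candidate edge can be tested in polynomial time: the equivalences $R_j\equiv_{A_{j+1}}^d R_{j+1}$ and $R_j'\equiv_{\bar A_j}^d R_{j+1}'$ via the precomputed structures, the membership conditions on $|N(x_{j+1})\cap(R_j\cup R_{j+1}')|$ using $d(\sigma^\pm,\rho^\pm)\le d+1$ from Lemma~\ref{lem:set-bound}, and the canonical closures $\cSG_{j+1}$ and $\cGG_j$ by the greedy removal rule (which stops after at most $c$ steps by Lemma~\ref{lem:cert-bound}). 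The arcs of $DAG(G)$ are then read off from $H$: for each node $u=(R,R',C,C',j)$ of $DAG(G)$, perform a forward BFS in $H$ along $\varepsilon$-arc-$1$ edges, and for every reached tuple $w$ and every $\varepsilon$-arc-$2$ edge $w\to v$, add the arc $u\to v$ to $DAG(G)$ (note $v$ then automatically satisfies $x_{j+p}\in\Red$ by condition~(2.1)). Incidences with the source $s$ and the terminal $t$ are tested directly from (S1)--(S3) and (T1)--(T2), all of which are polynomial checks.

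The main subtlety I would anticipate is that an $\varepsilon$-arc-$1$ step does not determine the target uniquely: the components $R_{j+1}'$ and $C_{j+1}'$ are constrained only through equivalences on $\bar A_j$ and through $\cGG_j$-preimages, so the BFS branches. Handling this correctly (marking visited tuples so each is expanded only once) keeps the search within $|V(H)|=n^{O(cd)}$ per source and gives the overall construction time $n^{O(cd)}$; I would phrase the final time bound exactly as in the statement, noting that the exponents $2dc+2c+1$ absorbed in the $O$-notation collapse to $n^{O(cd)}$.
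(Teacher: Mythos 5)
Your proof is correct and follows essentially the same route as the paper: acyclicity from the strictly increasing index, the node bound from Lemma~\ref{lem:mim-neighborhoods} together with the size-$c$ bound on $C,C'$ from Lemma~\ref{lem:cert-bound}, and arc construction by a forward layered search along $\varepsilon$-arc-1 edges terminated by an $\varepsilon$-arc-2 edge, with membership tests done via Lemma~\ref{lem:rep}; your BFS with visited-marking is exactly the paper's queue-based propagation through the tuples $\cF_0,\ldots,\cF_{p-1}$. Your explicit reading of the bound as $n^{O(c\cdot d)}$ is a fair (and slightly more careful) account of the exponent the paper absorbs into $O(n^{c\cdot d})$.
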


\begin{proof}
  An arc is always oriented from a node $(R,R',C,C',j)$ to 
  $(\hat{R},\hat{R'}, \hat{C},\hat{C'},j+p)$ with $p\geq1$. Therefore, we cannot create circuits, \ie, $DAG(G)$ is a DAG.
 
  For each index $1\leq i \leq n$ and each node $(R,R',C,C',i)$ of index $i$ we know by \cite[Lemma 1]{BelmonteV13} that $|C|,|C'|\leq c$. Hence, the number of nodes of $DAG(G)$ of index $i$ is
  $O(n^{c\cdot d})$ since $nec(\equiv_{A_i}^d)\leq n^{d\cdot c}$ by Lemma \ref{lem:mim-neighborhoods}.  Now, constructing the arcs from the source node can be done in $O(n^{c\cdot d})$ time since it
  suffices to check for each node $(R,R',C,C',j)$ if conditions (S1)-(S3) are satisfied, which can be done trivially in polynomial time with the help of Lemma \ref{lem:rep}. Similarly, since the
  conditions (T1) and (T2) can be checked in polynomial time, the incoming arcs to the terminal node can be constructed in $O(n^{c\cdot d})$ time.

  Now, to construct an arc from $(R_0,R'_0,C_0,C'_0,j)$ to $(R_p,R'_p, C_p,C'_p,j+p)$ we do as follows. For $0\leq i \leq p-1$ we let $\cF_i$ be a queue and we put
  $(R_0,R_0',C_0,C_0')$ in $\cF_0$. Now, for $1\leq i \leq p-1$, pull $(R_{i-1},R_{i-1}',C_{i-1},C_{i-1}')$ from $\cF_{i-1}$, and for each $(R,R',C,C')$ with $R\in LR_{j+i}$, $R'\in LR_{\bar{j+i}}$,
  $C\subseteq A_{j+i}\cap \Blue$, $C'\subseteq \comp{A}_{j+i}\cap \Blue$ with $|C|,|C'|\leq c$ such that there is an $\varepsilon$-arc-1 from $(R_{i-1},R_{i-1}',C_{i-1},C_{i-1}')$ to
  $(R,R',C,C')$, then put $(R,R',C,C')$ in $\cF_i$.  Now, by the definition of an arc in $DAG(G)$ there is an arc from $(R_0,R'_0,C_0,C'_0,j)$ to $(R_p,R'_p, C_p,C'_p,j+p)$ if and only if there is one\\
  $(R_{p-1},R_{p-1}',C_{p-1},C_{p-1}')$ in $\cF_{p-1}$ such that there is an $\varepsilon$-arc-2 from\\ $(R_{p-1},R_{p-1}',C_{p-1},C_{p-1}')$ to $(R_p,R'_p, C_p,C'_p,j+p)$.  Now, the size of each
  $\cF_i$ is bounded by $O(n^{c\cdot d})$, and since the conditions of $\varepsilon$-arc-1 and $\varepsilon$-arc-2 can be checked in $O(n^{c\cdot d})$ time with the help of Lemma \ref{lem:rep}, 
  we can check if there is an arc between two nodes in time $O(n^{c\cdot d})$.
\end{proof}

We now prove that there is a one-to-one correspondence between the maximal paths of $G$ and the $1$-minimal \Red $(\sigma,\rho)$-dominating sets of $G$. If $P=(s,v_1,v_2,\ldots,v_p,t)$ is a path in
$DAG(G)$, then the \emph{trace of $P$}, denoted by $\trace(P)$, is defined as $\{x_{j_1},x_{j_2},\ldots,x_{j_p}\}$ where for all $i\in \{1,2,\ldots ,p\}$, $j_i$ is the index of the node $v_i$. 

The following two lemmas are implied by the definition of the d-neighbor equivalence and Lemma~\ref{lem:set-bound}.

\begin{lem} \label{lem:equiv} Let $(\mu,\mu') \in \{(\sigma,\rho),(\sigma^*,\rho^*),(\sigma^-,\rho^-),(\sigma^+,\rho^+)\}$. Let also $i\in \{1,\ldots,n\}$, and $X\subseteq A_i\cap \Red$ and
  $Y,Y'\subseteq \overline{A_i}\cap \Red$.  If $Y'\equiv_{\overline{A}_{i}}^d Y$ then $X\cup Y$ $(\mu,\mu')$-dominates $A_i\cap \Blue$ if and only if $X\cup Y'$ $(\mu,\mu')$-dominates $A_i\cap
  \Blue$.
  Symmetrically, if $X,X'\subseteq A_i\cap \Red$ and $Y\subseteq \overline{A_i}\cap \Red$, and $X'\equiv_{A_{i}}^d X$, then $X\cup Y$ $(\mu,\mu')$-dominates $\comp{A}_i\cap \Blue$
  if and only if $X'\cup Y$ $(\mu,\mu')$-dominates $\comp{A}_i\cap \Blue$.
\end{lem}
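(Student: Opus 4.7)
The plan is a vertex-wise reduction followed by a short calculation using the $d$-neighbor equivalence and Lemma~\ref{lem:set-bound}.

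First, I would observe that $X \cup Y$ $(\mu,\mu')$-dominates $A_i \cap \Blue$ if and only if every $x \in A_i \cap \Blue$ is individually $(\mu,\mu')$-dominated by $X \cup Y$ (and likewise for $Y'$), so it suffices to establish the equivalence one blue vertex at a time. For any such $x$ we have $x \in X \cup Y \Longleftrightarrow x \in X \Longleftrightarrow x \in X \cup Y'$, because $Y, Y' \subseteq \comp{A}_i$ while $x \in A_i$. Consequently the same branch $\nu \in \{\mu, \mu'\}$ of the domination condition (the $\mu$ branch if $x \in X$ and the $\mu'$ branch otherwise) is selected on both sides.

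Second, I would compare the neighborhood counts. Since $X \subseteq A_i \cap \Red$ is disjoint from $Y, Y' \subseteq \comp{A}_i \cap \Red$, setting $a := |N_G(x) \cap X|$, $b := |N_G(x) \cap Y|$, and $b' := |N_G(x) \cap Y'|$, we obtain $|N_G(x) \cap (X \cup Y)| = a + b$ and $|N_G(x) \cap (X \cup Y')| = a + b'$. Applying $Y \equiv_{\comp{A}_i}^d Y'$ at the vertex $x \in A_i \cap \Blue$ yields $\min(d, b) = \min(d, b')$. Hence either $b = b' < d$ (in which case $a + b = a + b'$ and the $\nu$-memberships agree trivially), or $b, b' \geq d$ and therefore $a + b, a + b' \geq d$.

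Third, in the latter case I would invoke the definition of $d(\nu)$: for every integer $n \geq d(\nu)$, membership of $n$ in $\nu$ depends only on whether $\nu$ is finite or co-finite. By Lemma~\ref{lem:set-bound} we have $d(\nu) \leq d$ for $\nu \in \{\sigma, \rho, \sigma^*, \rho^*\}$ and $d(\nu) \leq d + 1$ for $\nu \in \{\sigma^-, \rho^-, \sigma^+, \rho^+\}$, so both $a + b$ and $a + b'$ lie in the constant regime of $\nu$ and must share its membership value. This proves the first equivalence; the symmetric statement follows by the identical argument after interchanging the roles of $A_i$ and $\comp{A}_i$.

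The main obstacle will be the boundary pairs $(\sigma^-, \rho^-)$ and $(\sigma^+, \rho^+)$, where Lemma~\ref{lem:set-bound} only supplies the weaker bound $d(\nu) \leq d + 1$; one has to check carefully that the edge case $a + b = d$ together with $a + b' > d$ cannot cause the two sides to disagree, a verification that boils down to inspecting the shift structure of $\sigma^\pm$ and $\rho^\pm$ at the threshold $d = d(\sigma,\rho)$.
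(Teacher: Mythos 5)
Your vertex-wise reduction and the dichotomy ``either $b=b'<d$, or $b,b'\geq d$'' is exactly the expansion of the paper's one-line justification (the paper offers no written proof beyond saying the lemma follows from the definition of $d$-neighbor equivalence and Lemma~\ref{lem:set-bound}). However, the ``main obstacle'' you defer in your last paragraph is a genuine gap, and the verification you hope for cannot be carried out at the threshold $d=d(\sigma,\rho)$: membership of $a+b$ in $\nu$ is only guaranteed to be constant from $d(\nu)$ on, and for $\nu\in\{\sigma^-,\rho^-,\sigma^+,\rho^+\}$ one can have $d(\nu)=d+1$, in which case the edge case you worry about really does break the equivalence. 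Concretely, take classical domination $\sigma=\bN$, $\rho=\bN^*$, so $d=d(\sigma,\rho)=1$ and $\rho^-=\{1\}$. Let $G$ have a single blue vertex $x_1$ with exactly two red non-blue neighbors $y_1,y_2$, take $i=1$, $X=\emptyset$, $Y=\{y_1\}$, $Y'=\{y_1,y_2\}$. Then $Y\equiv_{\comp{A}_1}^{1}Y'$ because $\min(1,1)=\min(1,2)$, yet $X\cup Y$ $(\sigma^-,\rho^-)$-dominates $A_1\cap\Blue$ (the count is $1\in\rho^-$) while $X\cup Y'$ does not ($2\notin\rho^-$). So no inspection of the ``shift structure'' of $\sigma^{\pm},\rho^{\pm}$ rescues the case $a+b=d$, $a+b'>d$; the statement itself is false for those two pairs with the cap $d=d(\sigma,\rho)$ fixed in Section~\ref{subsec:mindag-defn}.

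The repair, and the role Lemma~\ref{lem:set-bound} is actually meant to play, is to run the $d$-neighbor equivalence at a cap that dominates $d(\mu)$ and $d(\mu')$ for all four pairs, i.e.\ at $d(\sigma,\rho)+1$ rather than $d(\sigma,\rho)$. Once the cap $d$ satisfies $d\geq d(\mu),d(\mu')$, your own dichotomy finishes the proof immediately: either $b=b'$ and the two counts coincide, or $b,b'\geq d$ and both $a+b$ and $a+b'$ lie in the constant regime of the relevant set, so membership agrees. In short, your argument is complete and correct for $(\sigma,\rho)$ and $(\sigma^*,\rho^*)$, but for $(\sigma^-,\rho^-)$ and $(\sigma^+,\rho^+)$ the missing step is not a routine check: it requires strengthening the hypothesis (equivalence at level $d(\sigma,\rho)+1$, which is precisely the headroom Lemma~\ref{lem:set-bound} certifies), not a closer look at the sets $\sigma^{\pm},\rho^{\pm}$ themselves.
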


\begin{lem}\label{lem:combine} Let $(\mu,\mu') \in \{(\sigma,\rho),(\sigma^*,\rho^*),(\sigma^-,\rho^-),(\sigma^+,\rho^+)\}$, $i\in \{1,\ldots,n\}$ and let $Z\subseteq \Red$. Let also
  $X\subseteq A_{i-1}\cap \Red$ and $Y\subseteq \comp{A_{i}}\cap \Red$. If $X \equiv_{A_{i-1}}^d (Z\cap A_{i-1})$ and $Y\equiv_{\overline{A}_{i}}^d (Z\cap \overline{A}_{i})$, then $Z$
  $(\mu,\mu')$-dominates $\{x_i\}$ if and only if $(X\cup Y\cup (Z\cap \{x_{i}\})$ $(\mu,\mu')$-dominates $\{x_i\}$.
\end{lem}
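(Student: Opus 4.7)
The plan is to exploit the partition $V(G) = A_{i-1} \sqcup \{x_i\} \sqcup \comp{A}_i$ together with the fact that $x_i \notin N(x_i)$. First I would decompose
\[
N(x_i) \cap Z = \bigl(N(x_i) \cap Z \cap A_{i-1}\bigr) \sqcup \bigl(N(x_i) \cap Z \cap \comp{A}_i\bigr),
\]
and, using $X \subseteq A_{i-1}$, $Y \subseteq \comp{A}_i$ and $x_i \notin X \cup Y$,
\[
N(x_i) \cap \bigl(X \cup Y \cup (Z \cap \{x_i\})\bigr) = \bigl(N(x_i) \cap X\bigr) \sqcup \bigl(N(x_i) \cap Y\bigr).
\]

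In the relevant case $x_i \in \Blue$ (the case $x_i \notin \Blue$ is handled symmetrically by the colored-graph convention), I would then apply the hypothesis $X \equiv_{A_{i-1}}^d (Z \cap A_{i-1})$ at the vertex $x_i \in \comp{A}_{i-1} \cap \Blue$ and the hypothesis $Y \equiv_{\comp{A}_i}^d (Z \cap \comp{A}_i)$ at the vertex $x_i \in A_i \cap \Blue$, obtaining
\[
\min\bigl(d,\,|N(x_i) \cap X|\bigr) = \min\bigl(d,\,|N(x_i) \cap Z \cap A_{i-1}|\bigr)
\]
and
\[
\min\bigl(d,\,|N(x_i) \cap Y|\bigr) = \min\bigl(d,\,|N(x_i) \cap Z \cap \comp{A}_i|\bigr).
\]
Combined with the two decompositions above, this forces $|N(x_i) \cap Z|$ and $|N(x_i) \cap (X \cup Y \cup (Z \cap \{x_i\}))|$ to be either equal (when either count is below $d$) or both at least $d$.

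Next, I would appeal to Lemma \ref{lem:set-bound}, which bounds $d(\mu,\mu')$ by $d+1$ for each of the four candidate pairs, to conclude that whether these two counts lie in $\mu$ or in $\mu'$ is the same for both. Finally, I would observe that $x_i \in Z$ iff $x_i \in X \cup Y \cup (Z \cap \{x_i\})$ (since $X \cup Y$ misses $x_i$), so the case split $v \in D$ vs.\ $v \notin D$ in the definition of $(\mu,\mu')$-domination applied to $x_i$ is the same on both sides. Putting these pieces together yields the claimed biconditional.

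The main obstacle, and the step I would treat most carefully, is the per-pair verification in the previous paragraph. For $(\sigma,\rho)$ and $(\sigma^*,\rho^*)$ Lemma \ref{lem:set-bound} gives a threshold at most $d$, so the conclusion is immediate from $\min(d,\cdot)$ agreement. For $(\sigma^-,\rho^-)$ and $(\sigma^+,\rho^+)$ the nominal threshold is $d+1$, so I would need a short case analysis using the tail-stability of the finite or co-finite sets $\sigma$ and $\rho$: once a count reaches $d = d(\sigma,\rho)$, the predicate ``belongs to $\sigma^\pm$ (resp.\ $\rho^\pm$)'' is determined uniformly on the tail, so the ``both $\geq d$'' conclusion above still suffices.
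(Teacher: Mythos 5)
Your decomposition of $N(x_i)$ over $A_{i-1}\sqcup\{x_i\}\sqcup\comp{A}_i$ and the conclusion that the two counts are either equal or both at least $d$ is correct, and it is the calculation the paper has in mind when it says the lemma is ``implied by the definition of the $d$-neighbor equivalence and Lemma~\ref{lem:set-bound}''. (One side remark: the non-blue case is not ``symmetric'' --- if $x_i\notin\Blue$ the equivalence hypothesis says nothing at $x_i$ at all; this is harmless only because the lemma is ever applied to blue vertices.) The genuine gap is in the last step, exactly where you flagged it: the claimed tail-stability at the threshold $d=d(\sigma,\rho)$ is false for the pair $(\sigma^-,\rho^-)$. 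If $\rho$ is co-finite then $d(\rho)\in\rho^-$ while every integer larger than $d(\rho)$ is not in $\rho^-$, so when $d(\rho)=d(\sigma,\rho)$ the predicate ``belongs to $\rho^-$'' is \emph{not} constant on counts $\geq d$. Concretely, for ordinary domination take $\sigma=\bN$, $\rho=\bN^*$, so $d=1$ and $\rho^-=\{1\}$: let $x_i\in\Blue\setminus Z$ have exactly two neighbours in $Z\cap A_{i-1}$ with the same neighbourhood in $\comp{A}_{i-1}\cap\Blue$, let $X$ contain just one of them and $Y=Z\cap\comp{A}_i=\emptyset$. Then $X\equiv_{A_{i-1}}^1 Z\cap A_{i-1}$, yet $|N(x_i)\cap Z|=2\notin\rho^-$ while $|N(x_i)\cap(X\cup Y)|=1\in\rho^-$, so the biconditional fails for $(\mu,\mu')=(\sigma^-,\rho^-)$. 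Your tail argument is fine for $(\sigma,\rho)$ and $(\sigma^*,\rho^*)$ (thresholds at most $d$) and for $(\sigma^+,\rho^+)$ (neither set contains an element $\geq d$), but it cannot be repaired for the minus pair at parameter $d(\sigma,\rho)$, because the statement itself is false there.

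What the counterexample reveals is an off-by-one in the parameter rather than a wrong route: the lemma needs the equivalence to refine membership in $\sigma^-$ and $\rho^-$, whose thresholds Lemma~\ref{lem:set-bound} only bounds by $d(\sigma,\rho)+1$. If one works with $d:=d(\sigma,\rho)+1$ (equivalently, any $d\geq\max\{d(\sigma,\rho),d(\sigma^-,\rho^-),d(\sigma^+,\rho^+)\}$), then your argument closes for all four pairs: the two counts are equal or both at least $d$, and membership in each of the eight finite or co-finite sets is constant from its own $d$-value onwards, which is at most $d$. So you should either state and prove the lemma with the equivalence parameter $d(\sigma,\rho)+1$, or add the hypothesis $d\geq d(\sigma^-,\rho^-)$; as written, the final ``the both-$\geq d$ conclusion still suffices'' step fails for $(\sigma^-,\rho^-)$ whenever $\sigma$ or $\rho$ is co-finite and attains the maximum in $d(\sigma,\rho)$ --- which includes the basic dominating-set case the lemma is ultimately used for.
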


The next lemma shows that two maximal paths in $DAG(G)$ give rise to two different $1$-minimal \Red $(\sigma,\rho)$-dominating sets. 

\begin{lem}\label{lem:path-dom} If there is a path $P=(s,v_1,\ldots,v_k,t)$ in $DAG(G)$, then $\trace(P)$ is a $1$-minimal \Red $(\sigma,\rho)$-dominating set of
  $G$. Moreover, $\trace(P)\ne \trace(P')$ for any other path $P'=(s,v_1',\ldots,v_k',t)$ in $DAG(G)$.
\end{lem}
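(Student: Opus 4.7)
My plan is to run a simultaneous induction along $P=(s,v_1,\ldots,v_k,t)$, maintaining at each internal node $v_j=(R_j,R_j',C_j,C_j',i_j)$ the four invariants: (I1) $R_j \equiv_{A_{i_j}}^d D \cap A_{i_j}$ and $R_j' \equiv_{\comp{A}_{i_j}}^d D \cap \comp{A}_{i_j}$, where $D:=\trace(P)$; (I2) every vertex of $A_{i_j}\cap\Blue$ is $(\sigma,\rho)$-dominated by $D$; (I3) every vertex of $D\cap A_{i_j}$ already has a certifying vertex lying in $C_j$ or earmarked to appear in $\comp{A}_{i_j}\cap\Blue$; and (I4) $C_j$ and $C_j'$ are obtained from $D$ by applying $\cSG_{i_j}$ and $\cGG_{i_j}$ to running pools of candidate certifying vertices that are themselves determined by $D$ through the arc-update rules.

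The initial non-source node is handled by condition (S1), which via Lemma \ref{lem:equiv} establishes (I1) and (I2), while (S2)--(S3) install (I3)--(I4). Propagation along an internal arc of $DAG(G)$ factors through a sequence of $\varepsilon$-arc-1's ending in one $\varepsilon$-arc-2. On an $\varepsilon$-arc-1 the skipped vertex $x_{j+1}$ is not added to $D$; condition (1.2) forces, when $x_{j+1}\in\Blue$, that $|N(x_{j+1})\cap(R_j\cup R_{j+1}')|\in\rho$, which via Lemma \ref{lem:combine} rewrites as $|N(x_{j+1})\cap D|\in\rho$, extending (I2). Sub-cases (1.2.a)--(1.2.b) append $x_{j+1}$ to the canonical certificate pool precisely when Lemma \ref{lem:cert-intersect} flags it as a fresh certificate for some vertex of $D$. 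An $\varepsilon$-arc-2 adds $x_{j+1}$ to $D$: (2.1) upgrades (I1), while (2.3) supplies a certificate for the newly added $x_{j+1}$, either as a neighbor already earmarked in $C_j\cup C_{j+1}'$ or as $x_{j+1}$ itself self-certifying via $\sigma^*$ (again Lemma \ref{lem:cert-intersect}).

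At the terminal arc, condition (T1) combined with (I2) at $v_k$ yields that every blue vertex of $G$ is $(\sigma,\rho)$-dominated by $D$, while (T2) discharges the certificate obligation for the vertices of $D$ whose witnesses lie in $\comp{A}_{i_k}\cap\Blue$. By Lemma \ref{lem:cert}, $D$ is a $1$-minimal \Red $(\sigma,\rho)$-dominating set of $G$. The legitimacy of replacing the running pools by their $\cSG$/$\cGG$-images along the way rests on Lemma \ref{lem:cert-bound}: the discarded vertices are redundant for $(\sigma,\rho)$-neighborhood purposes across the cut, and Lemma \ref{lem:cert-assign} then guarantees that certification status transfers correctly among neighbors of a shared certifying vertex.

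For the injectivity claim, note that $D$ determines the index sequence $i_1<\cdots<i_k$ (these are the positions of vertices of $D$ in the ordering $x_1,\ldots,x_n$). At each $v_j$, the pair $(R_j,R_j')$ is the unique pair of lexicographically smallest, minimum-size representatives of the classes of $D\cap A_{i_j}$ and $D\cap\comp{A}_{i_j}$, hence is forced by $D$; and $(C_j,C_j')$ is in turn forced by applying $\cSG_{i_j}$ and $\cGG_{i_j}$ to the (now determined) running pools. Consequently each node along $P$, and therefore $P$ itself, is uniquely determined by $\trace(P)$. The main obstacle is the case analysis supporting (I3): for each arc one must track whether $x_{j+1}$ is red or blue, added to $D$ or not, and in which of $\rho,\rho^-,\sigma,\sigma^-,\sigma^*$ the quantity $|N(x_{j+1})\cap(R_j\cup R_{j+1}')|$ lies, verifying that the arc's update rule for $C,C'$ exactly mirrors the certificate recipe given by Lemmas \ref{lem:cert-intersect} and \ref{lem:cert-assign}.
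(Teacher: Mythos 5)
Your overall plan coincides with the paper's: a forward induction along the path maintaining invariants about representatives, domination and certificates, closed off by the terminal conditions and Lemma \ref{lem:cert}, plus an argument that every component of every node is determined by $\trace(P)$. However, the invariants you propose are stated in a form that the forward induction cannot actually maintain. The $R'$-half of (I1), namely $R_j'\equiv_{\comp{A}_{i_j}}^d D\cap\comp{A}_{i_j}$, and (I2) in the form ``every vertex of $A_{i_j}\cap\Blue$ is $(\sigma,\rho)$-dominated by $D$'', refer to the \emph{future} part of the trace. The arc conditions never certify this during a forward pass: (S1) only gives $\{x_{j_1}\}\equiv_{A_{j_1}}^d R$ and domination of $A_{j_1}\cap\Blue$ by $\{x_{j_1}\}\cup R'$, and conditions (1.1), (1.2), (2.1) only relate consecutive primed representatives and certify that each newly passed blue vertex is dominated by $R_{s-1}\cup R_s'$; nothing available at step $j$ ties $R_j'$ to $D\cap\comp{A}_{i_j}$. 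This is precisely why the paper's invariant (iii) is the weaker, propagatable statement ``$D_i\cup R_{j_i}'$ $(\sigma,\rho)$-dominates $A_{j_i}\cap\Blue$'', with no claim that $R_{j_i}'$ represents the rest of the trace, and why domination by $D$ itself is only concluded at the terminal arc using (T1). As written, your base case and inductive step do not establish (I1)--(I2), so the first half of the lemma is not yet proved.

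The same issue propagates into your injectivity argument. That $R_{j_i}$ is forced follows from $D_i\equiv_{A_{j_i}}^d R_{j_i}$ and uniqueness of representatives in $LR_{j_i}$, and $C_{j_i}$ is pinned down by the forward invariant ($C_{j_i}=\cSG_{j_i}(S_i)$ for the canonical certificate set $S_i$). But $C_{j_i}'$ (and the primed representative) cannot be read off the forward invariants: the paper determines $C_{j_i}'$ by a separate \emph{backward} induction along the path, anchored at condition (T2) of the terminal arc and descending through conditions (1.2.b)/(2.2.b), showing $C_{j_i}'=\cGG_{j_i}(S_i')$ for the canonical set $S_i'\subseteq\comp{A}_{j_i}\cap\Blue$. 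Your statement that $(R_j',C_j')$ is ``forced by $D$ through the arc-update rules'' asserts exactly what needs this backward pass; without it the claim that $\trace(P)$ determines $P$ is incomplete. To repair the proposal, replace (I1)--(I2) by the paper's formulation and add the backward induction for the primed components.
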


\begin{proof} Let $P=(s,v_1,\ldots,v_k,t)$ and $\trace(P)=\{x_{j_1},\ldots,x_{j_k}\}$. We will first prove by induction that for each $1\leq i \leq k$, the set
  $D_i=\{x_{j_1},\ldots,x_{j_i}\}$ satisfies the following properties (with $v_i=(R_{j_i},R_{j_i}',C_{j_i},C_{j_i}',j_i)$)
  \begin{itemize}
  \item[(i)] $R_{j_i}\in LR_{j_i}$, $R_{j_i}'\in LR_{\bar{j_i}}$;
  \item[(ii)]  $D_i\equiv_{A_{j_i}}^d R_{j_i}$, 
    \item[(iii)] $D_i\cup R_{j_i}'$ $(\sigma,\rho)$-dominates $A_{j_i}\cap \Blue$;
    \item[(iv)] Each $u\in D_i$ is either adjacent to a vertex from $C_{j_i}'$, or has a certificate in $A_{j_i}\cap \Blue$.
    \item[(v)]  $C_{j_i}=\cSG_{j_i}(S_i)$, where $S_i$ is the set of vertices in $A_{j_i}\cap \Blue$ that are certificates and have a neighbor in $\comp{A}_{j_i}\cap \Red$;
  \end{itemize}

  By the definition of a node in $DAG(G)$, the property (i) is true for all $1\leq i \leq k$. So, let us prove the properties (ii)-(v). By the definition of the arcs from the source node, we can
  easily check that the properties (ii)-(v) are all verified for $i=1$. So, let us assume now that they are true for all $i < \ell \leq k$ and let us prove it for $\ell$.

  If there is an arc from $v_{\ell-1}$ to $v_\ell$, then there should exist $(R_s,R_s',C_s,C_s')$ for $j_{\ell-1}+1\leq s \leq j_\ell-1$ such that there is an $\varepsilon$-arc-1 from
  $(R_{s-1},R_{s-1}',C_{s-1},C_{s-1}')$ to $(R_s,R_s',C_s,C_s')$ for each $j_{\ell-1}+1\leq s \leq j_\ell-1$, and there is an $\varepsilon$-arc-2 from
  $(R_{j_\ell-1},R_{j_\ell-1}',C_{j_\ell-1},C_{j_\ell-1}')$ to $(R_{j_\ell},R_{j_\ell}',C_{j_\ell},C_{j_\ell}')$. By the conditions (1.1) and (2.1) we can conclude that
  $D_{j_\ell}\equiv_{A_{j_\ell}}^d R_{j_\ell}$ because $D_{j_{\ell-1}}\equiv_{A_s}^d R_s$ for all $j_{\ell-1}+1\leq s \leq j_\ell-1$ by the condition (1.1) and
  $R_{j_\ell}\equiv_{A_{j_\ell}}^d (R_{j_\ell-1}\cup \{x_{j_\ell}\})$ by the condition (2.1).

  Because $R'_s\equiv_{\comp{A}_{s-1}}^d R'_{s-1}$ for each $j_{\ell-1}+1\leq s \leq j_\ell-1$ by (1.1) and $R_{j_\ell-1}'\equiv_{\comp{A}_{j_\ell-1}}^d R_{j_\ell}'\cup \{x_{j_\ell}\}$ by (2.1) we
  can conclude with inductive hypothesis, Lemmas \ref{lem:equiv} and \ref{lem:combine},  and the conditions (1.2) and (2.1) that for each
  $j_{\ell-1}+1\leq s \leq j_\ell$ whenever $x_s\in \Blue$ it is $(\sigma,\rho)$-dominated by $D_{j_\ell}\cup R_{j_\ell}'$, thus proving (iii). 

  In order to check (iv) and (v), we let $D_s=D_{j_{\ell-1}}$ for each $j_{\ell-1}+1\leq s \leq j_\ell-1$. Then, for each $j_{\ell-1}+1\leq s\leq j_\ell$, the following easy facts can be derived from
  Lemmas \ref{lem:equiv} and \ref{lem:combine}, the definition of the $d$-neighbor equivalence and the fact that $D_s \cap R'_s= \emptyset$.

  \begin{enumerate}
  \item For each $v\in \comp{A}_{s}\cap \Blue$, we have $(D_s\cup R_s')$ $(\sigma^-,\rho^-)$-dominates $\{v\}$ if and only if $(R_s\cup R_s')$ $(\sigma^-,\rho^-)$-dominates $\{v\}$; and
    $|N(v)\cap D_s|\ne 0$ if and only if $|N(v)\cap R_s|\ne 0$.
  \item For each $v\in A_{s-1}\cap \Blue$, we have $(D_s\cup R_s')$ $(\sigma^-,\rho^-)$-dominates $\{v\}$ if and only if $(D_{s-1}\cup R_{s-1}')$ $(\sigma^-,\rho^-)$-dominates $\{v\}$; 
  \item $(D_s\cup R_s')$ $(\sigma^-,\rho^-)$-dominates $\{x_s\}$ if and only if $(R_{s-1}\cup R_s')$ $(\sigma^-,\rho^-)$-dominates $\{x_s\}$. 
  \item Each $u\in D_{s-1}$ either has a certificate in $A_{s}$ \wrt $D_s$ or is adjacent to a vertex from $C_s'$. Indeed, either it has by induction a certificate $v$ from $A_{s-1}$ \wrt $D_{s-1}$
    and by (2) and Lemma \ref{lem:cert-intersect} the vertex $v$ is still a certificate for $u$ \wrt $D_s$, or $u$ is adjacent to some vertex in $C_{s-1}'$ and then by induction and the conditions
    (1.2) and (2.2) either it is adjacent to some vertex in $C_s'$ or it is adjacent to $x_s$ which is $(\sigma^-,\rho^-)$-dominated by $D_s\cup R_s'$ following fact 3. 
  \end{enumerate}

  From the facts 1. and 2. we can conclude that $(D_s\cup R'_s)$ $(\sigma^-,\rho^-)$-dominates $C_s$ for all $j_{\ell-1}+1\leq s \leq j_\ell$. Hence, $(D_{j_\ell}\cup R'_{j_\ell})$
  $(\sigma^-,\rho^-)$-dominates $C_{j_\ell}$. Moreover, from the fact 4. we know that each $u\in D_{j_{\ell-1}}$ is either adjacent to a vertex in $C_{j_\ell}'$ or has a certificate \wrt $D_{j_\ell}$ in
  $A_{j_{\ell}}\cap \Blue$. In order to prove that (iv) is satisfied it remains then to check that $x_{j_\ell}$ has a certificate in $A_{j_\ell}\cap \Blue$ or has a neighbor in $C_\ell'$. But this is
  guaranteed with the existence of the arc $v_{\ell-1}$ to $v_\ell$ by the conditions (1.2.a), (2.2.a), (2.3),  and the properties (iv)-(v) by inductive hypothesis.

  In order to check the condition (v) it is sufficient to notice that whenever $x_s$ is $(\sigma^-,\rho^-)$-dominated by $(D_s\cup R_s')$ for each $j_{\ell-1}+1\leq s \leq j_\ell$, by the condition
  (1.2.a) and (2.2.a) $C_s=\cSG_s(C_{s-1}\cup x_s)$, and this guarantees by inductive hypothesis, the fact 2. and Lemma \ref{lem:cert-intersect} that $C_s$ is exactly $\cSG_s(S_s)$ where $S_s$ is the
  set of vertices in $A_s\cap \Blue$ that are certificates and have a neighbor in $\comp{A}_s$. 

  To end the proof we need to prove that whenever $\trace(P)= \trace(P')$ for any other path $P'$ from the source node to the terminal node, then $P=P'$.  For that we prove by induction that
  $C_{j_i}'=\cGG_{j_i}(S_i')$ where $S_i'$ is the set of vertices in $\comp{A}_{j_i}\cap \Blue$ that are $(\sigma^-,\rho^-)$-dominated by $D_{j_i}\cup R_{j_i}'$ and have a neighbor in
  $A_{j_i}\cap \Red$. By the condition (T2) in the definition of an arc to the terminal node this is satisfied by $C_{j_k}'$. So, if we assume that $C_{j_i}'=\cGG_{j_i}(S_i')$ for all $\ell < i \leq
  k$, then as for the condition (v) the inductive hypothesis, the fact 2. and Lemma \ref{lem:cert-intersect} guarantees that $C_{j_\ell}'$ is exactly $\cGG_{j_\ell}(S_\ell')$.

  So now for each $i$ the sets $C_{j_i}$ and $C_{j_i}'$ are uniquely determined by $\trace(P)$, which means that whenever $\trace(P)=\trace(P')$ because $\equiv_{A}^d$ is an equivalence relation we
  should conclude that $P=P'$. 
\end{proof}

The following lemma tells that to every $1$-minimal \Red $(\sigma,\rho)$-dominating set corresponds a maximal path in $DAG(G)$. 

\begin{lem}\label{lem:dom-path} If $G$ has a $1$-minimal \Red $(\sigma,\rho)$-dominating set $D$, then there is a path $P=(s,v_1,v_2,\ldots,v_k,t)$ in $DAG(G)$ such that $D=\trace(P)$.
\end{lem}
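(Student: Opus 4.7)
The plan is to lift the canonical information contained in $D$ to a path of $DAG(G)$ and then verify, arc by arc, that all source, $\varepsilon$-arc, and terminal conditions hold. Let $x_{j_1}<\cdots<x_{j_k}$ be the enumeration of $D$ in the given linear order, and for each index $j\in\{1,\ldots,n\}$ define $R_j:=rep_{A_j}^d(D\cap A_j)$, $R_j':=rep_{\overline{A}_j}^d(D\cap \overline{A}_j)$, $C_j:=\cSG_j(S_j)$ and $C_j':=\cGG_j(S_j')$, where $S_j$ is the set of vertices $v\in A_j\cap\Blue$ that are certificates for some element of $D$ \wrt $D$ and satisfy $N(v)\cap(\overline{A}_j\cap\Red)\ne\emptyset$, and $S_j'$ is the symmetric set on the $\overline{A}_j$-side. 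Since $\cSG_j$ and $\cGG_j$ are idempotent and $D\subseteq\Red$, each tuple $v_i:=(R_{j_i},R_{j_i}',C_{j_i},C_{j_i}',j_i)$ is a valid node of $DAG(G)$. Lemma~\ref{lem:cert} guarantees that every $u\in D$ has a certificate, and Lemma~\ref{lem:cert-intersect} tells us on which side (or whether $u$ itself) the certificate lies; this is what makes the subsequent arc-existence checks possible.

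For the source and terminal arcs the verification is largely bookkeeping. For $s\to v_1$, condition (S1) reduces to $\{x_{j_1}\}\equiv_{A_{j_1}}^d R_{j_1}$, which is immediate since $D\cap A_{j_1}=\{x_{j_1}\}$, and to $(\sigma,\rho)$-domination of $A_{j_1}\cap\Blue$ by $\{x_{j_1}\}\cup R'_{j_1}$, which follows from the corresponding property of $D$ via Lemma~\ref{lem:equiv}. Conditions (S2) and (S3) come from Lemma~\ref{lem:cert-intersect}: it determines exactly when $x_{j_1}$ is a certificate for itself (precisely when $|N(x_{j_1})\cap D|\in\sigma^*$, giving the second clause of (S3), or in $\sigma^-$, affecting the extra element in (S2)) and when an external certificate exists as a neighbour in $C_{j_1}\cup C'_{j_1}$, the latter being guaranteed by Lemma~\ref{lem:cert-bound} applied to $S_{j_1}$ or $S'_{j_1}$. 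The terminal arc $v_k\to t$ is dual: (T1) is the $(\sigma,\rho)$-domination of $\overline{A}_{j_k}\cap\Blue$ by $D$ transferred onto $R_{j_k}$ by Lemma~\ref{lem:equiv}, and (T2) is precisely the definition of $C'_{j_k}$ combined with the $\rho^-$-characterisation of Lemma~\ref{lem:cert-intersect}.

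The heart of the argument is the existence of each arc $v_i\to v_{i+1}$. I use the intermediate tuples $(R_s,R_s',C_s,C_s')$ at indices $j_i<s<j_{i+1}$ as defined above, and verify $\varepsilon$-arc-1 between consecutive indices and $\varepsilon$-arc-2 into $v_{i+1}$. The $R,R'$-equivalences (1.1) and (2.1) hold because $D\cap A_s$ and $D\cap\overline{A}_s$ change only at $s=j_{i+1}$, and because $\equiv^d$ is preserved both when the reference ground set grows and when a single vertex is added to both sides of a known equivalence (a short $\min(d,\cdot)$-computation). Conditions (1.2) and (2.2) use Lemma~\ref{lem:cert-intersect} again: the $\rho^-$ (resp.\ $\sigma^-$) test fires precisely when $x_{s+1}$ is a blue certificate for some element of $D$, and combining this with the identity $\cSG_\ell(C\cup\{x_\ell\})=\cSG_\ell(\cSG_j(C)\cup\{x_\ell\})$ (and its $\cGG$ counterpart) shows that the canonical $C_{s+1}$ and $C_s'$ follow the update rule the arc prescribes. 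Condition (2.3) for the newly added vertex $x_{j_{i+1}}\in D$ is the existence of a certificate for $x_{j_{i+1}}$: if $x_{j_{i+1}}$ certifies itself we are in the second clause of (2.3); otherwise the certificate is a blue neighbour that places $x_{j_{i+1}}$ into $N(S_{j_i})\cap(\overline{A}_{j_i}\cap\Red)$ or $N(S'_{j_{i+1}})\cap(A_{j_{i+1}}\cap\Red)$, which by Lemma~\ref{lem:cert-bound} equal $N(C_{j_i})\cap(\overline{A}_{j_i}\cap\Red)$ and $N(C'_{j_{i+1}})\cap(A_{j_{i+1}}\cap\Red)$, so $x_{j_{i+1}}$ has a neighbour in $C_{j_i}\cup C'_{j_{i+1}}$.

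The main obstacle is the mismatch between the globally defined sets $S_j$ and the incremental updates dictated by the $\varepsilon$-arcs: $S_{j+1}$ need not equal $S_j\cup\{x_{j+1}\}$, since a vertex of $S_j$ may lose its only right-side red neighbour when $x_{j+1}$ is that unique neighbour, and a vertex that was not in $S_j$ can only enter $S_{j+1}$ via $x_{j+1}$. The saving point is that the arc conditions depend on $C_j,C_j'$ only through the right- and left-side red neighbourhoods of these sets, and Lemma~\ref{lem:cert-bound} asserts that the $\cSG$- and $\cGG$-reductions preserve exactly those neighbourhoods; once this is made precise, $P=(s,v_1,\ldots,v_k,t)$ is a well-defined path in $DAG(G)$ with $\trace(P)=D$, which completes the proof.
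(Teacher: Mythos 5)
Your proposal is correct and follows essentially the same route as the paper's proof: build the nodes from $D$ by taking representatives of $D\cap A_j$ and $D\cap\comp{A}_j$ together with the canonical certificate sets $\cSG_j(S_j)$ and $\cGG_j(S_j')$, then verify the source, terminal, $\varepsilon$-arc-1 and $\varepsilon$-arc-2 conditions by induction along the linear ordering. You actually supply more detail than the paper (which dismisses the arc verification as "straightforward"), e.g.\ the mismatch between $S_{j+1}$ and $S_j\cup\{x_{j+1}\}$; the only small quibble is that the preservation of the relevant red neighborhoods under the $\cSG$/$\cGG$ reductions comes directly from their definition, whereas Lemma~\ref{lem:cert-bound} is only needed to bound their size by $c$.
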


\begin{proof} Let $D=\{x_{j_1},\ldots,x_{j_k}\}$ such that $j_1<j_2<\cdots < j_k$. For each $i\in \{1,\ldots,k\}$, we let $D_i=\{x_{j_1},\ldots,x_{j_i}\}$.  Let also $R_{j_i}\in LR_{j_i}$ be such
  that $R_{j_i}\equiv_{A_{j_i}}^d D_i$, and let $R_{j_i}'\in LR_{\bar{j_i}}$ be such that $ R_{j_i}' \equiv_{\comp{A}_{j_i}}^d (D\cap\comp{A}_{j_i})$.  For each $i$ we let
  $C_{j_i}=\cSG_{j_i}(S_i)$ where $S_i$ is the set of vertices in $A_{j_i}\cap \Blue$ that are certificates and have a neighbor in $\comp{A}_{j_i}\cap \Red$ and similarly let
  $C_{j_i}'=\cGG_{j_i}(S_i')$ where $S_i'$ is the set of vertices in $\comp{A}_{j_i}\cap \Blue$ that are certificates and have a neighbor in $A_{j_i}\cap \Red$. Hence,
  $v_i=(R_{j_i},R_{j_i}',C_{j_i},C_{j_i}',j_i)$ is a node of $DAG(G)$ for each $1\leq i \leq k$.

  We first observe that there is an arc from the source node $s$ to $v_1$. Indeed, by the definition of $R_{j_i},R_{j_i}'$, we have that $\{x_{j_1}\}\equiv_{A_{j_1}}^d R_{j_1}$ and
  $\{x_{j_1}\}\cup R_{j_1}'$ $(\sigma,\rho)$-dominates $A_{j_1}$, and by the choices of $C_{j_i}$ and $C_{j_i}'$ the condition (S2) is satisfied and since
  $x_{j_1}$ has a certificate \wrt $D$, the condition (S3) is satisfied. For similar reasons one can prove that there is an arc from $v_k$ to the terminal node $t$. 

  We now claim that there is an arc from $v_i$ to $v_{i+1}$ for $1\leq i < k$. For each $j_{i}< s < j_{i+1}$, we let $(R_s,R_s',C_s,C_s')$ be such that $R_s\equiv_{A_{s}}^d R_{s-1}$,
  $R_s'\equiv_{\comp{A}_{s-1}}^d R_{s-1}'$, and $R_{j_{i+1}}\equiv_{A_{j_{i+1}}}^d R_{j_{i+1}-1}\cup \{x_{j_{i+1}}\}$ and
  $R_{j_{i+1}-1}'\equiv_{\comp{A}_{j_{i+1}-1}}^d R_{j_{i+1}}'\cup \{x_{j_{i+1}}\}$. It is straightforward to prove by induction on $j_{i+1}-j_i$ that  there exists an $\varepsilon$-arc-1 from
  $(R_{s-1},R_{s-1}',C_{s-1},C_{s-1}')$ to $(R_s,R_s',C_s,C_s')$ for each $j_i<s<j_{i+1}$ and there is an $\varepsilon$-arc-2 from $(R_{j_{i+1}-1}, R_{j_{i+1}-1}',C_{j_{i+1}-1},C_{j_{i+1}-1}')$ to
  $(R_{j_{i+1}}, R_{j_{i+1}}',C_{j_{i+1}},C_{j_{i+1}}')$. 
\end{proof}

By Lemmas \ref{lem:path-dom} and \ref{lem:dom-path} we can state the following. 

\begin{prop}\label{prop:path-dom}
Let $\cP$ be the set of paths in $DAG(G)$ from the source node to the terminal node. The mapping which associates with every $P\in \cP$ $\trace(P)$ is a one-to-one correspondence with the
set of $1$-minimal \Red $(\sigma,\rho)$-dominating sets.
\end{prop}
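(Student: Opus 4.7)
The plan is to observe that Proposition~\ref{prop:path-dom} is essentially a packaging of the two preceding lemmas into the language of a bijection, so the proof reduces to verifying three things: the map $P \mapsto \trace(P)$ is well-defined with codomain the family of $1$-minimal \Red $(\sigma,\rho)$-dominating sets, it is injective, and it is surjective.

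First, I would invoke Lemma~\ref{lem:path-dom} to conclude that for every path $P = (s, v_1, \ldots, v_k, t) \in \cP$, the set $\trace(P)$ is indeed a $1$-minimal \Red $(\sigma,\rho)$-dominating set of $G$. This makes the map $\Phi \colon \cP \to \{D \subseteq \Red : D \text{ is a } 1\text{-minimal \Red } (\sigma,\rho)\text{-dominating set of } G\}$ defined by $\Phi(P) = \trace(P)$ well-defined.

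For injectivity, I would appeal to the second assertion of Lemma~\ref{lem:path-dom}: if $P, P' \in \cP$ are distinct paths (necessarily from $s$ to $t$), then $\trace(P) \ne \trace(P')$. Equivalently, distinct source-to-terminal paths in $DAG(G)$ produce distinct traces, giving injectivity of $\Phi$. For surjectivity, Lemma~\ref{lem:dom-path} furnishes, for every $1$-minimal \Red $(\sigma,\rho)$-dominating set $D$ of $G$, a path $P \in \cP$ with $\trace(P) = D$, so $\Phi$ hits every element of the codomain.

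There is no real obstacle to this argument: the technical heavy lifting---correctly correlating the $\varepsilon$-arc conditions with domination and certification and ensuring unique recovery of the tuples $(R_{j_i}, R_{j_i}', C_{j_i}, C_{j_i}')$ from $\trace(P)$---has already been carried out in Lemmas~\ref{lem:path-dom} and~\ref{lem:dom-path}. The only step worth stating explicitly in the write-up is that all paths in $\cP$ necessarily begin at $s$ and end at $t$ (by the definition of $\cP$), so that both lemmas apply directly. The resulting proof is therefore a two- or three-line composition of the two lemmas.
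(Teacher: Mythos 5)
Your proposal is correct and matches the paper exactly: the paper states the proposition as an immediate consequence of Lemmas~\ref{lem:path-dom} and~\ref{lem:dom-path}, with the first lemma giving well-definedness and injectivity of the trace map and the second giving surjectivity. Your write-up is just a slightly more explicit packaging of the same two-lemma composition.
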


By Proposition \ref{prop:path-dom} it suffices to count and enumerate the traces of the maximal paths in $DAG(G)$. We will now explain how to count and then use the counting to enumerate the traces of
these paths in $DAG(G)$. We start from a topological ordering of $DAG(G)$, say $s=v_1,v_2,\ldots,v_m=t$. Since $DAG(G)$ is a DAG, any arc is of the form $(v_i,v_j)$ with $i<j$. The counting will
follow this topological ordering.  We initially set $Np(v)=-1$ for all nodes $v\ne t$ and we set $Np(v_m)=1$. For each $j<m$ we let
\begin{align*}
  Np(v_j) &= \sum\limits_{\substack{(v_j,v_\ell)\in E(DAG(G))\\ Np(v_\ell)\ne -1}} Np(v_\ell).
\end{align*}

\begin{fact}\label{fact:compute} One can compute the values of $Np(v_j)$
  for all $j\in \{1,2,\ldots ,m\}$ in time $O(n^{c\cdot d})$.
\end{fact}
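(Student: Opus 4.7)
The recurrence defining $Np$ is the textbook dynamic program for counting paths in a DAG to a fixed sink, so the plan is simply to implement it carefully and bound its cost against the size bound established in Lemma \ref{lem:size-dag}.

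First I would compute a topological ordering $s=v_1,v_2,\ldots,v_m=t$ of $DAG(G)$ using a standard linear-time procedure (Kahn's algorithm or a DFS-based postorder), which runs in time proportional to $|V(DAG(G))|+|E(DAG(G))|$. Then I would process the nodes in reverse topological order $v_m, v_{m-1},\ldots,v_1$: initialize $Np(v_m)=1$ and $Np(v_j)=-1$ for $j<m$, and when visiting $v_j$ with $j<m$ overwrite $Np(v_j)$ with $\sum Np(v_\ell)$, where the sum is over those out-neighbors $v_\ell$ of $v_j$ with $Np(v_\ell)\ne -1$. Because all out-neighbors $v_\ell$ of $v_j$ satisfy $\ell>j$ in topological order, their values have already been finalized when we reach $v_j$; thus a straightforward induction on reverse topological order shows that $Np(v_j)$ equals the number of paths from $v_j$ to $t$ whenever such a path exists (and remains $-1$, meaning ``no contribution'', otherwise).

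For the running time, each arc of $DAG(G)$ is inspected exactly once during the DP (namely when its tail is processed), so the total work is $O(|V(DAG(G))|+|E(DAG(G))|)$. By Lemma \ref{lem:size-dag} the whole DAG is built in time $O(n^{c\cdot d})$; in particular it has $O(n^{c\cdot d})$ nodes and arcs. Hence the dynamic program also runs in time $O(n^{c\cdot d})$, which gives the claimed bound.

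There is essentially no obstacle: the only subtlety is the bookkeeping of the sentinel value $-1$ versus the honest value $0$ obtained when a node has all its out-neighbors marked $-1$. Either convention yields the same $Np(s)$, since including a zero summand does not change a sum; but to stay faithful to the definition given in the paper, I would simply leave $Np(v_j)$ equal to $-1$ whenever the restricted sum is empty, and otherwise assign it the sum. With this convention the correctness argument above goes through verbatim.
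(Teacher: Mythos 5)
Your proposal is correct and follows essentially the same route as the paper: a standard dynamic program that processes the nodes so that all out-neighbors are finalized first, sums their $Np$-values (skipping the sentinel $-1$), proves correctness by induction, and charges the work to the $O(n^{c\cdot d})$ bound on the size of $DAG(G)$ from Lemma~\ref{lem:size-dag}. The only cosmetic differences are that you make the reverse-topological processing and the per-arc accounting explicit, which if anything states the time bound more cleanly than the paper's per-node degree argument.
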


\begin{proof} By induction on $j$. By definition $Np(v)$ can be computed in time $O(1)$ for all $v$ that have exactly one outgoing arc, which
  enters $t$. For every $j<m$, in order to compute $Np(v_j)$ we first set a counter $\mathbf{nb}$ to $0$, and add $Np(v_\ell)$ to $\mathbf{nb}$ whenever $(v_j,v_\ell)\in E(DAG(G))$ and
  $Np(v_\ell)\ne -1$. We finally set $Np(v_j)$ to $\mathbf{nb}$. The correctness of the computation of $Np(v_j)$ follows from the definition. Since the degree of a node is bounded by $O(n^{c\cdot d})$,
  we can update $\mathbf{nb}$ in time $O(n^{c\cdot d})$.  Now since the number of nodes and of arcs is bounded by $O(n^{c\cdot d})$, we obtain the claimed running time. 
\end{proof}

For $1\leq j \leq m$, we let $\cS_j=\{P\mid P$ is a path starting at $v_j$ and ending at $t\}$.   One can prove easily by induction the following. 

\begin{lem}\label{lem:disjoint} $\cS_j = \biguplus\limits_{\substack{(v_j,v_\ell)\in E(DAG(G))\\ Np(v_\ell)\ne -1}} \{v_j+P\mid P\in \cS_\ell\}$ for each $1\leq j \leq m$.
\end{lem}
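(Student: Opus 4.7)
The plan is to prove the identity by a direct decomposition argument combined with a small auxiliary observation about the meaning of the marker $Np(v_\ell)\ne -1$.

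First I would establish the following auxiliary claim, which is implicit in the definition of $Np$: for every node $v$, we have $Np(v)\ne -1$ if and only if $\cS_v\ne \emptyset$, i.e., $v$ can reach $t$ in $DAG(G)$. This is proved by reverse induction on the topological position. For $v=v_m=t$ the set $\cS_m$ contains the trivial one-vertex path and $Np(v_m)=1\ne -1$. For the inductive step at $v_j$ with $j<m$, note that $v_j$ reaches $t$ iff some out-neighbor of $v_j$ reaches $t$, which by the inductive hypothesis is iff some out-neighbor $v_\ell$ has $Np(v_\ell)\ne -1$, which is exactly the condition under which the sum defining $Np(v_j)$ is nonempty and therefore leaves $Np(v_j)\ne -1$.

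With the auxiliary claim in hand, the lemma is essentially bookkeeping. For the inclusion $\supseteq$, take any out-neighbor $v_\ell$ of $v_j$ with $Np(v_\ell)\ne -1$ and any $P\in\cS_\ell$; then $v_j+P$ is a path that starts at $v_j$, follows the arc $(v_j,v_\ell)$, and continues along $P$ to $t$, so $v_j+P\in\cS_j$. For the inclusion $\subseteq$, any path $Q\in\cS_j$ has the form $Q=v_j,v_\ell,\ldots,t$ for some out-neighbor $v_\ell$ of $v_j$; write $Q=v_j+P$ with $P=v_\ell,\ldots,t\in\cS_\ell$. Since $\cS_\ell$ is nonempty, the auxiliary claim yields $Np(v_\ell)\ne -1$, so $Q$ belongs to the term on the right-hand side indexed by $v_\ell$.

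Finally, I would verify that the union is disjoint, which is what the symbol $\biguplus$ demands. Two paths of the form $v_j+P$ and $v_j+P'$ with $P\in\cS_\ell$ and $P'\in\cS_{\ell'}$ and $\ell\ne \ell'$ have different second vertices ($v_\ell$ versus $v_{\ell'}$) and are therefore distinct; and within a single $\cS_\ell$, the map $P\mapsto v_j+P$ is obviously injective because $P$ is recovered by deleting the initial vertex. This exhausts all cases and proves the identity. The only genuinely subtle point is the auxiliary claim about $Np$; the rest is a verification that concatenation and decomposition at the first arc are mutually inverse operations on paths from $v_j$ to $t$.
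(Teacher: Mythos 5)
Your argument is correct and is exactly the straightforward first-arc decomposition plus reverse induction along the topological order that the paper leaves implicit (``one can prove easily by induction''). Note that the only direction of your auxiliary claim actually used is that $\cS_\ell\neq\emptyset$ implies $Np(v_\ell)\neq -1$, which indeed holds under the paper's computation of $Np$, so the proof goes through as written.
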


The following follows directly from the definition of $Np(v_j)$ and Lemma \ref{lem:disjoint}.

\begin{lem}\label{lem:count-paths} $|\cS_j|= Np(v_j)$ for each $1\leq j \leq m$. 
\end{lem}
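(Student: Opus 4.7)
The plan is to prove $|\cS_j| = Np(v_j)$ by reverse induction on the topological index, i.e.\ by induction on $m-j$. This matches the dynamic-programming direction in which the values $Np(v_j)$ are produced by Fact~\ref{fact:compute}, and the proof is almost immediate once Lemma~\ref{lem:disjoint} is available: that lemma already exhibits $\cS_j$ as a \emph{disjoint} union indexed by exactly the same outgoing arcs that appear in the defining recurrence for $Np(v_j)$, so the inductive step reduces to taking cardinalities of both sides.

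For the base case $j = m$, we have $v_m = t$, the set $\cS_m$ consists of the single trivial one-node path $(t)$, and $Np(v_m) = 1$ by initialization; hence $|\cS_m| = 1 = Np(v_m)$. For the inductive step, fix $j < m$ and assume the identity for all indices larger than $j$. Because $s = v_1, v_2, \ldots, v_m = t$ is a topological ordering, every arc leaving $v_j$ has the form $(v_j, v_\ell)$ with $\ell > j$, so the inductive hypothesis applies to each such $v_\ell$. Applying Lemma~\ref{lem:disjoint} to $v_j$ and taking cardinalities of the disjoint union gives
\[
|\cS_j| \;=\; \sum_{\substack{(v_j,v_\ell)\in E(DAG(G)) \\ Np(v_\ell)\ne -1}} |\cS_\ell| \;=\; \sum_{\substack{(v_j,v_\ell)\in E(DAG(G)) \\ Np(v_\ell)\ne -1}} Np(v_\ell) \;=\; Np(v_j),
\]
where the first equality uses disjointness, the second is the inductive hypothesis, and the third is the definition of $Np(v_j)$ recorded just before Fact~\ref{fact:compute}.

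The only point that warrants a brief comment is the side condition $Np(v_\ell)\ne -1$ appearing both in Lemma~\ref{lem:disjoint} and in the recurrence for $Np$: it filters out successors from which $t$ cannot be reached, and for every such $v_\ell$ we clearly have $\cS_\ell = \emptyset$, so omitting those arcs changes neither side of the computation above. I do not anticipate any real obstacle here; the lemma is essentially a bookkeeping identity that matches the disjoint-union decomposition of path-sets (Lemma~\ref{lem:disjoint}) against the sum-product form of the counter $Np$, and the induction closes on the first try.
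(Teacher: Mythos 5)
Your proof is correct and is exactly the argument the paper intends: the paper states that the lemma ``follows directly from the definition of $Np(v_j)$ and Lemma~\ref{lem:disjoint}'', and your reverse induction along the topological order, taking cardinalities of the disjoint union and matching them against the recurrence for $Np$, is precisely that argument spelled out. Your remark on the $Np(v_\ell)\ne -1$ side condition (such successors have $\cS_\ell=\emptyset$ and are excluded from both sides) resolves the only ambiguity in the paper's bookkeeping, so nothing further is needed.
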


\begin{thm}\label{thm:counting}  One can count the number of $1$-minimal \Red $(\sigma,\rho)$-dominating sets of a given graph $G$  in time $O(n^{c\cdot d})$.
\end{thm}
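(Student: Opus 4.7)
The plan is to combine the machinery already assembled in this section and simply read off the count from $Np(s)$, where $s$ is the source node of $DAG(G)$. By Proposition~\ref{prop:path-dom}, the $1$-minimal \Red $(\sigma,\rho)$-dominating sets of $G$ are in one-to-one correspondence with the source-to-terminal paths in $DAG(G)$ via the trace mapping, so it suffices to count such paths.

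First I would construct $DAG(G)$ using Lemma~\ref{lem:size-dag}, which takes time $O(n^{c\cdot d})$ and produces a DAG whose node set and arc set each have size $O(n^{c\cdot d})$. Next I would fix any topological ordering $s = v_1, v_2, \ldots, v_m = t$ of its nodes (computable in time linear in the size of the DAG), and invoke Fact~\ref{fact:compute} to compute $Np(v_j)$ for every $j$ in time $O(n^{c\cdot d})$. Traversing the topological order in reverse, each value $Np(v_j)$ is obtained from the already-computed values of its out-neighbours, which is exactly the recurrence established in Lemma~\ref{lem:disjoint} and whose correctness is the content of Lemma~\ref{lem:count-paths}.

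To conclude, I would observe that $\cS_1$ is precisely the set $\cP$ of source-to-terminal paths in $DAG(G)$, so by Lemma~\ref{lem:count-paths} we have $|\cP| = Np(v_1) = Np(s)$. Combining this with Proposition~\ref{prop:path-dom} gives that $Np(s)$ equals the number of $1$-minimal \Red $(\sigma,\rho)$-dominating sets of $G$. The total running time is dominated by the construction of $DAG(G)$ and the single topological sweep computing the $Np$ values, both bounded by $O(n^{c\cdot d})$.

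There is no real obstacle here: every ingredient (the DAG, its size bound, the trace bijection, and the dynamic program that counts paths in a DAG by summing over out-neighbours in reverse topological order) has already been established; this theorem is the assembly step. The only point that deserves a brief justification in the write-up is that the initial convention $Np(v) = -1$ for $v \neq t$ together with the recurrence correctly filters out nodes from which $t$ is unreachable, so that $Np(s)$ counts only paths that actually reach the terminal; this follows from an easy induction matching the inductive proof of Lemma~\ref{lem:disjoint}.
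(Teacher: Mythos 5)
Your proposal is correct and follows essentially the same route as the paper: construct $DAG(G)$ via Lemma~\ref{lem:size-dag}, compute the $Np$ values via Fact~\ref{fact:compute}, identify $\cS_1$ with the source-to-terminal paths so that $|\cS_1|=Np(s)$ by Lemma~\ref{lem:count-paths}, and conclude via the bijection of Proposition~\ref{prop:path-dom}. No gaps; the extra remark about the $-1$ initialization filtering out nodes that cannot reach $t$ is a reasonable addition but not needed beyond what the cited lemmas already provide.
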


\begin{proof} We first construct the DAG $DAG(G)$ and by Lemma \ref{lem:size-dag} this can be done in time $O(n^{c\cdot d})$.  By Proposition \ref{prop:path-dom} the mapping which
  associates with every path $P\in \cS_1$ its trace $\trace(P)$ is a one-to-one correspondence between $\cS_1$ and all the $1$-minimal \Red $(\sigma,\rho)$-dominating set in $G$. So, it is enough to
  determine the size of $\cS_1$.  By Lemma \ref{lem:count-paths}, $|\cS_1|= Np(s)$ and since by Fact \ref{fact:compute} we can compute in time $O(n^{c\cdot d})$ all the values $Np(v_j)$ for all
  $1\leq j \leq m$, we conclude that one can compute in time $O(n^{c\cdot d})$ the number of $1$-minimal \Red $(\sigma,\rho)$-dominating sets in $G$.
\end{proof}

We now turn to the enumeration of the $1$-minimal \Red $(\sigma,\rho)$-dominating sets. For each node $v$ in $DAG(G)$ of index $j$ we denote by $\vertex(v)$ the vertex $x_j$ of $G$. The algorithm is
depicted in Figures \ref{fig:alg-enum} and \ref{fig:alg-enum-path}. The algorithm consists in enumerating the paths in $\cS_1$ in a Depth-First Search manner.

\begin{figure}[!h]
\begin{framed}
\begin{small}
\begin{tabbing}
{\bf Algorithm} {\sf EnumMinDom}$(DAG(G))$\\
1. Remove all nodes $v$ such that $Np(v)=-1$\\
2. {\bf for} each $(s,v_i)\in E(DAG(G))$ \\
3. \ \ \ {\sf EnumPath}$(DAG(G),\{\vertex(v_i)\}, v_i)$\\
4. {\bf end for}
\end{tabbing}
\end{small}
\end{framed}
\caption{The enumeration of $1$-minimal \Red $(\sigma,\rho)$-dominating sets}
\label{fig:alg-enum}
\end{figure}

\begin{figure}[!h]
\begin{framed}
\begin{small}
\begin{tabbing}
{\bf Algorithm} {\sf EnumPath}$(DAG(G), S\subseteq V(G),v_i)$\\
1. {\bf if} $v_i=t$, then {\bf output $S$ and stop}\\
2. {\bf for} each $(v_i,v_j)\in E(DAG(G))$ \\
3. \ \ \ {\sf EnumPath}$(DAG(G), S\cup \{\vertex(v_j)\},v_j)$\\
4. {\bf end for}
\end{tabbing}
\end{small}
\end{framed}
\caption{The enumeration of $\cS_i$}
\label{fig:alg-enum-path}
\end{figure}

\begin{thm}\label{thm:enum} We can enumerate all the $1$-minimal \Red $(\sigma,\rho)$-dominating sets of a given graph $G$ with linear delay and with polynomial space. 
\end{thm}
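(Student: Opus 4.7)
The plan is to prove correctness and the running-time/space bounds of Algorithm \textsf{EnumMinDom} of Figures~\ref{fig:alg-enum} and \ref{fig:alg-enum-path}, using the DAG and the values $Np(v)$ computed when proving Theorem~\ref{thm:counting}.

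For \textbf{correctness}, by Proposition~\ref{prop:path-dom} the map $P\mapsto \trace(P)$ is a bijection between source-to-terminal paths in $DAG(G)$ and the $1$-minimal \Red $(\sigma,\rho)$-dominating sets of $G$. Hence it suffices to verify that \textsf{EnumMinDom} visits every such path exactly once and outputs its trace. Recall that, by construction, $Np(v)\neq -1$ exactly when $v$ lies on at least one path to $t$; so removing the nodes with $Np(v)=-1$ (line~1) discards precisely those nodes that cannot contribute to any source-to-terminal path and leaves every remaining non-terminal node with at least one outgoing arc, each such arc lying on at least one source-to-terminal path. The recursion \textsf{EnumPath} then performs a standard DFS whose leaves are in one-to-one correspondence with the source-to-terminal paths of the pruned DAG, and by construction $S$ equals the trace of the current partial path, giving the correct output at each leaf.

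For the \textbf{running-time analysis}, I would keep for every node $v$ in the pruned DAG an array of its outgoing arcs together with a pointer that remembers, during the DFS, the next unexplored child. Preprocessing (building $DAG(G)$ by Lemma~\ref{lem:size-dag}, computing the $Np$ values by Fact~\ref{fact:compute}, pruning and setting up the adjacency arrays) costs $O(n^{c\cdot d})$ time. The first output corresponds to walking from $s$ to $t$ along a single path whose length is at most $n+1$; at each recursive call the next child is obtained in $O(1)$ time via the pointer, and augmenting $S$ costs $O(1)$, so $delay(\mathcal{A},1) = O(n^{c\cdot d})$, bounded by a polynomial in $n+m$. Between two consecutive outputs, the algorithm (i)~pops the recursion stack until it reaches a node with an unexplored outgoing arc, then (ii)~descends along a new branch down to $t$. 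Because we pruned all nodes with $Np(v)=-1$, step~(ii) is guaranteed to reach $t$ without any dead-end backtracking, and both the popping and the descending involve at most $n+2$ stack frames, each doing $O(1)$ work. Hence $delay(\mathcal{A},i) = O(n)$ for all $i\geq 2$, so \textsf{EnumMinDom} is a linear-delay algorithm.

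For \textbf{polynomial space}, the DAG itself uses $O(n^{c\cdot d})$ space by Lemma~\ref{lem:size-dag}; the recursion depth is at most $n+2$, each frame stores a node identifier and an $O(\log n)$-bit pointer into its adjacency array, and the current set $S$ can be maintained incrementally (push $\vertex(v_j)$ on descent, pop on backtrack) in $O(n)$ extra space. The \textbf{main subtlety} is making the delay truly linear rather than merely polynomial: this crucially relies on the pruning step so that no recursive call is ``wasted'' on a dead-end, and on the explicit pointer per node so that continuing after backtracking is $O(1)$; without either of these the bound would degrade to the size of $DAG(G)$ per output.
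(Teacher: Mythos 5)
Your proof is correct and follows essentially the same route as the paper: prune the nodes with $Np(v)=-1$, then run the DFS of \textsf{EnumPath}, bounding the delay by the stack depth $O(n)$ (the paper additionally orders out-neighbours of each node by distance to $t$ to get the stronger output-sensitive bound $O(|Q|)$, which is not needed for its own definition of linear delay). One small caution: the ``next unexplored child'' pointer must be stored per recursion frame (as your space analysis indeed does), not globally per node, since a node of $DAG(G)$ is revisited along different path prefixes and must re-enumerate all of its out-arcs on each visit.
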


\begin{proof} 
  First notice that after removing all the nodes $v$ such that $Np(v)=-1$, every remaining node is in a path from the source node to the terminal node. Now, it is easy to prove by induction using Lemmas
  \ref{lem:disjoint} and \ref{lem:count-paths} that the algorithm \textsf{EnumPath}$(DAG(G), S,v_i)$ uses $O(n^{c\cdot d})$ space and enumerates the set $\{S\cup P\mid P\in \cS_i\}$, the delay between
  two consecutive outputs $P_1$ and $P_2$ bounded by $O(|P_2\setminus P_1|)$. In fact if before calling \textsf{EnumPath} we order the out-neighbors of each node following their distances to the
  terminal node and uses this ordering in the recursive calls we guarantee that the time between the output of $P$ and the next output $Q$ is bounded by $O(|Q|)$. Therefore, the algorithm
  \textsf{EnumMinDom}$(DAG(G))$ enumerates, with same delay as \textsf{EnumPath} the set of $1$-minimal \Red $(\sigma,\rho)$-dominating sets and uses $O(n^{c\cdot d})$ space. 
\end{proof}

\subsection{Maximal sets}\label{subsec:max}
We now explain how to construct the DAG $DAGM(G)$ so that the maximal paths from the source node to the terminal node corresponds to the $1$-maximal \Red $(\sigma,\rho)$-dominating sets, and conversely each
$1$-maximal \Red $(\sigma,\rho)$-dominating set corresponds to such a path. 
The difference with the case of $1$-minimal $(\sigma,\rho)$-dominating set is that now we have to ensure that $x_j$ has a certificate when $x_j$ is not included in a partial solution.

\bigskip

\paragraph{\bf The nodes of $DAGM(G)$.}  $(R,R',C,C',i)\in LR_i\times LR_{\bar{i}}\times 2^{A_i\cap \Blue}\times 2^{\comp{A}_i\cap \Blue}\times [n]$ is a node of $DAGM(G)$ whenever $x_i\in \Red$, 
$C=\cSG_i(C)$ and $C'=\cGG_i(C')$.  We call $i$ the \emph{index} of $(R,R',C,C',i)$. Finally  $s=(\emptyset,\emptyset,\emptyset,\emptyset,0)$ is the  \emph{source node} and
$t=(\emptyset,\emptyset,\emptyset,\emptyset,n+1)$ is the \emph{terminal node} of $DAG(G)$.

\bigskip

\paragraph{\bf The arcs of $DAG(G)$} There is an arc from $(R_0,R'_0,C_0,C'_0,j)$ to $(R_P,R_p',C_p,C_p',j+p)$ with $j<j+p\leq n$ if there exist $(R_1,R'_1,C_1,C'_1)$, \ldots,
$(R_{p-1},R_{p-1}',C_{p-1},C_{p-1}')$ such that $(R_i,R_i',C_i,C_i')\in LR_{j+i}\times LR_{\bar{j+i}}\times 2^{A_{j+i}\cap \Blue}\times 2^{\comp{A}_{j+i}\cap \Blue}$ for all $1\leq i \leq p-1$, and

\begin{enumerate}[leftmargin=0.8cm]
\item[(A1)] for each $0\leq i \leq p-2$,
  \begin{enumerate}
  \item[(1.1)] $R_{i}\equiv_{A_{j+i+1}}^d R_{i+1}$ and $R_{i}'\equiv_{\comp{A}_{j+i=1}}^d R_{i+1}'$, and 
  \item[(1.2)] if ($x_{j+i+1}\notin \Blue$ or ($x_{j+i+1}\in \Blue$ and $|N(x_{j+i+1})\cap (R_i\cup R_{i+1}')|\in \rho$ and $|N(x_{j+i+1})\cap (R_i\cup R_{i+1}')|\notin \rho^+$)) then ($C_{i+1}=\cSG_{j+i+1}(C_i)$
    and $C_i'=\cGG_{j+i}(C_{i+1}')$), otherwise we should have ($|N(x_{j+i+1})\cap (R_i\cup R_{i+1}')|\in \rho^+$) and 
    \begin{enumerate}
    \item[(1.2.a)] if $N(x_{j+i+1})\cap (\comp{A}_{j+i+1}\cap \Red)\ne \emptyset$, then $C_{i+1}=\cSG_{j+i+1}(C_i\cup \{x_{j+i+1}\})$, else $C_{i+1}=\cSG_{j+i+1}(C_i)$, and 
    \item[(1.2.b)] if $N(x_{j+i+1}) \cap (A_{j+i} \cap \Red)\ne \emptyset$, then $C_i'=\cGG_{j+i}(C_{i+1}'\cup \{x_{j+i+1}\})$, else $C_i'=\cGG_{j+i}(C_{j+i+1}')$.
    \end{enumerate}
  \item[(1.3)] if $x_{j+i+1}\in \Red$, then either ($N(x_{j+i+1})\cap (C_i \cup C_{i+1}' )\ne \emptyset$) or ($(x_{j+i+1}\in \Blue$ and $|N(x_{j+i+1})\cap (R_i\cup R_{i+1}')|\in \rho^*$).
  \end{enumerate}

\item[(A2)] $R_p\equiv_{A_{j+p}}^d (R_{p-1}\cup \{x_{j+p}\})$, $R_{p-1}'\equiv_{\overline{A_{j+p-1}}}^d (R_p'\cup \{x_{j+p}\})$, $x_{j+p}\in \Red$, and
  \begin{enumerate}
  \item[(2.1)] if $x_{j+p}\in \Blue$, then $|N(x_{j+p})\cap (R_{p-1}\cup R_{p}')|\in \sigma$, 
  \item[(2.2)] if ($x_{j+p}\notin \Blue$ or  ($x_{j+p}\in \Blue$ and $|N(x_{j+p})\cap (R_{p-1}\cup R_{p}')|\notin \sigma^+$)), then  ($C_{p}=\cSG_{p}(C_{p-1})$ and $C_{p-1}'=\cGG_{p-1}(C_{p}')$), otherwise we
    should have ($|N(x_{j+p})\cap (R_{p-1}\cup R_{p}')|\in \sigma^+$) and
    \begin{enumerate}
    \item[(2.2.a)] if $N(x_{j+p})\cap (\comp{A}_{p}\cap \Red)\ne \emptyset$, then $C_{p}=\cSG_{p}(C_{p-1}\cup \{x_{j+p}\})$, else $C_{p}=\cSG_{p}(C_{p-1})$, and
    \item[(2.2.b)] if $N(x_{j+p}) \cap (A_{p-1} \cap \Red)\ne \emptyset$, then $C_{p-1}'=\cGG_{p-1}(C_{p}'\cup \{x_{j+p}\})$, else $C_{p-1}'=\cGG_{p-1}(C_{p}')$.
    \end{enumerate}
  \end{enumerate}
\end{enumerate}

\medskip

We now define arcs from the source node.  There is an arc from the source node to a node $(R,R',C,C',j)$ if
({\small $S=\{x\in (A_j\cap \Blue)\setminus \{x_j\}\mid N(x)\cap (\comp{A}_j\cap \Red)\ne \emptyset$ and $|N(x)\cap (\{x_j\}\cup R')|\in \rho^+\}$})
\begin{enumerate}[leftmargin=0.8cm]
\item[(S1)] $\{x_j\} \equiv_{A_j}^d R$ and $(\{x_j\} \cup R')$ $(\sigma,\rho)$-dominates $A_j\cap \Blue$,
\item[(S2)] if ($x_j\in \Blue$ and $|N(x_j)\cap R'|\in \sigma^+$) then $C=\cSG_j(S\cup \{x_j\})$, otherwise $C=\cSG_j(S)$, and
\item[(S3)] for each red vertex $x\in A_j\setminus \{x_j\}$, either ($N(x)\cap (C'\cup C)\ne \emptyset$) or ($x\in \Blue$ and $|N(x)\cap (R'\cup R)|\in \rho^*$).
\end{enumerate}

\medskip We finally define the arcs to the terminal node.  There is an arc from a node $(R,R',C,C',j)$ to the terminal node if
\begin{enumerate}[leftmargin=0.8cm]
\item[(T1)] $|N(x)\cap (R\cup R')|\in \rho$ for each $x\in \comp{A}_{j+1}\cap \Blue$, and
\item[(T2)] $C'=\cGG_j(\{x\in \comp{A}_j\cap \Blue\mid N(x)\cap (A_j\cap \Red)\ne \emptyset\ \text{and}\ |N(x)\cap R|\in \rho^+\})$,
\item[(T3)] for each red vertex in $\comp{A}_{j}$, then $N(x)\cap (C\cup C')\ne \emptyset$ or ($|N(x) \cap (R\cup R')|\in \rho^*$ if $x\in \Blue$).
\end{enumerate}

\bigskip

$DAGM(G)$ is clearly a DAG, and as for $DAG(G)$ one can construct it in time $O(n^{c\cdot d})$.  Similarly, one can observe that if $P:=(s,v_1,\ldots,v_k,t)$ is a maximal path from the source node to
the terminal node, then if we let $D_i:=\{x_{j_1},\ldots,x_{j_i}\}$ with $v_i:=(R_{j_i},R_{j_i}',C_{j_i},C_{j_i}',j_i))$, then

\begin{itemize}
\item[(i)] $R_{j_i}\in LR_{j_i}$, $R_{j_i}'\in LR_{\bar{j_i}}$;
\item[(ii)]  $D_i\equiv_{A_{j_i}}^d R_{j_i}$, 
\item[(iii)] $D_i\cup R_{j_i}'$ $(\sigma,\rho)$-dominates $A_{j_i}\cap \Blue$;
\item[(iv)] Each $u\in A_{j_i}\setminus D_i$ is either adjacent to a vertex from $C_{j_i}'$, or has a certificate in $A_{j_i}\cap \Blue$.
\item[(v)]  $C_{j_i}=\cSG_{j_i}(S_i)$, where $S_i$ is the set of vertices in $A_{j_i}\cap \Blue$ that are certificates and have a neighbor in $\comp{A}_{j_i}\cap \Red$;
\item[(vi)] $C_{j_i}'=\cGG_{j_i}(S_i')$ where $S_i'$ is the set of vertices in $\comp{A}_{j_i}\cap \Blue$ that are certificates and have a neighbor in
  $A_{j_i}\cap \Red$.
\end{itemize}

Hence, we can prove counterparts to Lemmas \ref{lem:path-dom} and \ref{lem:dom-path} and deduce the following theorem from Section \ref{subsec:mindag-defn}.

\begin{thm}\label{thm:main-ext-max} The set of $1$-maximal \Red $(\sigma,\rho)$-dominating sets in $G$ can be enumerated with linear delay and with polynomial space.  We can moreover count in time $O(n^{c\cdot d})$ the number of $1$-maximal \Red $(\sigma,\rho)$-dominating sets in $G$.
\end{thm}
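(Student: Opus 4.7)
The plan is to mirror the argument of Section \ref{subsec:mindag-defn} step by step, replacing everywhere the role of $\sigma^-,\rho^-$ by $\sigma^+,\rho^+$ and the role of the certificate condition for $u\in D$ by the certificate condition for $u\in \Red\setminus D$. First I would check, exactly as in Lemma \ref{lem:size-dag}, that $DAGM(G)$ is acyclic (arcs go strictly forward in the index) and that it can be built in time $O(n^{c\cdot d})$: the number of nodes of a given index is bounded by $nec(\equiv_{A_i}^d)\cdot nec(\equiv_{\comp{A}_i}^d)\cdot O(n^{2c})=O(n^{c\cdot d})$ by Lemmas~\ref{lem:mim-neighborhoods} and~\ref{lem:cert-bound}, and all the local conditions (A1), (A2), (S1)--(S3), (T1)--(T3) can be checked in polynomial time using the data structures of Lemma~\ref{lem:rep}, so the queue-based reachability argument used for $DAG(G)$ carries over verbatim.

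Next I would establish the analogues of Lemmas~\ref{lem:path-dom} and~\ref{lem:dom-path}. For the ``path gives a solution'' direction I would prove by induction on $i$ that along any path $(s,v_1,\ldots,v_k,t)$ the partial set $D_i=\{x_{j_1},\ldots,x_{j_i}\}$ satisfies the invariants (i)--(vi) listed just before the theorem. Invariants (i)--(iii) are literally the same as in the $1$-minimal case and follow from Lemmas~\ref{lem:equiv} and~\ref{lem:combine} together with conditions (1.1) and (A2). The new content is (iv): each red vertex $u\in A_{j_i}\setminus D_i$ must have a certificate. This is enforced in three places, and I would verify them one by one. Condition (S3) gives the invariant at $j_1$ for all red vertices in $A_{j_1}\setminus\{x_{j_1}\}$. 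Condition (1.3) handles the red vertices $x_{j+i+1}\notin D$ visited along an $\varepsilon$-arc-1, using Lemma~\ref{lem:cert-intersect} to translate ``$|N(x)\cap (R_i\cup R_{i+1}')|\in \rho^+$'' (resp.\ $\rho^*$) into the existence of an appropriate certificate. The inductive step for older red vertices uses the same four facts as in the proof of Lemma~\ref{lem:path-dom}, with $\sigma^-,\rho^-$ replaced by $\sigma^+,\rho^+$; in particular fact~2 shows that an old certificate $v\in A_s\cap \Blue$ for $u\notin D$ remains a certificate after inserting a new vertex, because adding a vertex cannot lower $|N(v)\cap D|$ out of $\rho^+$ once it was in it, so the second half of Lemma~\ref{lem:cert-intersect} applies. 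Invariants (v)--(vi) canonicalise the certificate sets via $\cSG$ and $\cGG$ so that $\trace(P)$ determines $P$, whence different paths have different traces; condition (T3) ensures that the red vertices of $\comp{A}_{j_k}\setminus D$ also receive certificates, so $D_k$ is a $1$-maximal \Red $(\sigma,\rho)$-dominating set by Lemma~\ref{lem:cert}.

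For the converse direction, given a $1$-maximal \Red $(\sigma,\rho)$-dominating set $D=\{x_{j_1}<\cdots<x_{j_k}\}$, I would define the nodes $v_i=(R_{j_i},R_{j_i}',C_{j_i},C_{j_i}',j_i)$ by taking $R_{j_i}$ and $R_{j_i}'$ to be the representatives of $D\cap A_{j_i}$ and $D\cap \comp{A}_{j_i}$ under $\equiv_{A_{j_i}}^d$ and $\equiv_{\comp{A}_{j_i}}^d$, and $C_{j_i},C_{j_i}'$ as $\cSG_{j_i}$ and $\cGG_{j_i}$ applied to the relevant certificate sets (now for vertices of $\Red\setminus D$). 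The existence of the arc from $s$ to $v_1$, of an arc from $v_i$ to $v_{i+1}$ (with intermediate $\varepsilon$-arc-1 tuples chosen canonically), and of the arc from $v_k$ to $t$, then follows by inspection of (S1)--(S3), (A1)--(A2) and (T1)--(T3), using Lemmas~\ref{lem:equiv},~\ref{lem:combine} and~\ref{lem:cert-intersect}. This yields a proposition analogous to Proposition~\ref{prop:path-dom}: $\trace$ is a bijection from the set of $s$-to-$t$ paths in $DAGM(G)$ onto the set of $1$-maximal \Red $(\sigma,\rho)$-dominating sets of $G$.

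Finally, counting and enumeration are obtained by black-boxing Section~\ref{subsec:mindag-defn}: compute the $Np$ values by the recursion of Fact~\ref{fact:compute} in time $O(n^{c\cdot d})$, giving $|\cS_1|=Np(s)$ and hence the count (Theorem~\ref{thm:counting}); then run \textsf{EnumMinDom}/\textsf{EnumPath} on $DAGM(G)$ (after pruning nodes with $Np=-1$ and pre-sorting out-neighbours by distance to $t$) to enumerate all maximal paths, and therefore all $1$-maximal \Red $(\sigma,\rho)$-dominating sets, with linear delay and $O(n^{c\cdot d})$ space (Theorem~\ref{thm:enum}). The main obstacle I expect is bookkeeping in invariant (iv): verifying that condition (1.3) together with the $\cGG$-based propagation of $C'$ keeps every red non-member of the growing set supplied with a certificate, even though a certificate that currently lies in $\comp{A}_j$ may only be revealed as one (via $\rho^+$ or $\rho^*$) once the scan crosses it; this is precisely why conditions (T3) and (S3) quantify over \emph{all} red vertices in the respective sides rather than only the current vertex $x_j$.
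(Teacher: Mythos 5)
Your proposal follows exactly the paper's route: it constructs $DAGM(G)$ with $\sigma^-,\rho^-$ replaced by $\sigma^+,\rho^+$ and certificates required for the red vertices outside the partial solution, establishes the invariants (i)--(vi) to get counterparts of Lemmas~\ref{lem:path-dom} and~\ref{lem:dom-path}, and then reuses the counting and enumeration machinery of Section~\ref{subsec:mindag-defn} verbatim. This matches the paper's own (much terser) argument, and your added bookkeeping about how conditions (S3), (1.3) and (T3) supply certificates for non-members is consistent with what the paper leaves implicit.
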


\section{Enumeration of minimal dominating sets for unit square graphs}\label{sec:unit-square}
In this section we prove that all minimal dominating sets of a unit square graph can be enumerated in incremental polynomial time. In Section~\ref{sec:local-mim} we show that the class of unit square graphs has locally bounded LMIM-width. In Section~\ref{sec:flipping} we use this property and Theorem~\ref{thm:mainthm}, to obtain an enumeration algorithm for minimal dominating sets. 
To do it, we use the \emph{flipping} method proposed by Golovach,  Heggernes,  Kratsch and Villanger in~\cite{GolovachHKV14}.

\subsection{Local LMIM-width of unit square graphs}\label{sec:local-mim}
First, we introduce some additional notations. For $x,y\in\mathbb{R}$ such that $x\leq y$, $[x,y]=\{z\in\mathbb{R}\mid x\leq z\leq y\}$. 
Let $G$ be a unit square graph and suppose that $f\colon V(G)\rightarrow\mathbb{Q}^2$ is a realization of $G$. (See Section~\ref{sec:defs} for more details on the point model of unit
square graphs used in our paper.) 
For a vertex $v\in V(G)$, 
$\fr(v)=x_f(x)-\lfloor x_f(v)\rfloor$ is the fractional part of the $x$-coordinate of the point representing $v$.

\begin{lem}\label{lem:matching}
Let $G$ be a unit square graph with a realization $f$ such that for every $v\in V(G)$ the point $f(v)$ belongs to $[1,w] \times [1,h]$, where $h,w \in\bN$. If for $x\in [1,w]$, $A=\{v\in V(G)\mid x_f=x\}$ is a non-empty proper subset of $V(G)$, then $\mim_G(A,\overline{A})\leq h$.
\end{lem}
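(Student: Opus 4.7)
The plan is to bound the size of any induced matching $\{u_1v_1,\ldots,u_kv_k\}$ in $G[A,\overline{A}]$ (with $u_i\in A$ and $v_i\in\overline{A}$) by $h$. The key observation is that all vertices of $A$ share the same $x$-coordinate, namely $x$, so the non-edge conditions across the cut translate directly into constraints on $y$-coordinates.

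First I would exploit the edge $u_iv_i\in E(G)$, which yields $\|f(u_i)-f(v_i)\|_\infty<1$; in particular, $|x-x_f(v_i)|<1$ and $|y_f(u_i)-y_f(v_i)|<1$ for every $i$. Then for $i\neq j$, because $x_f(u_i)=x$ and $|x-x_f(v_j)|<1$, the non-edge $u_iv_j\notin E(G)$ forces $|y_f(u_i)-y_f(v_j)|\geq 1$. Symmetrically $|y_f(u_j)-y_f(v_i)|\geq 1$.

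Next I would encode these as a separation of midpoints. Set $m_i=(y_f(u_i)+y_f(v_i))/2$ and $a_i=(y_f(u_i)-y_f(v_i))/2$, so that $y_f(u_i)=m_i+a_i$, $y_f(v_i)=m_i-a_i$ and $|a_i|<1/2$. Substituting into the two inequalities above yields, for every $i\neq j$,
\begin{equation*}
  |(m_i-m_j)+(a_i+a_j)|\geq 1 \quad\text{and}\quad |(m_i-m_j)-(a_i+a_j)|\geq 1.
\end{equation*}
A short case analysis (writing $X=m_i-m_j$ and $Y=a_i+a_j$ with $|Y|<1$, and examining the signs of $X+Y$ and $X-Y$) shows that both inequalities can hold only when $|X|\geq 1+|Y|\geq 1$. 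Thus the midpoints $m_1,\ldots,m_k$ are pairwise at distance at least $1$.

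Finally, since every $y$-coordinate lies in $[1,h]$, so does each $m_i$. Packing $k$ points in an interval of length $h-1$ with pairwise distance at least $1$ forces $k-1\leq h-1$, hence $k\leq h$, which yields $\mim_G(A,\overline{A})\leq h$. The main technical step is the sign case analysis that turns the pair of ``plus-minus'' inequalities into an actual separation of the midpoints; everything else is a straightforward packing argument.
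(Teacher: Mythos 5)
Your proof is correct, and it reaches the bound by a different bookkeeping than the paper, although both arguments share the same geometric core: every $v_j$ matched across the cut has $|x-x_f(v_j)|<1$ automatically (it is adjacent to a vertex of $A$, all of which sit at abscissa $x$), so adjacency between matched-side vertices is governed by the $y$-coordinates alone, and the problem becomes one-dimensional. From there the paper proceeds combinatorially: it sorts the endpoints $a_1,\ldots,a_k\in A$ and $b_1,\ldots,b_k\in\overline{A}$ by $y$-coordinate, proves the matching must be order-preserving ($a_ib_i\in M$), shows consecutive matched pairs are separated by at least $1$ in the $y$-direction, and concludes $y_f(a_i)\geq i$ by induction, hence $k\leq h$. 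You instead associate to each matched pair the midpoint $m_i=(y_f(u_i)+y_f(v_i))/2$ and half-gap $a_i$ with $|a_i|<1/2$, turn the two non-edge conditions for $i\neq j$ into $|X+Y|\geq 1$ and $|X-Y|\geq 1$ with $X=m_i-m_j$, $|Y|<1$, and deduce $|X|\geq 1+|Y|\geq 1$; packing $k$ pairwise $1$-separated midpoints in $[1,h]$ gives $k\leq h$. The midpoint trick buys you a shorter argument that avoids the nontrivial alignment claim ($a_ib_i\in M$) and the sorting/induction, at the cost of a small algebraic case analysis; the paper's version is more explicitly structural (it exhibits the non-crossing shape of the matching), which some readers may find more transparent, but your packing argument is equally rigorous and arguably cleaner. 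One small point worth making explicit when writing it up: the hypothesis $|Y|<1$ is what rules out the mixed-sign cases (e.g.\ $X+Y\geq 1$ together with $X-Y\leq -1$ would force $|Y|\geq 1$), and it comes precisely from the matching edges $u_iv_i$, $u_jv_j$ being edges of $G$.
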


\begin{proof}
  Let $M$ be a maximum induced matching in $G[A,\comp{A}]$. Denote by
$M_A$ the set of end-vertices of the edges of $M$ in $A$ and let $M_B$ be the set of end-vertices of the matching in $\comp{A}$.
As all the vertices of $M_A$ have the same $x$-coordinate, and each vertex of
$M_B$ is adjacent to some vertex in $M_A$, a vertex $u \in M_A$ is adjacent to a
vertex $v \in M_B$  only if 
$|y_f(u) - y_f(v)| < 1$. Denote by $a_1,\ldots,a_k$ and $b_1,\ldots,b_k$ the vertices of $M_A$ and $M_B$ respectively 
and assume that they are ordered by the increase of their $y$-coordinates.

We claim that $a_ib_i\in M$ for $i\in\{1,\ldots,k\}$. To obtain a contradiction, suppose that there is $a_i$ that is not adjacent to $b_i$ and choose the minimum index $i$ for which it holds.
Then $a_ib_j\in M$ and $b_ia_s\in M$ for some $j,s>i$.  
If $b_{i}$ is adjacent to $a_{s}$ but not $a_i$, we must have $|y_f(b_i)-y_f(a_i)|\geq 1$ and $|y_f(b_i)-y_f(a_s)|<1$.
Since $y_f(a_s)\geq y_f(a_i)$, $y_f(b_i)\geq y_f(a_i)+1$.
 But as $y_f(b_j) \geq y_f(b_i) \geq y_f(a_i) + 1$, $b_j$ and $a_i$ cannot be
adjacent after all; a contradiction.

Now we show that $y_f(a_i)\geq y_f(b_{i-1}) + 1$ and $y_f(b_i) \geq y_f(a_{i-1}) + 1$ for $i\in\{2,\ldots,k\}$. 
Because $a_{i-1}b_{i-1}\in M$, $|y_f(a_{i-1})-y_f(b_{i-1})|< 1$. As $a_{i-1}b_i,a_ib_{i-1}\notin E(G)$, $|y_f(a_{i-1})-y_f(b_{i})|\geq 1$ and $|y_f(a_{i})-y_f(b_{i-1})|\geq1$.
We have that  $y_f(a_i)\geq y_f(b_{i-1}) + 1$ and $y_f(b_i) \geq y_f(a_{i-1}) + 1$, because $y_f(a_i)\geq y_f(a_{i-1})$ and $y_f(b_i)\geq y_f(b_{i-1})$.

Next, we claim that $y_f(a_i),y_f(b_i)\geq i$. Clearly, $y_f(a_1),y_f(b_1)\geq 1$. 
Because $y_f(a_i)\geq y_f(b_{i-1}) + 1$ and $y_f(b_i) \geq y_f(a_{i-1}) + 1$ for $i\in\{2,\ldots,k\}$, we have that claim holds for all $i\in\{1,\ldots,k\}$
by induction.

Because $k\leq y_f(a_k)\leq h$, we conclude that $\mim_G(A,\comp{A})=k\leq h$.
\end{proof}

\begin{lem}\label{lem:loc-mim-w}
Let $G$ be a unit square graph with a realization $f$ such that for every $v\in V(G)$ the point $f(v)$ belongs to  $[1,w] \times [1,h]$, where $h,w \in\bN$. 
Then $\lmimw(G)\leq 2hw$. Moreover, a linear ordering of vertices of MIM-width at most $2hw$ can be constructed in polynomial time.
\end{lem}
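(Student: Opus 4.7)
The plan is to prove the stronger statement that the (unrestricted) maximum induced matching of $G$ satisfies $\mim(G)\leq hw$, from which $\lmimw(G)\leq hw\leq 2hw$ follows immediately for \emph{every} linear ordering of $V(G)$. In particular, Lemma~\ref{lem:matching} is not actually needed for this bound; only the elementary geometric fact that $L_\infty$-close points are adjacent in $G$ is used.

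The key observation is that any half-open axis-aligned $1\times 1$ cell in $\mathbb{R}^2$ contains only points whose pairwise $L_\infty$-distance is strictly less than $1$, and hence induces a clique in $G$. I would tile the slightly enlarged region $[1,w+1)\times[1,h+1)\supseteq[1,w]\times[1,h]$ by the $hw$ unit cells $C_{i,j}=[i,i+1)\times[j,j+1)$ with $i\in\{1,\ldots,w\}$, $j\in\{1,\ldots,h\}$, and assign each vertex $v$ to the unique cell containing $f(v)$. The half-open convention cleanly handles vertices whose realization lies on the right boundary $x=w$ or top boundary $y=h$ without double counting.

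Given any induced matching $M$ in $G$, I would bound $|M|\leq hw$ by a short packing argument: for two distinct edges $e_1,e_2\in M$, no endpoint of $e_1$ can share a cell with an endpoint of $e_2$, since otherwise they would be adjacent (the cell being a clique), contradicting the fact that $M$ is induced. Hence the nonempty cell-sets of distinct edges of $M$ are pairwise disjoint, and the total number of cells they occupy is at least $|M|$ but at most $hw$; i.e.\ $|M|\leq hw$. Applied to the definition of MIM-width, this yields $\mim_G(A,\overline{A})\leq\mim(G)\leq hw\leq 2hw$ for every $A\subseteq V(G)$, and therefore $\lmimw(G)\leq 2hw$ for every linear ordering.

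For the algorithmic statement, an arbitrary linear ordering of $V(G)$, produced in linear time, already witnesses the bound, so no nontrivial construction is required. There is essentially no technical obstacle: the only care needed is the half-open tiling to avoid double counting on the upper and right boundaries of the bounding box.
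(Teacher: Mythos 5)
Your argument breaks down at the very first step, where you pass from an induced matching in the cut graph to an induced matching in $G$. By the paper's definition, $\mim_G(A,\overline{A})$ is the maximum induced matching of $G[A,\overline{A}]$, a graph that contains \emph{only} the edges between $A$ and $\overline{A}$; all edges inside $A$ and inside $\overline{A}$ are absent. Consequently, two endpoints of distinct matching edges that lie on the \emph{same} side of the cut are allowed to be adjacent in $G$ — they may even be co-located in the plane — without violating the induced-matching condition in $G[A,\overline{A}]$. Your packing argument (``no endpoint of $e_1$ can share a cell with an endpoint of $e_2$, since the cell is a clique'') is therefore invalid for same-side endpoints, and the inequality $\mim_G(A,\overline{A})\leq \mim(G)$ that your whole plan rests on is false in general. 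A concrete illustration inside the unit-square setting: place four vertices $a_1,\dots,a_4$ in a tiny cluster (a clique of $G$, all in one of your cells) and four partners $b_1,\dots,b_4$ sticking out near the four corners of the unit square around the cluster, so that $\|a_i-b_i\|_\infty<1$ but $\|a_i-b_j\|_\infty\geq 1$ for $i\neq j$; with $A=\{a_1,\dots,a_4\}$ this is an induced matching of size $4$ in $G[A,\overline{A}]$ whose $A$-side endpoints all occupy a single cell, so your ``disjoint cell-sets'' claim fails. What your cell argument correctly shows is only $\mim(G)\leq hw$, which says nothing about cuts.

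Because of this, neither the bound $hw$ nor the much stronger assertion that \emph{every} linear ordering witnesses MIM-width at most $2hw$ is established by your proof; the latter claim would in particular make the paper's construction pointless, and nothing in your argument supports it. The paper's proof genuinely uses a carefully chosen ordering — sorting the vertices by the fractional part $\fr(v)$ of their $x$-coordinates — so that for any prefix $A$ the pigeonhole principle yields $2h+1$ matched $A$-vertices with the same integer part $\lfloor x_f(v)\rfloor$, which are then split according to whether $\fr(v)$ is smaller than or equal to the threshold $t=\max\{\fr(v)\mid v\in A\}$ and handled by Lemma~\ref{lem:matching} (the equal-$x$-coordinate case, bounded by $h$), using the structure of the ordering to control adjacencies after snapping coordinates down to $\lfloor x_f(v)\rfloor$. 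That ordering-specific reduction is the essential content of the lemma and cannot be discarded as you propose; to salvage your approach you would need a direct packing bound on cross induced matchings of $G[A,\overline{A}]$ for arbitrary $A$, which is a different (and unproven) statement.
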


\begin{proof}
Let $v_1,\ldots,v_n$ be the vertices of $G$ ordered by increasing $\fr$-value, i.e., $\fr(v_i)\leq \fr(v_j)$ if $i\leq j$. We show that this is an ordering of MIM-width at most $2hw$.

For contradiction, assume that for some $i\in\{1,\ldots,n-1\}$,  $\mim_G(A,\comp{A})\geq 2hw+1$ where $A=\{v_1,\ldots,v_i\}$, i.e., 
the graph $G[A, \comp{A}]$ has an induced matching $M$ of size $2hw+1$. 
Let $V(M)$ denote the set of the end-vertices of the edges of $M$. By the pigeonhole principle, for
some positive integer $p\leq w$ there are at least $2h + 1$ vertices $v\in V(M) \cap A$ so that $\lfloor x_f(v)\rfloor = p$. 
Denote this subset of $A \cap
V(M)$ by $C_A$, and denote by $C_B$ the vertices of $V(M) \setminus A$ adjacent
to $C_A$.  Let $t = \max\{\fr(v) \mid v \in A\}$, and observe that $\{ v \mid \fr(v) < t \} \subset A$ and $\{ v \mid \fr(v)> t \} \cap A = \emptyset$.
We now partition $C_A$ into two parts $C_A^{<t} = \{v \in C_A \mid \fr(v) < t\}$
and $C_A^{=t} = \{v \in C_A \mid \fr(v) = t \}$ and argue that neither of these
parts can be of size more than $h$, contradicting that $|C_A| \geq 2h+1$.

We first show that $|C_A^{=t}| \leq h$. For each $v\in C_A^{=t}$,  
$\lfloor x_f(v)\rfloor  = p$ and $\fr(v) = t$. Hence,  $x_f(v) = p + t$ for all $v \in C_A^{=t}$.
By Lemma~\ref{lem:matching}, the size of the maximum induced matching in  $G[C_A^{=t},C_B]$ is at most $h$ and this implies that 
$|C_A^{=t}|\leq h$.

To show that also $|C_A^{<t}| \leq h$, we will show that for the sake of the
induced matching, all the $x$-coordinates of the vertices $v$ of $C_A^{<t}$
might as well have $\fr(v) = 0$, and therefore we can apply Lemma~\ref{lem:matching}.

Let $v\in C_A^{<t}$. Then we construct a new vertex $v'$ represented by the point $(\lfloor x_f(v)\rfloor,y_f(v))$.
We will now show that $v'$  is adjacent to a vertex $u \in C_B$ if and only if $v$ is adjacent to
$u$. As the $y$-coordinates of $v$ and $v'$ are the same, we only
need to prove that $|x_f(v) - x_f(u)| < 1$ if and only if $|\lfloor x_f(v)\rfloor -x_f(u)|< 1$. 

Suppose that  $v$ is adjacent to $u$ but $v'$ is not. Because $x_f(v)\geq \lfloor x_f(v)\rfloor$, we have that
$\lfloor x_f(v)\rfloor+1\geq  x_f(u)>x_f(v) + 1$. 
However, that means $\fr(u) \leq \fr(v) < t$, which
implies that $u\in A$ contradicting $u\in C_B$. Similarly,
suppose $v'$ is adjacent to $u$ but $v$ is not. Now 
$\lfloor x_f(v)\rfloor-1< x_f(u)\leq x_f(v)-1$,  which again implies
that $\fr(u) \leq \fr(v) < t$, contradicting that $u$ is in $C_B$.

Consider $S = \{v' \mid v \in C_A^{<t} \}$, where each $v'$ is represented by the point $(\lfloor x_f(v)\rfloor,y_f(v))=(p,y_f(v))$. 
Because each $v'\in S$ is adjacent to $u\in C_B$ if and only if $v$ is adjacent to $u$, by Lemma~\ref{lem:matching}, 
$|C_A^{<t}|=|S|\leq h$.

It remains to show that the ordering of $V(G)$ can be constructed in polynomial time. Clearly, the ordering can be done in time $O(n\log n)$ if we assume that we can compute $\fr(v)$ and compare the $\fr$-values of two vertices in time $O(1)$. Otherwise, if the table of the values $f\colon V(G)\rightarrow \mathbb{Q}^2$ is given in the input, we still can produce the ordering in polynomial time.
\end{proof}

Now we are ready to show that the class of unit square graphs has locally bounded LMIM-width.

\begin{thm}\label{thm:loc-bound}
For a unit square graph $G$, $u\in V(G)$ and a positive integer $r$, $\lmimw(G[N_G^r[u]])=O(r^2)$.
Moreover, if a realization $f\colon V(G)\rightarrow\mathbb{Q}^2$ of $G$ is given, then  a linear ordering of the vertices of MIM-width $O(r^2)$ can be constructed in polynomial time.
\end{thm}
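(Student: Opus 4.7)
The plan is to reduce the statement directly to Lemma~\ref{lem:loc-mim-w} by showing that the induced subgraph on $N_G^r[u]$ admits a realization whose image sits inside a rectangle of size $O(r)\times O(r)$ with integer side lengths. The key geometric observation is that adjacency in a unit square graph requires $L_\infty$-distance strictly less than $1$, and therefore the $L_\infty$-distance is sub-additive along paths: if $v \in N_G^r[u]$, then there is a path $u=w_0,w_1,\ldots,w_\ell=v$ in $G$ with $\ell \leq r$, and by the triangle inequality for $\|\cdot\|_{\infty}$,
\begin{align*}
\|f(u)-f(v)\|_{\infty} \leq \sum_{i=0}^{\ell-1} \|f(w_i)-f(w_{i+1})\|_{\infty} < \ell \leq r.
\end{align*}
Hence every point representing a vertex of $W := N_G^r[u]$ lies in the open square of radius $r$ centered at $f(u)$.

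Next, I would translate the realization so that this square fits in an axis-aligned rectangle with integer corners suitable for Lemma~\ref{lem:loc-mim-w}. Concretely, write $f(u)=(a,b)$ and define $f'\colon W\to\mathbb{Q}^2$ by $f'(v) := f(v) + (r+1-a,\, r+1-b)$. Then $f'$ is still a realization of $G[W]$ (translations preserve all pairwise $L_\infty$ distances), and by the displayed inequality above every point $f'(v)$ lies in the open square $(1,2r+1)\times(1,2r+1)$, which is contained in $[1,2r+1]\times[1,2r+1]$. Setting $h=w=2r+1\in\bN$, Lemma~\ref{lem:loc-mim-w} applies and yields
\begin{align*}
\lmimw(G[W]) \leq 2hw = 2(2r+1)^2 = O(r^2).
\end{align*}

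For the algorithmic part, the translation $f'$ is computed from $f$ in constant time per vertex, and the second clause of Lemma~\ref{lem:loc-mim-w} produces a linear ordering of $W$ of MIM-width at most $2hw$ in polynomial time from $f'$; concatenating these observations gives a polynomial-time procedure that outputs the required ordering of $N_G^r[u]$.

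There is essentially no serious obstacle here: once the sub-additivity of $\|\cdot\|_{\infty}$ along edges is noted, the argument is just a bounding-box construction feeding into the earlier lemma. The only mild care needed is that Lemma~\ref{lem:loc-mim-w} requires the dimensions $h,w$ to be positive integers and the coordinates to lie in the \emph{closed} rectangle $[1,w]\times[1,h]$; using the strict inequality $\|f(u)-f(v)\|_{\infty}<r$ and rounding up to $2r+1$ handles both requirements at once.
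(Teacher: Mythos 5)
Your proposal is correct and follows essentially the same route as the paper: both observe that every vertex of $N_G^r[u]$ is represented by a point within $L_\infty$-distance $r$ of $f(u)$, translate the realization into a box $[1,2r+1]\times[1,2r+1]$, and invoke Lemma~\ref{lem:loc-mim-w} with $h=w=2r+1$ to get the bound $2(2r+1)^2=O(r^2)$ and the polynomial-time ordering. Your explicit triangle-inequality argument along a path merely spells out a step the paper leaves implicit.
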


\begin{proof}
Let $f$ be a realization of $G$. Without loss of generality we may assume that $\min\{x_f(v)\mid v\in N_G^r[u]\}=\min\{y_f(v)\mid v\in N_G^r[u]\}=1$. Otherwise, we can shift the points representing vertices.
For any $v\in N_G^r[u]$, $|x_f(u)-x_f(v)|\leq r$ and $|y_f(u)-y_f(v)|\leq r$. We obtain that $f(v)\in[1,2r+1]\times[1,2r+1]$. By Lemma~\ref{lem:loc-mim-w},
 $\lmimw(G[N_G^r[u]])=O(r^2)$ and the corresponding ordering of the vertices can be constructed in polynomial time.
\end{proof}

\subsection{Enumeration by flipping for graphs of locally bounded LMIM-width}\label{sec:flipping}
We use a variant of the \emph{flipping} method proposed by Golovach,  Heggernes,  Kratsch and Villanger in~\cite{GolovachHKV14}.
Given a minimal dominating set $D^*$, the flipping operation replaces an isolated vertex of $G[D^*]$ with its neighbor outside of $D^*$, and, if necessary, adds or deletes some vertices to obtain new minimal dominating sets $D$, such that $G[D]$ has more edges compared to $G[D^*]$. The enumeration algorithm starts with enumerating all maximal independent sets of the input graph $G$ using the algorithm of Johnson,  Papadimitriou, and Yannakakis \cite{JohnsonP88}, which gives the initial minimal dominating sets. Then the flipping operation is applied to every appropriate minimal dominating set found, to find new minimal dominating sets inducing subgraphs with more edges. 

Let $G$ be a graph. Let also $D\subseteq V(G)$. For $u\in D$, $C_D[u]=\{v\in V(G)\mid v\in N_G[u]\setminus N_G[D\setminus\{v\}]\}$ and 
$C_D(u)=\{v\in V(G)\mid v\in N_G(u)\setminus N_G[D\setminus\{v\}]\}=C_D[u]\setminus\{u\}$. Observe that if $D$ is a minimal dominating set, then $C_D(u)$ is the set of certificates for a vertex $u\in D$.

Let us describe the variant of the flipping operation from  \cite{GolovachHKV14}, that we use. Let $G$ be the input graph; we fix an (arbitrary) order of its vertices: $v_1,\ldots,v_n$. 
Suppose that $D'$ is a dominating set of $G$. We say that the minimal dominating set $D$ is obtained from $D'$ by \emph{greedy removal of vertices} ({\em with respect to order $v_1,\ldots,v_n$}) if 
we initially let $D=D'$, and then recursively apply the following rule:
{\em If $D$ is not minimal, then find a vertex $v_i$ with the smallest index $i$ such that
 $D\setminus\{v_i\}$ is a dominating set in $G$, and set $D=D\setminus\{v_i\}$.}
Clearly, when we apply 
this rule,  we never remove vertices of $D'$ that have certificates.  Whenever greedy removal of vertices of a dominating set is performed,  it is done with respect to this ordering.  

Let $D$ be a minimal dominating set of $G$ such that $G[D]$ has at least one edge $uw$. Then the vertex $u\in D$ is dominated by the vertex $w\in D$. Therefore, $C_D[u]=C_D(u)\neq \emptyset$.
Let $X$ be an non-empty inclusion-maximal independent set such that  $X\subseteq C_D(u)$. 
Consider the set $D'=(D\setminus\{u\})\cup X$. Notice that $D'$ is a dominating set in $G$, since all vertices of $C_D(u)$ are dominated by $X$ by the maximality of $X$ and $u$ is dominated by $w$, 
but $D'$ is not necessarily minimal, because it can happen that  $X$ dominates all the certificates of some vertex of $D\setminus \{u\}$.
We apply greedy removal of vertices to $D'$ to obtain a minimal dominating set. 
Let $Z$ be the set of vertices that are removed by this to ensure minimality. 
Observe that $X \cap Z = \emptyset$ and $u\notin Z$ by the definition of these sets; in fact there is no edge between a vertex of $X$ and a vertex of $Z$. Finally, let   $D^*=((D\setminus \{u\})\cup X)\setminus Z$.

It is important to notice that $|E(G[D^*])|<|E(G[D])|$. Indeed, to construct $D^*$, we remove the endpoint $u$ of the edge $uw\in E(G[D])$ and, therefore, reduce the number of edges. Then we add $X$ but these vertices form an independent set in $G$ and, because they are certificates for $u$ with respect to $D$, they are not adjacent to any vertex of $D\setminus\{u\}$. Therefore, $|E(G[D^*])|\leq|E(G[D'])|<|E(G[D])|$. 

The {\em flipping} operation is exactly the {\em reverse} of how we generated $D^*$ from $D$; \ie, it replaces a non-empty independent set $X$ in $G[D^*]$ such that  $X\subseteq G[D^*]\cap N_G(u)$ for a vertex $u\notin D^*$ with their neighbor $u$ in $G$ to obtain $D$.  In particular, we are interested in all minimal dominating sets $D$ that can be generated from $D^*$ in this way. 
Given $D$ and $D^*$ as defined above, we say that $D^*$ is a \emph{parent of $D$ with respect to flipping $u$ and $X$}. We say that $D^*$ is a \emph{parent} of $D$ if there is a  vertex $u\in V(G)$ and an independent set $X\subseteq N_G(u)$ such that 
$D^*$ is a parent with respect to flipping $u$ and $X$. It is important to note that each minimal dominating set $D$ such that $E(G[D])\neq\emptyset$ has a unique parent with respect to flipping of any $u\in D\cap N_G[D\setminus\{u\}]$ and a maximal independent set $X\subseteq C_D(u)$, as $Z$ is lexicographically first sets selected by a greedy algorithm. 
Similarly, we say that $D$ is a \emph{child} of $D^*$ (with respect to flipping $u$ and $X$) if $D^*$ is the parent of $D$ (with respect to flipping $u$ and $X$).

The proof of the following lemma is implicit in~\cite{GolovachHKV14}.

\begin{lem}[\cite{GolovachHKV14}]\label{lem:main}
Suppose that for a graph $G$, all independent sets $X\subseteq N_G(u)$ for a vertex $u$ can be enumerated in polynomial time.
Suppose also that there is an enumeration algorithm $\mathcal{A}$ that, given a minimal dominating set $D^*$ of a graph $G$
 such that $G[D^*]$ has an isolated vertex, a vertex $u\in V(G)\setminus D^*$ and a non-empty independent set $X$ of $G[D^*]$ such that
$X\subseteq D^*\cap N_G(u)$, generates with polynomial delay a family of minimal dominating sets $\mathcal{D}$ with the property that $\mathcal{D}$ contains all minimal dominating sets $D$ 
that are children of $D^*$ with respect to flipping $u$ and $X$.
Then all minimal dominating sets of $G$ can be enumerated in incremental polynomial time.
\end{lem}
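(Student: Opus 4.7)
My plan is to perform a BFS-style traversal of the parent-child relation introduced just before the lemma, with the maximal independent sets of $G$ serving as the roots. Recall that a minimal dominating set $D$ with $E(G[D])=\emptyset$ is necessarily a maximal independent set (and conversely); hence the roots can be enumerated with polynomial delay by the Johnson-Papadimitriou-Yannakakis algorithm~\cite{JohnsonP88}. Every other minimal dominating set admits a parent $D^*$ with strictly fewer induced edges, so iterating the parent-pointer leads to a root in finitely many steps. Moreover, whenever $D^*$ is the parent of some $D$ with respect to flipping $(u,X)$, the vertices of $X$ are isolated in $G[D^*]$ (they were external private neighbors of $u$ in $D$ and $X$ is independent), so $\mathcal{A}$ is applicable at every parent we will need to expand.

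The algorithm I would use maintains a FIFO queue $Q$ of discovered-but-unexpanded minimal dominating sets and a hash table $S$ recording all minimal dominating sets output so far. During initialization, I would run the JPY enumeration and, as each maximal independent set is produced, insert it into $S$ and $Q$ and output it. The main phase repeatedly pops $D^*$ from $Q$ and, provided $G[D^*]$ has an isolated vertex, iterates over each vertex $u\in V(G)\setminus D^*$ and each non-empty independent set $X\subseteq D^*\cap N_G(u)$ (polynomially many by hypothesis), invoking $\mathcal{A}(D^*,u,X)$. For every candidate $D$ it outputs, I would verify in polynomial time, by direct simulation of the definition, that $D^*$ is the parent of $D$ with respect to flipping $u$ and $X$: confirm that $D$ is a minimal dominating set, compute $(D\setminus\{u\})\cup X$, apply greedy removal with respect to the fixed vertex ordering, and compare the result to $D^*$. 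If the check succeeds and $D\notin S$, I would output $D$, insert it into $S$, and enqueue it.

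Correctness follows because every minimal dominating set $D$ with $E(G[D])\neq\emptyset$ has a unique parent $D^*$ for each admissible $(u,X)$, and by hypothesis $D$ appears in the family enumerated by $\mathcal{A}(D^*,u,X)$; an induction on $|E(G[D])|$ then shows every minimal dominating set is eventually reached. The main obstacle I anticipate is converting the natural output-polynomial bound into genuine \emph{incremental} polynomial delay. The per-parent work decomposes into polynomially many invocations of $\mathcal{A}$, each with polynomial per-output delay; each output is itself a minimal dominating set of $G$, and both duplicate-detection via $S$ and parent-verification cost polynomial time. However, $\mathcal{A}$ may emit many non-child or already-seen candidates before producing a genuinely new child, creating silent intervals. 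I would address this by interleaving the pending $\mathcal{A}$-invocations in round-robin fashion and amortizing: the delay preceding the $(i{+}1)$-th output is charged against the $i$ previously emitted sets together with the polynomially many $(u,X)$ triples associated with their parents, yielding a $\mathrm{poly}(n+m,i)$ bound on the delay and completing the proof.
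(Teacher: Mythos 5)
Your overall skeleton is exactly the intended one (the paper does not reprove this lemma but refers to the flipping method of \cite{GolovachHKV14}): seed the enumeration with the maximal independent sets via \cite{JohnsonP88} (these are precisely the minimal dominating sets inducing edgeless subgraphs), traverse the parent--child relation by expanding each recorded minimal dominating set $D^*$ over all pairs $(u,X)$ with $u\in V(G)\setminus D^*$ and $X\subseteq D^*\cap N_G(u)$ a non-empty independent set (polynomially many by the first hypothesis), deduplicate with a dictionary $S$, and argue completeness by induction on $|E(G[D])]|$ using the fact that every minimal dominating set with an edge is a child of its (strictly edge-poorer) parent. All of that is correct, including the observation that the vertices of $X$ are isolated in $G[D^*]$, so $\mathcal{A}$ is applicable wherever it is needed.

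The genuine gap is in your delay analysis, and it is caused by the parent-verification filter. You only output a candidate $D$ emitted by $\mathcal{A}(D^*,u,X)$ if it passes the check that $D^*$ is its parent with respect to $(u,X)$; candidates that fail the check and are not in $S$ are silently discarded. The lemma only guarantees that the family $\mathcal{D}$ \emph{contains} all children; it may additionally contain arbitrarily many other minimal dominating sets (and in the paper's application, Lemma \ref{lem:corr-enum}, it typically does), and its size is not bounded by any polynomial in $n+m$ and the number $i$ of sets output so far. Hence a single invocation of $\mathcal{A}$ may emit an enormous number of new-but-non-child sets, all of which your algorithm discards; this work can be charged neither to previously emitted sets nor to the $(u,X)$ triples, so the claimed bound $\mathrm{poly}(n+m,i)$ on the delay does not follow (round-robin interleaving does not help, since no output is produced during such a stretch). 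The fix is to drop the filter: output, record in $S$, and enqueue \emph{every} set produced by $\mathcal{A}$ that is not already in $S$ --- all members of $\mathcal{D}$ are minimal dominating sets, so this is sound, completeness is unaffected because the children are still among them, and now each invocation of $\mathcal{A}$ (which never repeats an output) produces at most $|S|\leq i$ non-new outputs, so the silent time before the $(i{+}1)$-th output is bounded by $i$ expansions times $\mathrm{poly}(n+m)$ invocations times at most $i$ duplicate outputs each handled in polynomial time, i.e.\ $\mathrm{poly}(n+m,i)$, which is the standard amortization. (As a minor additional point, your verification as described is also incomplete --- it omits checking that $u$ is non-isolated in $G[D]$ and that $X$ is an inclusion-maximal independent subset of $C_D(u)$ --- but this becomes moot once the filter is removed.)
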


To obtain our main result, we will show that there is indeed an algorithm as algorithm $\mathcal{A}$ described in the statement of Lemma \ref{lem:main} when the input graph $G$ is a unit square graph.
We
show that we can construct $\mathcal{A}$ by reduction to the enumeration of minimal \Red dominating set in an auxiliary colored induced subgraph of $G[N_G^3[u]]$.  
Let $D^*$ be a minimal dominating set of a graph $G$ such that $G[D^*]$ has an isolated vertex. Let also $u\in V(G)\setminus D^*$ and $X$ is a non-empty independent set  of $G[D^*]$ such that $X\subseteq D^*\cap N_G(u)$. Consider the set $D\rq{}=(D\setminus X)\cup \{u\}$. Denote by $\Blue$ the set of vertices that are not dominated by $D\rq{}$. Notice that $\Blue\subseteq N_G(X)\setminus N_G[u]$. Therefore, $\Blue\subseteq N_G^2[u]$.  Let $\Red=N_G(\Blue)\setminus N_G[X]$. Clearly, $\Red\subseteq N_G^3[u]$.
We construct the colored graph $H=G[\Red\cup\Blue]$. Let $\mathcal{A}'$ be an algorithm that enumerates minimal \Red dominating sets in $H$. Assume that if $\Blue=\emptyset$, then $\mathcal{A}'$ returns $\emptyset$ as the unique \Red dominating set. We construct $\mathcal{A}$ as follows.
\begin{itemize}[leftmargin=1.2cm]
\item[Step 1.] If $\mathcal{A}'$ returns an empty list of sets, then $\mathcal{A}$ returns an empty list as well. 
\item[Step 2.] For each \Red dominating set $R$ of $H$, consider $D''=D'\cup R$ and construct a minimal dominating set $D$ from $D''$ by greedy removal.
\end{itemize}

\begin{lem}\label{lem:corr-enum}
If $\mathcal{A}'$ lists all minimal \Red dominating sets with polynomial delay, then $\mathcal{A}$ generates with polynomial delay a family of minimal dominating sets $\mathcal{D}$ with the property that $\mathcal{D}$ contains all minimal dominating sets $D$ 
that are children of $D^*$ with respect to flipping $u$ and $X$.
\end{lem}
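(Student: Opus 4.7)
The plan is to verify two claims: (a) every set output by $\mathcal{A}$ is a minimal dominating set of $G$, produced with polynomial delay, and (b) every child $D$ of $D^*$ with respect to flipping $u$ and $X$ appears in the output. For (a), first observe that by the very definition of $\Blue$, the set $D' = (D^* \setminus X) \cup \{u\}$ dominates every vertex of $V(G) \setminus \Blue$, while any \Red dominating set $R$ of $H$ dominates $\Blue$; hence $D'' = D' \cup R$ is a dominating set of $G$, and greedy removal turns it into a minimal dominating set in polynomial time. Combined with the polynomial-delay hypothesis on $\mathcal{A}'$ (and the case $\Blue = \emptyset$ handled separately), this gives polynomial delay for $\mathcal{A}$.

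The core of the proof is (b). Fix a child $D$: by the parent construction there is a set $Z$ of vertices greedily removed from $(D \setminus \{u\}) \cup X$ with $D^* = ((D \setminus \{u\}) \cup X) \setminus Z$. Since $Z \cap X = \emptyset$ and $u \notin Z$, a direct substitution gives $D' = (D^* \setminus X) \cup \{u\} = D \setminus Z$, and in particular $D = D' \cup Z$. The key step is to show that this particular $Z$ is itself a \emph{minimal} \Red dominating set of $H$. Once this is in hand, $\mathcal{A}'$ lists $Z$ among its outputs; the iteration of $\mathcal{A}$ for $R = Z$ then builds $D'' = D' \cup Z = D$, and since $D$ is already a minimal dominating set of $G$, greedy removal leaves it unchanged and $D$ is reported.

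Verifying the claim on $Z$ splits into three steps. First, $Z \subseteq \Red$: each $x \in X \subseteq C_D(u)$ satisfies $x \notin N_G[D \setminus \{x\}]$, and, since $x \in N_G(u)$ with $u \in D$, this forces $x \notin D$ and $x$ to have no neighbor in $D$ at all; consequently $Z \subseteq D \setminus \{u\}$ is disjoint from $N_G[X]$. Minimality of $D$ then provides every $v \in Z$ with a certificate $c_v \in N_G[v]$ not dominated by $D \setminus \{v\} \supseteq D \setminus Z = D'$, so $c_v \in \Blue$; the case $c_v = v$ is excluded because $v \notin N_G[X]$ while $\Blue \subseteq N_G(X) \subseteq N_G[X]$, so $v$ has a genuine Blue neighbor and $v \in N_G(\Blue) \setminus N_G[X] = \Red$. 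Second, $Z$ dominates $\Blue$ in $H$: any $b \in \Blue$ is dominated by $D$ but not by $D' = D \setminus Z$, so it has a neighbor in $Z$. Third, $Z$ is minimal: the certificate $c_v$ of $v \in Z$ has no neighbor in $D \setminus \{v\} \supseteq Z \setminus \{v\}$, and $c_v \notin Z$ (because $\Red \cap \Blue = \emptyset$), so $Z \setminus \{v\}$ does not dominate $c_v$ in $H$. The main obstacle is exactly this structural claim, namely that the greedy-removal set $Z$ built inside the parent operation of $D$ lands as a minimal \Red dominating set of the auxiliary graph $H$; the remainder of the argument is routine bookkeeping.
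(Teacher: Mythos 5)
Your handling of the core claim is correct and is the same route the paper takes: you show that the greedy-removal set $Z$ from the parent construction satisfies $Z\subseteq\Red$, dominates $\Blue$, and is minimal as a \Red dominating set of $H$ (via the certificates of its vertices, which must lie in $\Blue$ because only blue vertices are undominated by $D'=D\setminus Z$), so $\mathcal{A}'$ lists $Z$, the run with $R=Z$ rebuilds $D''=D'\cup Z=D$, and greedy removal leaves $D$ untouched since it is already minimal. In fact you verify the three properties of $Z$ more explicitly than the paper does, which is welcome.

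There is, however, a gap in your part (a). The lemma requires $\mathcal{A}$ to generate the \emph{family} $\mathcal{D}$ with polynomial delay, and under the paper's definition an enumeration algorithm lists a family without repetitions; your argument only bounds the time between consecutive raw outputs. It does not exclude that many distinct minimal \Red dominating sets $R$ of $H$ collapse, after greedy removal, to the same minimal dominating set of $G$, in which case the delay between distinct members of $\mathcal{D}$ (which is what Lemma~\ref{lem:main} needs) could be superpolynomial. The paper closes this by proving the map $R\mapsto D$ is injective: every vertex of $R$ has a certificate in $\Blue$, so greedy removal never deletes a vertex of $R$; moreover $D'\cap\Red=\emptyset$ (a vertex of $D'$ adjacent to a blue vertex would dominate it, contradicting the definition of $\Blue$), hence $D\cap\Red=R$ and distinct $R$ give distinct outputs. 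You should add this. A further small slip: from $x\in C_D(u)$ you conclude that $x$ has ``no neighbor in $D$ at all,'' which cannot be literally true since $u\in D$ is a neighbor of $x$; with the intended reading $C_D(u)=N_G(u)\setminus N_G[D\setminus\{u\}]$ the correct statement is that $u$ is the \emph{only} neighbor of $x$ in $D$, which still yields exactly what you use, namely that $Z\subseteq D\setminus\{u\}$ is disjoint from $N_G[X]$.
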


\begin{proof}
First, we show that $\mathcal{A}$ produces pairwise distinct minimal dominating sets of $G$.
Let $R$ be a \Red dominating set of $H$. The set $D'=(D\setminus X)\cup \{u\}$ dominates all vertices of $G$ except the vertices of $\Blue$. Since $R$ dominates $\Blue$, $D''$ is a dominating set of $G$ and, therefore, $D^*$ obtained from $D''$ by the greedy removal is a minimal dominating set. 
To see that all generated sets are distinct, observe that every vertex of $R$ has its certificate in $B$. Therefore, the greedy removal never deletes vertices of $R$. Since all sets $R$ generated by $\mathcal{A}'$
are pairwise distinct, the claim follows.

Let $D$ be a child of $D^*$ with respect to flipping $u$ and $X$. Then $D=((D^*\cup\{u\})\setminus X)\cup Z$. Recall that $u\notin Z$, $X \cap Z = \emptyset$ and the vertices of $Z$ are not adjacent to the vertices of $X$  by the definition of $X$ and $Z$. Hence, $Z\cap \Blue=\emptyset$.  
Because $D$ is minimal, each vertex of $Z$ has a certificate. As only the vertices of $\Blue$ are not dominated by $(D^*\cup\{u\})\setminus X$, each vertex of $Z$ has its certificate in $\Blue$.
It remains to observe that $Z$ dominates $\Blue$, to see that $Z$ is a minimal \Red dominating set of $H$.  Because $\mathcal{A}'$ generates all minimal \Red dominating sets, we have that $D\in\mathcal{D}$.
\end{proof}

Now we are ready to prove the main result of the section.

\begin{thm}\label{thm:enum-unit-square}
For a unit square graph $G$ given with its realization $f$, all minimal dominating sets of $G$ can be enumerated in incremental polynomial time.
\end{thm}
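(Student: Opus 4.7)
The plan is to invoke Lemma~\ref{lem:main} with $G$ being the input unit square graph, and to check its two hypotheses. Concretely, I would establish: (i) that for every vertex $u$ one can enumerate all independent sets $X\subseteq N_G(u)$ in polynomial time, and (ii) that there is an algorithm $\mathcal{A}$ with the properties required in the statement of Lemma~\ref{lem:main}. Together these two facts yield the desired incremental polynomial time enumeration.

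For~(i), I would exploit the geometry of the realization $f$. Any two vertices of an independent set in $N_G(u)$ are pairwise at $L_\infty$-distance at least $1$, and they all lie in the open axis-parallel square of side $2$ around $f(u)$. A partition of this $2\times 2$ square into four unit cells shows that such an independent set has at most $4$ vertices. Consequently the number of independent subsets of $N_G(u)$ is at most $O(n^4)$, and they can be enumerated in polynomial time by brute force over all subsets of $N_G(u)$ of constant size.

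For~(ii), I would build $\mathcal{A}$ exactly as suggested by the paragraph preceding Lemma~\ref{lem:corr-enum}: given $D^*$, a vertex $u$, and an independent set $X\subseteq D^*\cap N_G(u)$, form $D'=(D^*\cup\{u\})\setminus X$, compute $\Blue$ as the vertices not dominated by $D'$ and $\Red=N_G(\Blue)\setminus N_G[X]$, and build the colored graph $H=G[\Red\cup\Blue]$. As noted there, $\Red\cup\Blue\subseteq N_G^3[u]$. By Theorem~\ref{thm:loc-bound} the uncolored graph $G[N_G^3[u]]$ has LMIM-width $O(1)$ and a corresponding linear ordering can be computed in polynomial time from $f$; restricting this ordering to $V(H)$ gives a linear ordering of $H$ (viewed uncolored) whose MIM-width is at most that of $G[N_G^3[u]]$, hence constant. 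Proposition~\ref{prop:bound} then transfers the bound to the colored graph $H$. Now I would apply Theorem~\ref{thm:mainthm} with $(\sigma,\rho)=(\bN,\bN^*)$ to enumerate all $1$-minimal \Red $(\sigma,\rho)$-dominating sets of $H$ with linear delay; for this particular choice the paper notes that every $1$-minimal \Red dominating set is automatically a minimal \Red dominating set, so this is precisely the algorithm $\mathcal{A}'$ required by Lemma~\ref{lem:corr-enum}. Feeding this into the two-step procedure described just before that lemma gives $\mathcal{A}$.

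The main obstacle, and the reason the preceding sections are needed, is verifying that the auxiliary graph $H$ arising from flipping really has bounded LMIM-width so that Theorem~\ref{thm:mainthm} is applicable. The key observation making this go through is the locality statement of Theorem~\ref{thm:loc-bound} (whose proof is the geometric heart of the section) combined with the fact, visible from the definition, that LMIM-width does not increase when passing to induced subgraphs. Once these pieces are in place, the rest is routine composition: Lemma~\ref{lem:corr-enum} shows that $\mathcal{A}$ produces with polynomial delay a superset of the children of $D^*$ with respect to flipping $u$ and $X$, and Lemma~\ref{lem:main} then lifts this to an incremental polynomial time enumeration of all minimal dominating sets of $G$.
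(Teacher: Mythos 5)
Your proposal is correct and follows essentially the same route as the paper: bounding independent sets in $N_G(u)$ by $4$, building $H\subseteq G[N_G^3[u]]$ as in the discussion before Lemma~\ref{lem:corr-enum}, invoking Theorem~\ref{thm:loc-bound} (with Proposition~\ref{prop:bound}) to get a bounded-MIM-width ordering, applying Theorem~\ref{thm:mainthm} with $(\sigma,\rho)=(\bN,\bN^*)$ to obtain $\mathcal{A}'$, and composing via Lemmas~\ref{lem:corr-enum} and~\ref{lem:main}. The only difference is that you spell out details the paper leaves implicit (the geometric packing argument and the fact that restricting the ordering of $G[N_G^3[u]]$ to the induced subgraph $H$ preserves the MIM-width bound), which is fine.
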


\begin{proof}
It is straightforward to observe that for a vertex $u$ of a unit square graph $G$, any independent set $X\subseteq N_G(u)$ has at most 4 vertices. Hence, all independent sets $X\subseteq N_G(u)$ for a vertex $u$ can be enumerated in polynomial time.
By combining 
Theorems~\ref{thm:mainthm} and \ref{thm:loc-bound}, and Lemmas~\ref{lem:main} and \ref{lem:corr-enum}, we obtain the claim. 
\end{proof}

\end{document}